\documentclass[a4paper,UKenglish,cleveref, autoref]{lipics-v2019}

% ==============================================================================
% packages
% ==============================================================================

\usepackage{xargs}
\usepackage[labelformat=simple]{subcaption} % ref format: fig(subfig)

\usepackage{mdframed}  % temporal package (messages)
\usepackage{todonotes} % todo comments

% ==============================================================================
% configuration
% ==============================================================================

% graphics location
\graphicspath{{fig/}{./fig/}}

% the recommnded bibstyle
\bibliographystyle{plainurl}

% format: fig(subfig)
 % ref format: fig(subfig)

\AtBeginDocument{%
  \maketitle

  % RISE acknowledgment
  {\footnotetext{
      \begin{minipage}[l]{0.2\textwidth}
        \includegraphics[trim=10cm 6cm 10cm
        5cm,clip,scale=0.15]{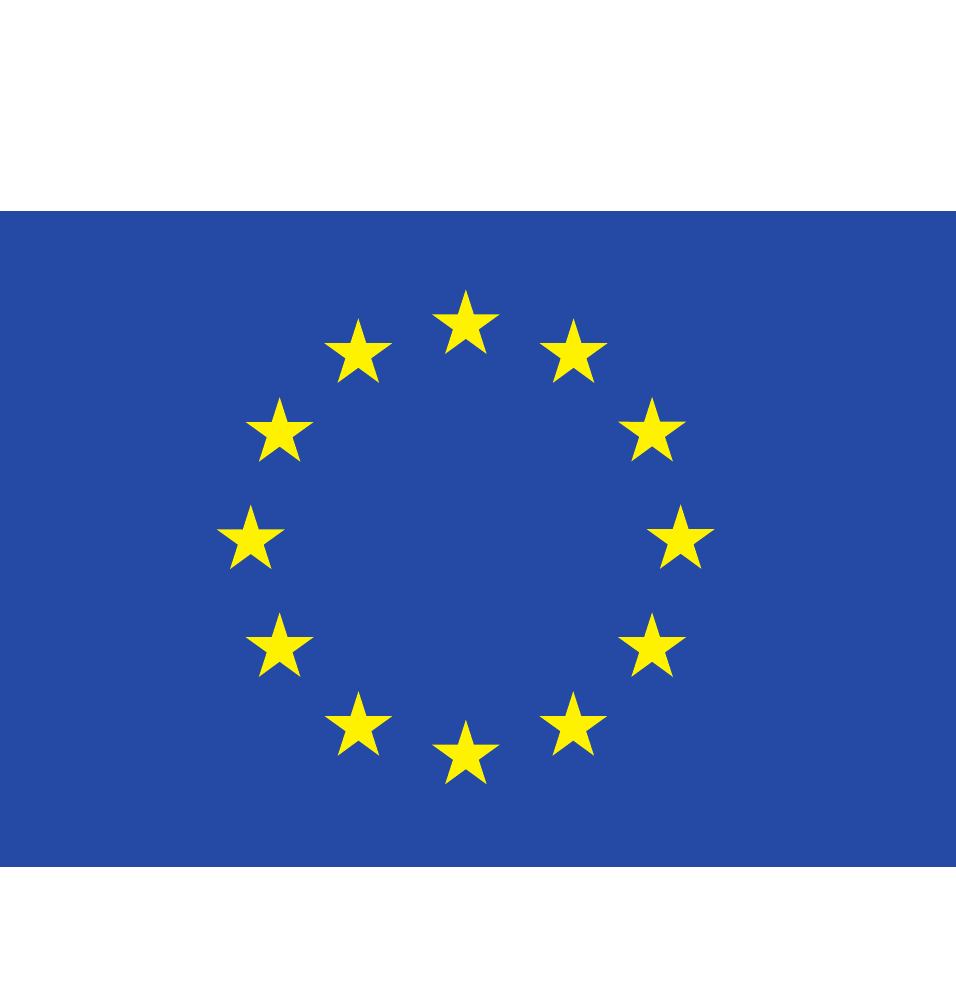}
      \end{minipage}
      \hspace{-1.3cm}
      \begin{minipage}[l][0.8cm]{0.86\textwidth}
        This project has received funding from the European Union's
        Horizon 2020 research and innovation programme under the Marie
        Sk\l{}odowska-Curie grant agreement No 734922.
      \end{minipage}%
  }}
}

\theoremstyle{plain}
\newtheorem{observation}{Observation}

\theoremstyle{definition}
\newtheorem{problem}{Problem}
\newtheorem*{problem*}{Problem}

% ==============================================================================
% commands
% ==============================================================================

% Comment related macros
%

% a/b fractions
\newcommand{\sfrac} [2] {{#1}/{#2}}

% the real set
\newcommand{\R}{\mathbb{R}^{2}}

% o-set
\renewcommand{\o}{\mathcal{O}}

% ch
\newcommand{\ch} [1][P] {\mathcal{CH}({#1})}

% rch
\newcommand{\rch} [1][P] {\mathcal{RH}({#1})}

\newcommand{\rcht} [1][P] {\mathcal{RH}_{\theta}({#1})}
\newcommand{\rchtB} [1][B] {\mathcal{RH}_{\theta}({#1})}
\newcommand{\rchtR} [1][R] {\mathcal{RH}_{\theta}({#1})}

% ob-hull
\newcommand{\ob}{\o_\beta}
\newcommand{\obh} [1][P] {\o_\beta \mathcal{H}({#1})}

% o-hull
\newcommand{\oh} [1][P] {\o\mathcal{H}({#1})}

% ==============================================================================
% title and metadata
% ==============================================================================

\title{Separating bichromatic point sets in the plane by restricted orientation convex hulls\footnote{A preliminary version of this paper was presented at the XVIII Spanish Meeting on Computational Geometry (EGC2019)~\cite{alegria_2019}.}}

% optional, please use if title is longer than one line
\titlerunning{Separating bichromatic point sets by restricted
  orientation convex hulls.}

\author{Carlos Alegr\'{i}a}%
{Dipartimento di Ingegneria, Universit\`{a} Roma Tre}%
{carlos.alegria@uniroma3.it}%
{https://orcid.org/0000-0001-5512-5298}%
{Research supported by MIUR Proj. ``AHeAD'' n\textsuperscript{o}
  20174LF3T8.}

\author{David Orden}%
{Departamento de F\'{\i}sica y Matem\'aticas, Universidad de Alcal\'a,
  Spain}%
{david.orden@uah.es}%
{https://orcid.org/0000-0001-5403-8467}%
{Research supported by Project PID2019-104129GB-I00 / AEI / 10.13039/501100011033 of the Spanish Ministry of Science and Innovation.}

\author{Carlos Seara}%
{Departament de Matem\`{a}tiques, Universitat Polit\`{e}cnica de
  Catalunya, Spain}%
{carlos.seara@upc.edu}%
{https://orcid.org/0000-0002-0095-1725}%
{Research supported by Project PID2019-104129GB-I00 / AEI / 10.13039/501100011033 of the Spanish Ministry of Science and Innovation.}
\author{Jorge Urrutia}%
{Instituto de Matem\'{a}ticas, Universidad Nacional Aut\'{o}noma de
  M\'{e}xico}%
{urrutia@matem.unam.mx}%
{https://orcid.org/0000-0002-4158-5979}%
{Research supported in part by SEP-CONACYT 80268, PAPPIIT IN102117
  Programa de Apoyo a la Investigaci\'on e Innovaci\'on Tecnol\'ogica
  UNAM.}

\authorrunning{C. Alegr\'{i}a et al.}
\Copyright{C. Alegr\'{i}a et al.}

\ccsdesc[500]{Theory of computation~Computational geometry}

\keywords{Restricted orientation convex hulls, Bichromatic separability}

% \relatedversion
% {
%  Preliminary results of this work were presented at the XVIII Spanish Meeting on Computational Geometry %(EGC'19), July 1-3, 2019.
% }

\nolinenumbers %uncomment to disable line numbering

\hideLIPIcs  %uncomment to remove references to LIPIcs series (logo, DOI, ...), e.g. when preparing a pre-final version to be uploaded to arXiv or another public repository

%Editor-only macros:: begin (do not touch as author)%%%%%%%%%%%%%%%%%%%%%%%%%%%%%%%%%%
\EventEditors{John Q. Open and Joan R. Access}
\EventNoEds{2}
\EventLongTitle{42nd Conference on Very Important Topics (CVIT 2016)}
\EventShortTitle{CVIT 2016}
\EventAcronym{CVIT}
\EventYear{2016}
\EventDate{December 24--27, 2016}
\EventLocation{Little Whinging, United Kingdom}
\EventLogo{}
\SeriesVolume{42}
\ArticleNo{23}
%%%%%%%%%%%%%%%%%%%%%%%%%%%%%%%%%%%%%%%%%%%%%%%%%%%%%%

\begin{document}

\begin{abstract}
  We explore the separability of point sets in the plane by a \emph{restricted-orientation convex hull}, which is an orientation-dependent, possibly disconnected, and non-convex enclosing shape that generalizes the convex hull.
  Let $R$ and $B$ be two disjoint sets of red and blue points in the plane, and $\o$ be a set of $k\geq 2$ lines passing through the origin.
  We study the problem of computing the set of orientations of the lines of $\o$ for which the $\o$-convex hull of $R$ contains no points of $B$.

  For $k=2$ orthogonal lines we have the \emph{rectilinear convex hull}.
  In optimal $O(n\log n)$ time and $O(n)$ space, $n = \vert R \vert + \vert B \vert$, we compute the set of rotation angles such that, after simultaneously rotating the lines of $\o$ around the origin in the same direction, the rectilinear convex hull of $R$ contains no points of $B$.
  We generalize this result to the case where $\o$ is formed by $k \geq 2$ lines with arbitrary orientations.
  In the counter-clockwise circular order of the lines of $\o$, let $\alpha_i$ be the angle required to clockwise rotate the $i$th line so it
  coincides with its successor.
  We solve the problem in this case in $O(\sfrac{1}{\Theta}\cdot N \log N)$ time and $O(\sfrac{1}{\Theta}\cdot N)$ space, where $\Theta = \min \{ \alpha_1,\ldots,\alpha_k \}$ and $N=\max\{k,\vert R \vert + \vert B \vert \}$.
  We finally consider the case in which $\o$ is formed by $k=2$ lines, one of the lines is fixed, and the second line rotates by an angle that goes from $0$ to $\pi$.
  We show that this last case can also be solved in optimal $O(n\log n)$ time and $O(n)$ space, where $n = \vert R \vert + \vert B \vert$.
\end{abstract}

\section{Introduction}\label{sec:intro}

A classic topic in computational geometry is designing efficient algorithms to separate sets of red and blue points.
Several separability criteria have been considered in the literature, as well as separators of different complexities.
Well-known constant-complexity separators include a line or a hyperplane~\cite{aronov_2012,houle_1993,hurtado_2005,megiddo_1983}, a wedge or a double-wedge~\cite{abello_1998,Agarwal_2006,hurtado_2004,hurtado-2001,seara_2002},
a circle~\cite{barba_2012,arkin_2006,boissonnat_2001,megiddo_1986}, and one or two boxes~\cite{acharyya_2019,cortes_2009,moslehi_2016,van_kreveld_2011}.
Typical separators of linear complexity include different types of polygonal chains (e.g. monotone or with alternating constant turn)~\cite{hurtado_2004,pelaez_2011}, different types of enclosing shapes (e.g. a polygon or a non-traditional convex hull)~\cite{alegria_2018,edelsbrunner_1988}, and sets of geometric objects of the same type, such as a hyperplanes~\cite{nakayama_1998} and triangles~\cite{moslehi_2017}.
These choices of separators have been used not only on points, but also
on segments, circles, simple polygons, etc.

Separability problems are closely related to clustering applications, where separating/discriminating is a necessary task.
Consider for example a damaged region modeled by a set of points that needs to be separated from the rest.
In this context, we want to extract the region with minimum area bounded by an enclosing shape that is easy to cut and compute; see~\cite{alegria_2018,alegria_2020,bae_2009,biswas_2012,fink_2004,franek_2008,van_kreveld_2009}
for references on these types of shapes.

In this paper we extend the previous work on separability of two-colored point sets in the Euclidean plane.
Let $P$ be a finite set of points.
The convex hull of $P$, that we denote with $\ch$,
is the closed region obtained by removing from the plane all the open halfplanes which are empty of points of $P$.
We explore the separability by orientation-dependent, possibly disconnected, and non-convex enclosing shapes that generalize this definition by using open wedges instead of open halfplanes.
We first study the \emph{rectilinear convex hull}.
The rectilinear convex hull of $P$, that we denote with $\rch$, 
is the closed region obtained by removing from the plane all the open axis-aligned wedges of aperture angle~$\frac{\pi}{2}$,
which are empty of points of~$P$ (see Section~\ref{sec:rch} for a formal definition).
Observe in \cref{fig:rch} that $\rch$ might be a simply connected set, yielding an intuitive and appealing structure.
However, in other cases $\rch$ can have several connected components, some of which might be single points of $P$.

\begin{figure}[ht]
  \centering%

  \subcaptionbox
  {\label{fig:rch:1}
    The rectilinear convex hull is formed by a single connected component.
  }
  [.4\linewidth][c]{\includegraphics[page=1,scale=0.9]{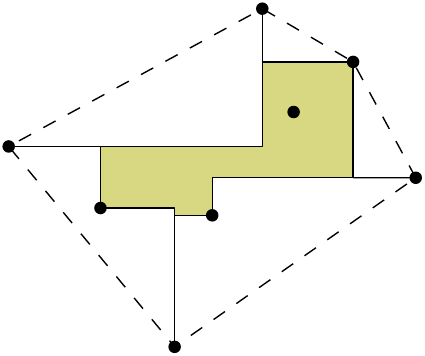}}%
  \qquad%
  \subcaptionbox
  {\label{fig:rch:2}
    The rectilinear convex hull is formed by four connected components, two of which are points of $P$.
  }
  [.4\linewidth][c]{\includegraphics[page=2,scale=0.9]{rch}}%

  \caption
  {
    The rectilinear convex hull of a finite point set $P$.
    The (standard) convex hull of $P$ is shown in dashed lines.
  }
  \label{fig:rch}
\end{figure}

The rectilinear convex hull introduces two important differences with respect to the convex hull.
On one hand we have that $\rch\subset \ch$~\cite[Theorem~4.7]{preparata_1985}, a property that provides more flexibility to better classify a subset of points.
On the other hand we have that $\rch$ is orientation-dependent, which introduces the orientation of the empty wedges as a search space for several optimization criteria; e.g., minimum area or boundary points.
To illustrate these differences, consider two disjoint sets $R$ and $B$ of red and blue points in the plane. Using the standard convex hull, the relative positions of~$R$ and~$B$ may lead to situations as in Figure~\ref{fig:bichromatic_inclusion_ch}.

\begin{figure}[ht]
  \centering
  \subcaptionbox
  {\label{fig:bichromatic_inclusion_ch:1}Convex hull inclusion.}
  [.25\linewidth][c]
  {\includegraphics[page=1,scale=0.9]{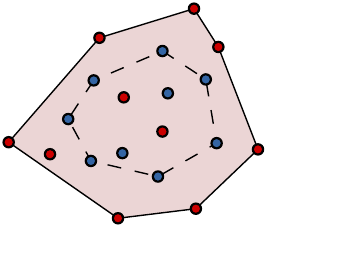}}
  \quad
  \subcaptionbox
  {\label{fig:bichromatic_inclusion_ch:2}Partial containment.}
  [.25\linewidth][c]
  {\includegraphics[page=2,scale=0.9]{bichromatic_inclusion_ch}}
  \quad
  \subcaptionbox
  {\label{fig:bichromatic_inclusion_ch:3}The boundary of $\ch[R]$ separates $R$ from $B$.}
  [.26\linewidth][c]
  {\includegraphics[page=3,scale=0.9]{bichromatic_inclusion_ch}}

  \caption
  {
    Relative positions of $B$ and $\ch[R]$.
    The hull $\ch[B]$ is shown in dashed lines.
  }
  \label{fig:bichromatic_inclusion_ch}
\end{figure}

Using instead the rectilinear convex hull with an arbitrary orientation, we can achieve further goals such as completely separating $R$ and $B$, or minimizing the number of \emph{misclassified points}; i.e., points of one color inside the hull of the other color. See Figure~\ref{fig:bichromatic_inclusion_rch}.

\begin{figure}[ht]
  \centering

  \subcaptionbox
  {\label{fig:bichromatic_inclusion_rcht:1}Rectilinear convex hull inclusion.}
  [.25\linewidth][c]
  {\includegraphics[page=1,scale=0.9]{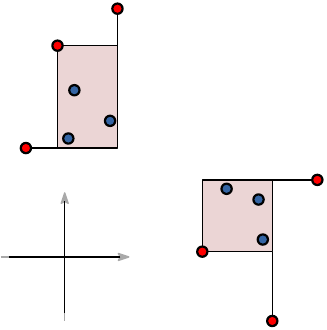}}
  \qquad
  \subcaptionbox{\label{fig:bichromatic_inclusion_rcht:2}Partial containment.}
  [.25\linewidth][c]
  {\includegraphics[page=2,scale=0.9]{bichromatic_inclusion_rcht}}
  \qquad
  \subcaptionbox{\label{fig:bichromatic_inclusion_rcht:3}The boundary of $\rch[R]$ separates $R$ from $B$.}
  [.26\linewidth][c]
  {\includegraphics[page=3,scale=0.9]{bichromatic_inclusion_rcht}}
  
  \caption
  {
    Relative position of $B$ and $\rch[R]$.
    The orientation of the coordinate axis is shown in the bottom left corner.
  }
  \label{fig:bichromatic_inclusion_rch}
\end{figure}

The main contribution of this paper is a time-optimal algorithm to compute a rectilinear convex hull with arbitrary orientation that is \emph{monochromatic}, i.e., that has no misclassified points.
We also provide similar results for generalizations of the rectilinear convex hull that stem from a variation of convexity known as \emph{restricted orientation convexity}~\cite{fink_2004,rawlins_thesis_1987} or \emph{$\o$-convexity}\footnote{In the literature, $\o$-convexity is also known as \emph{$D$-convexity}~\cite{schuirer_1991}, \emph{directional convexity}~\cite{franek_2008}, and \emph{set-theoretical $D$-convexity}~\cite{franek_2009}.}.
As we show, despite the separability problem seems harder in the context of $\o$-convexity than for standard convexity, under certain assumptions both cases can be solved within the same time and space complexities.

\subsection{Background and related work}

Restricted-orientation convexity in the Euclidean plane is a generalization of orthogonal convexity, and at the same time a restriction of standard convexity.
The \emph{orientation} of a line is the smallest of the two possible angles it makes with the $X^+$ positive semiaxis.
A \emph{set of orientations} $\o$ is a set of lines with different orientations passing through some fixed point.
A region of the plane is called \emph{$\o$-convex} if its intersection with any line parallel to a line of~$\o$ is either empty, a point, or a line segment.
Since this notion of convexity was defined in the early eighties, several results of topological and combinatorial flavors can be found in the literature, as well as computational problems that are usually adaptations of well-known problems related to standard convexity~\cite{fink_2004,martinez_2021}.

The \emph{$\o$-convex hull} of a finite point set is an $\o$-convex superset of such point set that generalizes both the standard and the rectilinear convex hull; refer to \cref{subsec:och} for a formal definition.
The $\o$-convex hull is relevant for research fields that require restricted-orientation enclosing shapes~\cite{daymude_2018}.
In the particular case where $\o$ is formed by two orthogonal lines, $\o$-convexity is known as \emph{orthogonal convexity}\footnote{In the literature, orthogonal convexity is also known as \emph{ortho-convexity}~\cite{rawlins_1988137} or \emph{x-y convexity}~\cite{nicholl_1983}.} and the $\o$-convex hull is known as the rectilinear convex hull.
The rectilinear convex hull has been extensively studied in the context of fields as diverse as polyhedra reconstruction~\cite{biedl_2011}, facility location~\cite{son_2014}, and geometric optimization~\cite{diaz_2011}; as well as in practical research fields such as pattern recognition~\cite{karmakar_2015},
shape analysis~\cite{biswas_2012}, and VLSI circuit layout
design~\cite{uchoa_2002}.

As far as we are aware, there are no previous results on the problem of separating bichromatic point sets by an $\o$-convex hull while the orientations of the lines of~$\o$ are changing.
Nevertheless, if the lines are fixed, then the problem can be trivially solved by combining the algorithm from Alegr\'ia et al.~\cite{alegria_2020} to compute the $\o$-convex hull of a finite set of $n$ points in $O(n \log n)$ time, and a straightforward extension of the so-called \emph{staircase structure} used to store the vertices of the rectilinear convex hull~\cite[Section 4.1.3]{preparata_1985}.
With this approach we obtain an $O(n \log n)$ time and $O(n)$ space algorithm to decide if there is a monochromatic $\o$-convex hull for any fixed orientations of the lines of $\o$.

The problem of separating a bichromatic point set using an $\o$-convex separator has already been studied for the particular case of orthogonal convexity.
In this case the problem consists in computing, if any, an orthogonally-convex
geometric separator for $R$ and $B$ among all possible orientations of the coordinate axes.
The most popular separator is the axis-aligned rectangle.
For $n = \vert R \vert + \vert B \vert$, an arbitrarily-oriented separating rectangle can be found in $O(n\log n)$ time and $O(n)$ space~\cite{van_kreveld_2009}.
Several variations have also been solved including separability by two disjoint rectangles~\cite{moslehi_2016}, bichromatic sets of imprecise points~\cite{sheikhi_2017}, maximizing the area of the separating rectangle~\cite{acharyya_2019,armaselu_2017}, and an extension where the separator is a box in three dimensions~\cite{hurtado_2005}.
Along with the axis-aligned rectangle, two more ortho-convex separators can be found in the literature.
In~\cite{sheikhi_2015} the authors use as separator an axis-aligned $L$-shaped region and solve the problem in $O(n^2)$ time.
In~\cite{pelaez_2011} the authors use as separator an alternating orthogonal polygonal chain, and also solve the problem in $O(n^2)$ time.

Our separability problem can also be considered as an instance of a general class of problems which consist in computing the orientations where an orientation-dependent geometric object satisfies some optimization criteria.
Our problem can then be stated as the problem of computing the orientations of the lines of $\o$ for which the $\o$-convex hull of $R$ has the minimum number of misclassified points.
If such a number is different from zero, then the given point sets cannot be separated by the particular $\o$-convex hull.
In this context, the $\o$-convex hull is called a \emph{weak separator} for $R$ and $B$.
The concept of weak separability was introduced by Houle~\cite{houle_1989,houle_1993}.
Separability results in this direction have been explored using $\o$-convex separators such as hyperplanes, strips, and rectangles~\cite{aronov_2012,cortes_2009,houle_1993,mangasarian_1994}.

Besides geometric separability, other similar types of problems can also be found in the literature.
Given a set $P$ of $n$ points in the plane, in~\cite{alegria_2020} the authors compute the angle by which the lines of $\o$ have to be simultaneously rotated around the origin for the $\o$-convex hull of $P$ to have minimum area.
A similar problem is solved in~\cite{alegria_2018}, where the authors compute the values of $\beta$ for which the $\ob$-convex hull of $P$ has maximum area, among other optimization criteria (refer to \cref{subsec:obh} for a formal definition of the $\ob$-convex hull).
More recently, in~\cite{bae_2019} the authors solved the problem of computing the set of empty squares with arbitrary orientations among a set of points.
From this result they derive an algorithm to compute the square annulus with arbitrary orientation of optimal width or area that encloses $P$, among other algorithmic results.

\subsection{Results}

In this paper we contribute with the following results:
\begin{itemize}
\item

  An optimal $O(n \log n)$ time and $O(n)$ space algorithm to compute a monochromatic rectilinear convex hull with arbitrary orientation, where $n = \vert R \vert + \vert B \vert$.
  
\item

  An algorithm to compute a monochromatic $\o$-convex hull with arbitrary orientation for a set $\o$ of $k \geq 2$ lines.
  In the counter-clockwise circular order of the lines of $\o$, let $\alpha_i$ be the angle required to clockwise rotate the $i$th line around the origin so it coincides with its successor.
  The algorithm runs in $O(\sfrac{1}{\Theta} \cdot N \log N)$ time and $O(\sfrac{1}{\Theta}\cdot N)$ space, where $\Theta = \min \{ \alpha_1,\ldots,\alpha_k \}$ and $N=\max\{ k, \vert R \vert + \vert B \vert \}$.
  
\item

  An optimal $O(n \log n)$ time and $O(n)$ space algorithm to compute the values of $\beta$ for which there is a monochromatic $\ob$-convex hull.
  
\item
  
  In all the cases, if there is no orientation of separability, the algorithms can be easily adapted to compute the hull that minimizes the number of misclassified points.

\end{itemize}

\subsection{Adopted conventions}

Throughout the rest of the paper, we denote with $R$ and $B$ two disjoint sets of red and blue points in the plane and denote $n = \vert R \vert + \vert B \vert$.
For the sake of simplicity, we assume that
the set $R \cup B$ contains no three points on a line.
Regarding the set of orientations, we assume for the sake of simplicity that all the lines of $\o$ have different orientations and pass through the origin.
We also assume that $\o$ contains a finite number of lines, and denote $k = \vert \o \vert$.
We remark that sets of orientations with an infinite number of lines have been considered in the literature~\cite{fink_2004,rawlins_thesis_1987}.
Finally, in our algorithms we adopt the real RAM model of computation~\cite{preparata_1985}, which is customary in computational geometry and allows us to perform standard arithmetic and trigonometric operations in constant time.

\subsection{Outline of the paper}

In \Cref{sec:rch} we solve the separability problem using a rectilinear convex hull with arbitrary orientation.
In \Cref{sec:generalizations} we solve the separability problem using an  $\ob$-convex hull, and an $\o$-convex hull with arbitrary orientation where the set $\o$ contains $k \geq 2$ lines.
Finally, we dedicate \Cref{sec:lower_bounds} to prove lower bounds.

% to the case where $\o$ is formed by $k\geq 2$ lines with arbitrary orientations and all are simultaneously rotated around the origin in the same direction, and to the case where $\o$ is formed by $k=2$ lines, and one of the lines is fixed while the second one rotates around the origin.
% In the counter-clockwise circular order of the lines of $\o$, let $\alpha_i$ be the angle required to clockwise rotate the $i$th line around the origin so it coincides with its successor.
% We solve the former case in $O(\sfrac{1}{\Theta} \cdot N \log N)$ time and $O(\sfrac{1}{\Theta}\cdot N)$ space, where $\Theta = \min \{ \alpha_1,\ldots,\alpha_k \}$ and $N=\max\{ k, n \}$.
% We solve the latter case in $O(n \log n)$ time and $O(n)$ space.
% Finally, we dedicate \Cref{sec:lower_bounds} to prove lower bounds.
% We assume the lines of $\o$ to be fixed, and give an $\Omega(n\log n)$ time bound for the problem of computing the subset of blue points contained in the rectilinear convex hull of the red point set.

\section{The rectilinear convex hull}\label{sec:rch}

In this section we solve the following problem.

\begin{problem}
  \label{problem:rch}
  Given a set of orientations $\o$ formed by $k=2$ orthogonal lines, compute the set of rotation angles for which the lines of $\o$ have to be simultaneously rotated around the origin in the counterclockwise direction, so the rectilinear convex hull of $R$ contains no points of $B$.
\end{problem}

We start with a formal definition of the rectilinear convex hull.
For the sake of completeness, we also briefly describe the properties of the rectilinear convex hull that are relevant to solve \cref{problem:rch}.
More details on these and other properties can be found in~\cite{fink_2004,ottmann_1984}.

Let $\rho_1$ and $\rho_2$ be two rays leaving a point $x\in\R$ such that, after rotating $\rho_1$ around $x$ by an angle of $\theta\in [0, 2\pi)$, we obtain $\rho_2$.
We refer to the two open regions in the set $\R \setminus (\rho_1 \, \cup \, \rho_2)$ as \emph{wedges}.
We say that both wedges have vertex $x$ and sizes $\theta$ and $2\pi -\theta$, respectively.
Throughout this section assume that the orientation set $\o$ is formed by two orthogonal lines.
A \emph{quadrant} is a wedge of size $\frac{\pi}{2}$ whose rays are parallel to the lines of $\o$.
Let $P$ denote a finite set of points in the plane.
We say a region of the plane is \emph{free of points of $P$}, or \emph{$P$-free} for short, if there are no points of $P$ in its interior.
The \emph{rectilinear convex hull} of $P$, denoted with $\rch$, is the set
\[
  \rch = \mathbb{R}^{2} \setminus \bigcup_{q\in\mathcal{Q}}q,
\]
where $\mathcal{Q}$ denotes the set of all $P$-free quadrants of the plane. See \Cref{fig:rcht}.

\begin{figure}[ht]
  \centering%

  \subcaptionbox
  {\label{fig:rcht:1}
    $\rch$ is formed by three connected components, one of which is a single point of $P$.
  }
  [.3\linewidth][c]{\includegraphics[page=1,scale=0.9]{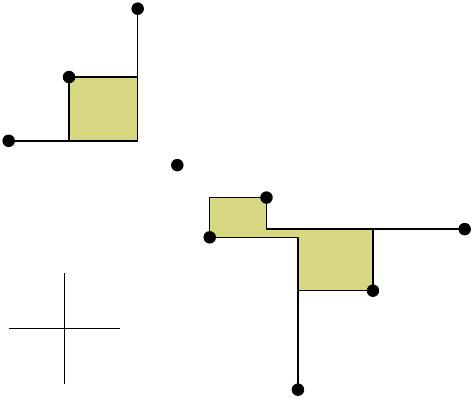}}%
  \qquad
  \subcaptionbox
  {\label{fig:rcht:2}
    $\rch$ is formed by four connected components.
  }
  [.3\linewidth][c]{\includegraphics[page=2,scale=0.9]{rcht}}%

  \caption
  {
    A finite point set $P$ and $\rch$ for two different rotation angles of the lines of $\o$.
  }
  \label{fig:rcht}
\end{figure}

Note that $\rch$ is not convex if at least one edge of the standard convex hull of $P$ is not parallel to a line of $\o$.
Moreover, $\rch$ may be disconnected.
Each connected component is either a single point of $P$, or a closed orthogonal polygon whose edges are parallel to a line of $\o$.
The rectilinear convex hull has also at most four ``degenerate edges'', which are orthogonal polygonal chains connecting either two extremal vertices, or a connected component to an extremal vertex.
Of special relevance is the property we call \emph{orientation dependency}: except for some particular cases, like rotating the orientations by $\frac{\pi}{2}$, the $\mathcal{RH}(P)$ at different orientations of the lines of $\o$ are non-congruent to each other.

Let $\o_\theta$ denote the set of lines obtained after simultaneously rotating the lines of $\o$ around the origin in the counter-clockwise direction by an angle of $\theta$.
We denote with $\rcht$ the rectilinear convex hull of $P$ computed with respect to $\o_\theta$.
We solve \cref{problem:rch} by describing an algorithm to compute the (possibly empty) set of angular intervals of~$\theta$ for which $\rcht[R]$ is $B$-free. Note that we are considering strict containment, so a blue point lying on the boundary of $\rcht[R]$ is not contained in $\rcht[R]$. Our algorithm runs in $O(n\log n)$ time and $O(n)$ space.
These are the same complexities required to compute the rectilinear convex hull of a set of $n$ points for a fixed orientation of the lines of $\o$~\cite{ottmann_1984}.

\subsection{Maximal wedges and maximal arcs}
\label{subsec:maximal_wedges_and_arcs}

Before describing our algorithm, we need some auxiliary results.
We start with the following proposition, which derives directly from the definition of the rectilinear convex hull.

\begin{proposition}\label{prop:rch_inclusion}
  A point $x \in \R$ is contained in $\rcht$ if, and only if, every quadrant with vertex on $x$ contains at least one point of $P$.
\end{proposition}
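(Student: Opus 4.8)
The statement to prove is essentially an unfolding of the definition of $\rcht$. Recall that $\rch = \mathbb{R}^2 \setminus \bigcup_{q\in\mathcal{Q}} q$ where $\mathcal{Q}$ is the set of all $P$-free quadrants, and a quadrant is a wedge of size $\pi/2$ with rays parallel to the lines of $\o$ (here $\o_\theta$). So $x \in \rcht$ iff $x$ is not in any $P$-free quadrant, i.e., iff no $P$-free quadrant contains $x$.

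The plan is a straightforward double-implication argument, passing through the contrapositive. First I would establish the forward direction by contraposition: suppose some quadrant $q$ with vertex on $x$ is $P$-free; I must produce a $P$-free quadrant whose \emph{interior} contains $x$ (so that $x \notin \rcht$). If $x$ lies in the interior of $q$ we are done. The only issue is that $q$ is open and has its vertex on $x$, so $x \in \partial q$ rather than $x \in \mathrm{int}(q)$; to fix this I translate $q$ slightly in the direction of its angular bisector, obtaining a quadrant $q'$ with $x \in \mathrm{int}(q')$ and $\mathrm{int}(q') \subset q$, so $q'$ is still $P$-free. Hence $x \in q' \subseteq \bigcup_{q\in\mathcal{Q}} q$, so $x \notin \rcht$.

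For the converse, again by contraposition: suppose $x \notin \rcht$, so $x$ lies in some $P$-free quadrant $q_0$, in fact in its interior. Then for a small enough translate we can slide $q_0$ so that its vertex moves to $x$ while the quadrant only shrinks (stays inside $q_0$), remaining $P$-free; this exhibits a quadrant with vertex $x$ containing no point of $P$. Therefore not every quadrant with vertex on $x$ contains a point of $P$, which is the negation of the right-hand side. Combining the two contrapositives gives the equivalence.

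The main thing to be careful about — the only real ``obstacle'' — is the open/closed boundary bookkeeping: quadrants are open regions, $\rch$ is closed, and we are using strict containment for the blue points, so one has to move vertices on and off boundaries by arbitrarily small translations and check that $P$-freeness (an open condition on the interior) is preserved. None of this is deep; it is just the standard compactness/openness juggling. I expect the proof to be four or five lines once these translation lemmas are spelled out, and the authors likely phrase it even more tersely by appealing directly to the definition of $\rch$.
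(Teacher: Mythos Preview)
The paper does not prove this proposition; it simply states that it ``derives directly from the definition of the rectilinear convex hull.'' Your converse direction is fine: if $x$ lies in some $P$-free quadrant $q_0$, translating $q_0$ so its vertex sits at $x$ yields a strictly smaller quadrant $q_x \subset q_0$, still $P$-free.

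Your forward direction, however, contains a logical slip. You claim to translate the quadrant $q$ (vertex at $x$) to obtain $q'$ with both $x \in \mathrm{int}(q')$ and $\mathrm{int}(q') \subset q$. That is impossible: $x \notin q$ (the quadrant is open and $x$ is its vertex), so $x \in q' \subset q$ would force $x \in q$. Any translate $q'$ that contains $x$ in its interior must \emph{enlarge} $q$, and then $P$-freeness of $q'$ is not automatic. You can rescue the argument via finiteness of $P$ provided no point of $P$ lies on the boundary rays of $q$; but if points of $P$ sit on both rays (take $P=\{(1,0),(0,1),(-1,-1)\}$ and $x=(0,0)$) then no translate works, and $x$ actually belongs to the closed set $\rcht$ even though the open quadrant at $x$ is $P$-free.

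The resolution is that the paper reads ``contained in $\rcht$'' as \emph{strict} (interior) containment; see their remark just after the definition of $\rcht$ in \S2. Under that reading the forward direction needs no translation at all: if the open quadrant $q$ at $x$ is $P$-free, then $x \in \bar q \subset \overline{\bigcup_{\mathcal{Q}} q}$, hence $x \notin \mathrm{int}(\rcht)$. This is presumably why the authors regard the statement as immediate.
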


Let $w_x$ be a $P$-free wedge with vertex at a point $x\in\R$.
We say that $w_x$ is \emph{maximal}, if no other $P$-free wedge with vertex on $x$ intersects $w_x$.
Assume that $w_x$ is maximal.
Let $w_o$ be the wedge resulting from translating $w_x$ so that its vertex lies on the origin.
The \emph{maximal arc} of $x$ induced by $w_x$ is the circular arc that results from the intersection of $w_o$ and $\mathbb{S}^1$ (the unit circle centered at the origin).
Note that, since wedges (and hence, quadrants) are open regions, \cref{prop:rch_inclusion} excludes points on the boundary of $\rcht$, and the endpoints of a maximal arc do not belong to the maximal arc itself.
See \Cref{fig:maximal_arc}.

\begin{figure}[ht]
  \centering%

  \subcaptionbox
  {\label{fig:maximal_arc:1}
    A $P$-free maximal wedge $w_x$ with vertex on $x$.
    Note that $w_x$ has a point of $p$ lying on each of its bounding rays.
  }
  [.4\linewidth][c]{\includegraphics{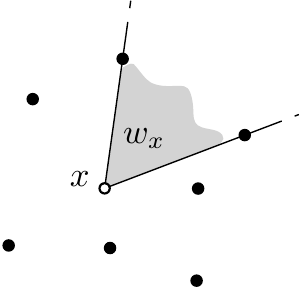}}%
  \qquad
  \subcaptionbox
  {
    \label{fig:maximal_arc:2}
    The wedge $w_o$ resulting from translating $w_x$ so that its vertex lies on the origin.
    The maximal arc of $x$ induced by $w_x$ is represented by the thick circular arc, and $\mathbb{S}^1$ by the dashed circle.
  }
  [.4\linewidth][c]{\includegraphics{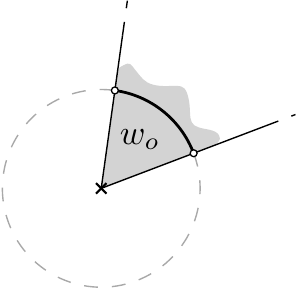}}%

  \caption{A maximal arc of a point $x\in\R$.}
  \label{fig:maximal_arc}
\end{figure}

A maximal arc is \emph{feasible} if it is induced by a maximal wedge with size at least $\frac{\pi}{2}$.
Hereafter, we consider $\o_\theta$ to be not only a set of two orthogonal lines, but also the set of four rays in which the orthogonal lines are split by the origin.

\begin{lemma}\label{lem:rch_inclusion_arcs}
  For any fixed value of $\theta$, a point $x \in \R$ is contained in $\rcht$ if, and only if, every feasible maximal arc of $x$ is intersected by a single ray of $\o_\theta$.
\end{lemma}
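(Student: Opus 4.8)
The plan is to use \cref{prop:rch_inclusion} to turn the membership $x\in\rcht$ into a statement about the cyclic list of directions from $x$ to the points of $P$, and then to finish by a short count of how many of the four rays of $\o_\theta$ can intersect one feasible maximal arc. I would fix $\theta$ and $x$ once and for all, and (an assumption already needed for \cref{prop:rch_inclusion} to hold verbatim) work with $x$ in general position: no point of $P$ lies on a line through $x$ parallel to a line of $\o_\theta$, equivalently none of the directions from $x$ to a point of $P$ equals one of the four ray directions $\psi_0,\psi_1,\psi_2,\psi_3$ of $\o_\theta$, which are equally spaced at angular distance $\frac{\pi}{2}$.

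First I would record the dictionary. Let $D\subset\mathbb{S}^1$ be the set of directions from $x$ to the points of $P$. After translating its vertex to the origin, a $P$-free wedge at $x$ becomes an open arc of $\mathbb{S}^1$ disjoint from $D$; it is maximal exactly when that arc is a \emph{gap} of $D$, i.e.\ a maximal open sub-arc of $\mathbb{S}^1\setminus D$, and then the induced maximal arc is the gap itself. Hence the feasible maximal arcs of $x$ are precisely the gaps of $D$ of length at least $\frac{\pi}{2}$. On the other side, by \cref{prop:rch_inclusion} we have $x\in\rcht$ iff each of the four quadrants at $x$ contains a point of $P$; the quadrant arcs are the open quarter-circles $(\psi_i,\psi_{i+1})$, so $x\notin\rcht$ iff some quarter-circle $(\psi_i,\psi_{i+1})$ is disjoint from $D$.

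Now I would prove both directions by contraposition. If $x\notin\rcht$, pick $(\psi_i,\psi_{i+1})$ disjoint from $D$; being a connected open arc disjoint from $D$ it lies inside a single gap $g=(a,b)$, which is a feasible maximal arc (its length is at least $\frac{\pi}{2}$), and since $a,b\in D$ while $\psi_i,\psi_{i+1}\notin D$ we get $a<\psi_i<\psi_{i+1}<b$, so the rays $\psi_i$ and $\psi_{i+1}$ both intersect $g$; thus $g$ is not intersected by a single ray. Conversely, suppose some feasible maximal arc $g=(a,b)$ is not intersected by exactly one ray. It is intersected by at least one ray: an open arc of length at least $\frac{\pi}{2}$ whose endpoints are not ray directions necessarily contains one of the four equally-spaced directions in its interior. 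Hence it is intersected by at least two rays; and since $g$ is a connected arc having two of the $\psi_j$ in its interior, it has two cyclically consecutive ones $\psi_i,\psi_{i+1}$ in its interior — if the two it had were opposite, $g$ would have length exceeding $\pi$ and would then also contain one of their neighbours. Then $(\psi_i,\psi_{i+1})\subset g$ is disjoint from $D$, so the $i$-th quadrant at $x$ is $P$-free and $x\notin\rcht$ by \cref{prop:rch_inclusion}.

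I expect the work to be bookkeeping rather than ideas. The delicate points are: justifying the identification of feasible maximal arcs with gaps of $D$ of length at least $\frac{\pi}{2}$ directly from the definitions of \emph{maximal} and \emph{feasible}, respecting the open-endpoint conventions; and making the chain ``at least one ray $\Rightarrow$ at least two $\Rightarrow$ two consecutive rays in the interior of $g$'' airtight. The first link of that chain is exactly where the general-position hypothesis on $x$ is used: without it a gap of length exactly $\frac{\pi}{2}$ could coincide with the open quarter-circle between two consecutive rays, with both its endpoints in $D$, and then be intersected by no ray at all, so the equivalence would fail. Everything else is a line or two of angle arithmetic.
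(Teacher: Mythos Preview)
Your argument is correct and follows the same route as the paper: both reduce to \cref{prop:rch_inclusion} and then count how many of the four ray directions can lie inside a given feasible maximal arc (the paper phrases this as semi-axes contained in the maximal wedge, you as ray directions in a gap of the direction set $D$). Your explicit general-position caveat is well placed and is in fact needed: the paper omits it, and its assertion in the $(\Longleftarrow)$ direction that a maximal wedge of size greater than $\tfrac{\pi}{2}$ containing a quadrant must contain at least two semi-axes fails exactly when a bounding ray of the wedge coincides with a semi-axis---precisely the degenerate configuration you single out.
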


\begin{proof}
  We show that every quadrant with vertex on $x$ contains at least one point of $P$ if, and only if, every feasible maximal arc of $x$ is intersected by a single ray of $\o_\theta$.
  The lemma follows from this fact and \Cref{prop:rch_inclusion}.
  In the following, we assume without loss of generality that $\theta=0$ and $x$ lies on the   origin, so the lines of $\o_\theta$ coincide with the coordinate axes and every quadrant with vertex on $x$ is bounded by two coordinate semi-axes.
  
  \vspace{0.5em}%
  \noindent%
  $(\Longrightarrow)$ Using Proposition~\ref{prop:rch_inclusion}, assume that every quadrant with vertex on $x$ contains at least one point of $P$.
  We show that every feasible maximal arc of $x$ is intersected by a single ray of~$\o_\theta$.
  Let $w$ be a maximal wedge with vertex at $x$ and size at least $\frac{\pi}{2}$, and let $a$ be the feasible maximal arc of~$x$ induced by $w$.
  Since the size of $w$ is at least $\frac{\pi}{2}$, then $w$ contains at least one coordinate semi-axis.
  On the other hand, $w$ cannot contain two coordinate semi-axis, since otherwise $w$ would
  contain a $P$-free quadrant.
  This would be a contradiction, since we assumed that every quadrant with vertex on $x$ contains at least one point of $P$.
  Hence $w$ contains exactly one coordinate semi-axis, and thus, $a$ is intersected by a single ray of $\o_\theta$.
  
  \vspace{0.5em}%
  \noindent%
  $(\Longleftarrow)$ Assume that every feasible maximal arc of $x$ is intersected by a single ray of $\o_\theta$.
  We show that every quadrant with vertex on $x$ contains at least one point of $P$, which is enough by Proposition~\ref{prop:rch_inclusion}.
  For the sake of contradiction, suppose there is a $P$-free quadrant $q$ with vertex on $x$.
  Then there is a maximal wedge $w$ with vertex on $x$ that contains $q$.
  Since wedges are open regions, if the size of $w$ is equal to $\frac{\pi}{2}$ then $w$ contains no coordinate semi-axis, and thus, it induces a feasible maximal arc intersected by no ray of $\o_\theta$.
  On the other hand, if the size of $w$ is greater than $\frac{\pi}{2}$, then $w$ contains at least two coordinate semi-axes, and thus, it induces a feasible maximal arc of $x$ intersected by at least two rays of $\o_\theta$.
  Either case is a contradiction, since we assumed that every feasible maximal arc of $x$ is intersected by a single ray of $\o_\theta$.
\end{proof}

An illustration of \cref{lem:rch_inclusion_arcs} is shown in \cref{fig:inclusion_arcs}.
In the figure we can see a set $P$ of four points, $\rcht$, a point $x\in\R$, the feasible maximal arcs of $x$, and the lines of $\o_\theta$.
The maximal arcs are drawn with thick circular arcs.
Instead of drawing the arcs on a single circle representing $\mathbb{S}^1$, we draw them separately on concentric circles for the sake of clarity.
In \cref{fig:inclusion_arcs:1} the point $x$ is not contained in $\rcht$; hence, there is at least one feasible maximal arc of $x$ that is not intersected by a single ray of $\o_\theta$.
Note that the maximal wedge $w$ induces a feasible maximal arc $a$ that is intersected by two rays of $\o_\theta$.
In \cref{fig:inclusion_arcs:2} the point $x$ is contained in $\rcht$; hence, all the feasible maximal arcs of $x$ are intersected by a single ray of $\o_\theta$.

\begin{figure}[ht]
  \centering

  \subcaptionbox
  {\label{fig:inclusion_arcs:1}
    If either zero or at least two rays of $\o_\theta$ intersect a feasible maximal arc of $x$, then $x$ is not contained in $\rcht$.
  }
  [0.4\linewidth][c]{\includegraphics[page=2,scale=1]{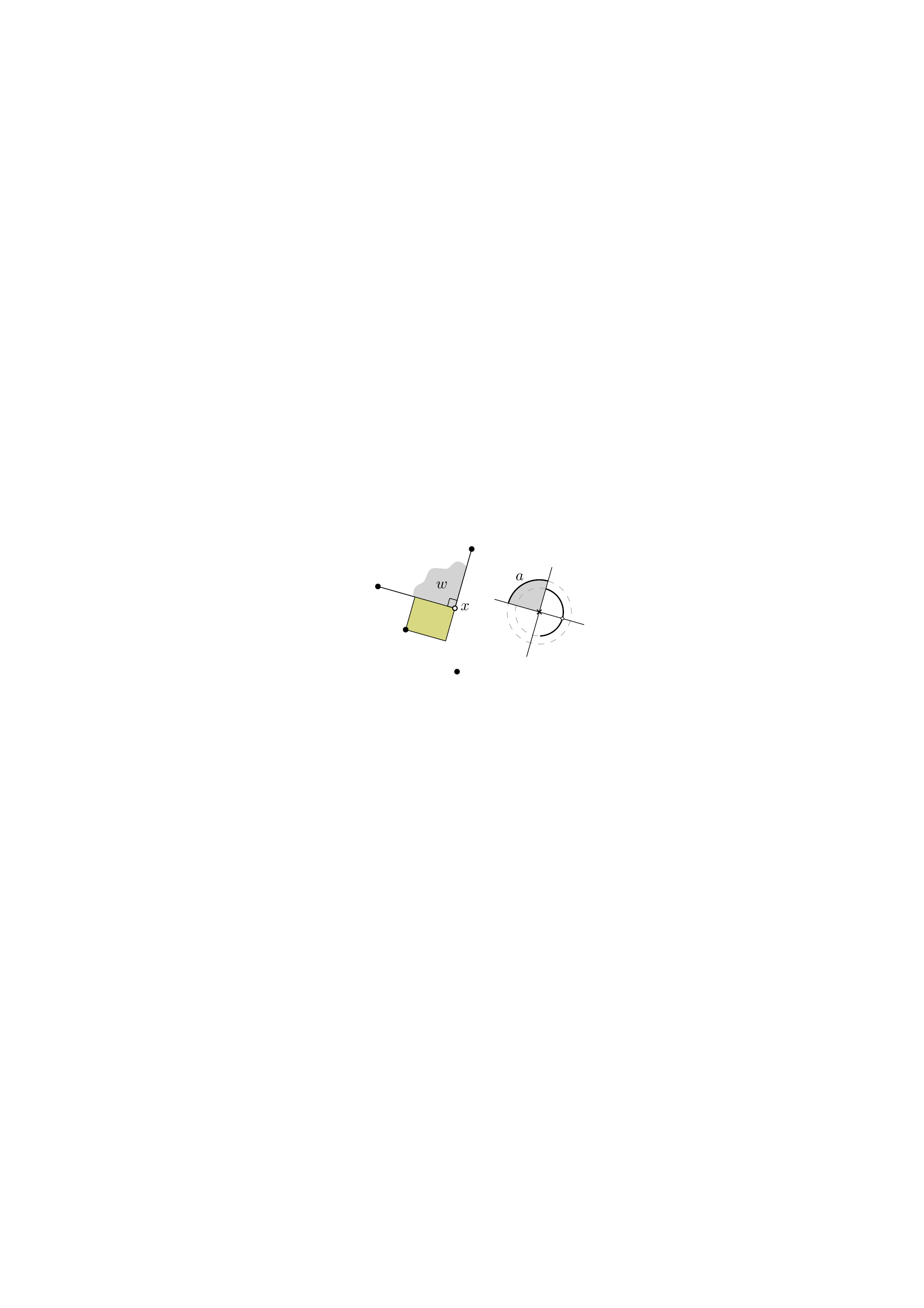}}
  \qquad
  \subcaptionbox
  {\label{fig:inclusion_arcs:2}
    If all the feasible maximal arcs of $x$ are intersected by a single ray of $\o_\theta$, then $x$ is contained in $\rcht$.
  }
  [0.4\linewidth][c]{\includegraphics[page=3,scale=1]{inclusion_arcs}}

  \caption{Containment of a point $x\in\R$ in $\rcht$.}
  \label{fig:inclusion_arcs}
\end{figure}

The adaptation of \Cref{lem:rch_inclusion_arcs} to a bichromatic setting is straightforward.
A \emph{blue maximal wedge} is an $R$-free maximal wedge with vertex on a blue point.
A \emph{blue maximal arc} is a maximal arc induced by a blue maximal wedge.
A blue maximal arc is \emph{feasible} if it is induced by a blue maximal wedge with size
at least $\frac{\pi}{2}$.

\begin{lemma}\label{lem:rch_bichromatic_inclusion_arcs}
  A blue point $b \in B$ is contained in $\rcht[R]$ if, and only if, every blue maximal arc of $b$ that is feasible is intersected by a single ray of $\o_\theta$.
\end{lemma}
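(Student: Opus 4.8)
The plan is to derive \cref{lem:rch_bichromatic_inclusion_arcs} as a direct consequence of \cref{lem:rch_inclusion_arcs} by specializing the point set $P$ to $R$ and the candidate point $x$ to a blue point $b\in B$. First I would invoke \cref{lem:rch_inclusion_arcs} with $P=R$: a blue point $b$ lies in $\rcht[R]$ if and only if every feasible maximal arc of $b$ (with respect to $R$) is intersected by a single ray of $\o_\theta$. The only thing that then needs checking is that the ``feasible maximal arcs of $b$ with respect to $R$'' are exactly the ``feasible blue maximal arcs of $b$'' as defined just before the statement. This is essentially a matter of unfolding definitions: a blue maximal wedge is by definition an $R$-free maximal wedge with vertex on a blue point, so for the fixed blue point $b$ these are precisely the $R$-free maximal wedges with vertex $b$; the induced arcs are the blue maximal arcs; and the feasibility condition (size at least $\frac{\pi}{2}$) is literally the same in both definitions.

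The key steps, in order, would be: (1) state that we apply \cref{lem:rch_inclusion_arcs} with the finite point set taken to be $R$ in place of the generic $P$ and with the test point taken to be $b$; (2) observe that in this instantiation the maximal wedges appearing in \cref{lem:rch_inclusion_arcs} are exactly the $R$-free maximal wedges with vertex $b$, which are precisely the blue maximal wedges with vertex $b$; (3) conclude that the feasible maximal arcs of $b$ are precisely the feasible blue maximal arcs of $b$; and (4) combine these identifications with the equivalence given by \cref{lem:rch_inclusion_arcs} to obtain the claimed biconditional. I would also note, as in the remark following \cref{prop:rch_inclusion}, that the containment is strict, so a blue point on the boundary of $\rcht[R]$ is not contained in it — but this is already baked into \cref{prop:rch_inclusion} and hence into \cref{lem:rch_inclusion_arcs}, so nothing extra is needed.

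There is no real obstacle here: the lemma is a bichromatic restatement of the monochromatic characterization, and the proof is a short translation argument. The only mild subtlety worth spelling out is that \cref{lem:rch_inclusion_arcs} is stated for an arbitrary $x\in\R$, including points of $P$ itself, so applying it with $x=b$ and $P=R$ is legitimate even though $b\notin R$; indeed the hypothesis of \cref{lem:rch_inclusion_arcs} places no constraint relating $x$ to $P$. Thus the argument is: ``Apply \cref{lem:rch_inclusion_arcs} with $P:=R$ and $x:=b$, and note that the feasible maximal arcs of $b$ with respect to $R$ are exactly the feasible blue maximal arcs of $b$ by definition.'' I would write this out in two or three sentences rather than as a formal multi-step proof.
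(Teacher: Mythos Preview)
Your proposal is correct and matches the paper's approach exactly: the paper does not even give a proof, stating only that ``the adaptation of \cref{lem:rch_inclusion_arcs} to a bichromatic setting is straightforward'' after introducing the bichromatic definitions, which is precisely the specialization $P:=R$, $x:=b$ that you spell out.
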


Let $\widehat{D}$ be a direction in $\mathbb{S}^1$ and let $w$ be a $P$-free maximal wedge with vertex on a point $x \in \R$.
We say that $w$ is \emph{constrained to} $\widehat{D}$ if $w$ contains the ray leaving $x$ with direction~$\widehat{D}$.
We compute the set of blue maximal arcs that are feasible by means of a procedure to compute the set of blue maximal wedges constrained to a given direction.
This procedure is an adaptation for bichromatic point sets of the \emph{restricted unoriented maximum} approach from Avis et~al.~\cite{avis_1998}.
Given a set $P$ of $n$ points in the plane and an angle $\Theta \geq \sfrac{\pi}{2}$, the authors compute, in $O(n \log n)$ time and $O(n)$ space, the set of $P$-free wedges with size at least $\Theta$ and vertex on a point of $P$.

The adapted procedure is as follows.
Let $\widehat{D}$ denote the direction given as input.
Without loss of generality, assume that $\widehat{D}$ is equal to the $Y^+$ semiaxis.
We first sort the points of the set $R \cup B$ in a direction orthogonal to $\widehat{D}$ (along the $X$ axis in our assumption).
We then perform two sweeps on the sorted set of points.
In the first sweep we traverse the points from left to right.
A red point is processed using an on-line algorithm to construct the convex hull of the red visited points, one point at a time.
To process a blue point $b$, we compute the $R$-free wedge with vertex on $b$ that is bounded by a ray leaving $b$ with direction $\widehat{D}$, and the tangent from $b$ to the red convex hull.
In the second sweep we traverse the sorted set of points from right to left and process points in a symmetric way.
Let $w_l(b)$ and $w_r(b)$ denote, respectively, the $R$-free wedges obtained after processing a blue point $b$ in the sweeps from left-to-right and from right-to-left.
After performing both sweeps, we report $w_l(b) \cup w_r(b)$ as a blue maximal wedge constrained to $\widehat{D}$, for all $b \in B$.
See \Cref{fig:restricted_maximal_wedge}.

\begin{figure}[ht]
  \centering
  \includegraphics{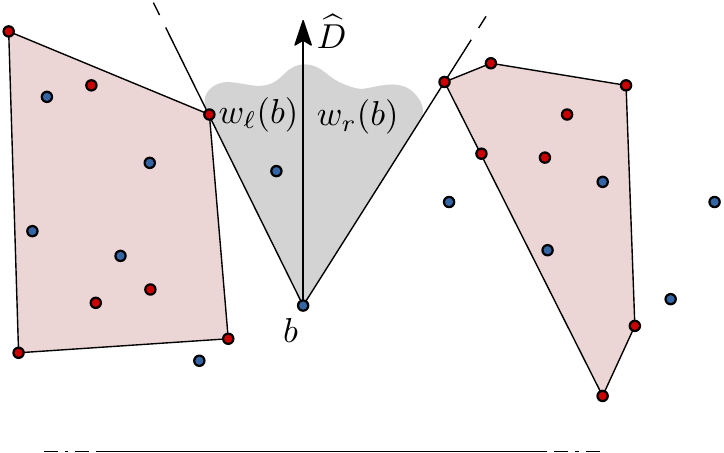}
  \caption{The $R$-free maximal wedge with vertex on $b$ constrained to $\widehat{D}$.}
  \label{fig:restricted_maximal_wedge}
\end{figure}

In the procedure described above, we first sort the points in the direction orthogonal to~$\widehat{D}$ in $O(n \log n)$ time.
Using standard techniques~\cite{preparata_1985}, during each sweep we process a point in $O(\log \vert R \vert) = O(\log n)$ time: If a red point, we are updating the convex hull of a point set by inserting a new point.
If a blue point, we are computing the tangent from a point to a convex polygon described by the sorted list of its vertices.
Since each blue point is the vertex of a single $R$-free maximal wedge constrained to~$\widehat{D}$, the whole procedure takes $O(n \log n)$ time and $O(n)$ space.
We obtain the following lemma.

\begin{lemma}\label{lem:restricted_maximal_wedges}
  Given a direction $\widehat{D}$ in $\mathbb{S}^1$ and two disjoint sets $R$ and $B$ of red and blue points in the plane, the set of $\vert B \vert$ blue maximal wedges constrained to $\widehat{D}$ can be computed in $O(n \log n)$ time and $O(n)$ space, where $n = \vert R \vert + \vert B \vert$.
\end{lemma}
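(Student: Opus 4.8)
The statement to prove is Lemma~\ref{lem:restricted_maximal_wedges}: given a direction $\widehat{D}$ and disjoint red/blue sets $R$, $B$, the $\vert B\vert$ blue maximal wedges constrained to $\widehat{D}$ can be computed in $O(n\log n)$ time and $O(n)$ space. The whole scheme has already been laid out in the paragraph preceding the lemma; the proof is essentially a matter of verifying that the described two-sweep procedure is correct and that each of its pieces fits the claimed budget. So the plan is: (1) fix the normalization $\widehat{D}=Y^{+}$ and argue that the $R$-free maximal wedge at a blue point $b$ constrained to $\widehat{D}$ decomposes cleanly into a ``left part'' and a ``right part''; (2) show that the left part $w_l(b)$ produced by the left-to-right sweep is exactly the $R$-free wedge bounded below by the upward ray from $b$ and on the left by the tangent line from $b$ to the convex hull of the red points lying to the left of $b$, and symmetrically for $w_r(b)$; (3) conclude that $w_l(b)\cup w_r(b)$ is the desired maximal wedge; (4) bound the running time.

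\textbf{Correctness.} After rotating so that $\widehat{D}$ is the positive $Y$-axis, a wedge $w$ with apex $b$ containing the upward vertical ray is $R$-free and maximal (among $R$-free wedges containing that ray) precisely when each of its two bounding rays either goes to infinity without meeting a red point on the appropriate side, or is tangent to the red point set, with a red point on the ray. I would split the red points into those with $x$-coordinate less than that of $b$ (call them $R^{-}_b$) and those with larger $x$-coordinate ($R^{+}_b$); by the general-position assumption no red point shares $b$'s $x$-coordinate (after a symbolic perturbation, or we simply note that a red point directly below $b$ does not constrain a wedge containing the \emph{upward} ray, and one directly above is handled on whichever side we like). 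The left bounding ray of the maximal wedge, rotating counterclockwise from the upward vertical, stops as soon as it hits a point of $R^{-}_b$; the first such point encountered is a vertex of $\mathrm{conv}(R^{-}_b)$ and the stopping ray is exactly the tangent from $b$ to $\mathrm{conv}(R^{-}_b)$ that keeps $\mathrm{conv}(R^{-}_b)$ on its right. That is what the left-to-right sweep computes when it processes $b$: it maintains $\mathrm{conv}(R^{-}_b)$ incrementally and takes the tangent. Symmetrically the right-to-left sweep gives the right bounding ray via the tangent to $\mathrm{conv}(R^{+}_b)$. Hence $w_l(b)$ and $w_r(b)$ are the two ``halves'' of the maximal wedge on either side of the vertical, and their union is the unique $R$-free maximal wedge at $b$ constrained to $\widehat{D}$; since each blue point is the apex of exactly one such wedge, the procedure reports exactly $\vert B\vert$ wedges, as claimed. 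A small point worth spelling out: if $R^{-}_b$ (resp. $R^{+}_b$) is empty, the corresponding bounding ray is the horizontal ray, so the wedge is an open halfplane on that side — still a valid wedge of size $\ge\pi/2$, consistent with the construction.

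\textbf{Complexity.} The initial sort along the axis orthogonal to $\widehat{D}$ costs $O(n\log n)$ and $O(n)$ space. Each sweep touches every point once. Processing a red point is a single insertion into an incrementally maintained convex hull; using the standard online convex hull structure over a presorted point set, each insertion is $O(\log\vert R\vert)=O(\log n)$ amortized (points removed from the hull are removed once). Processing a blue point is one tangent-from-a-point query against the current convex polygon stored as a sorted vertex list, which is $O(\log\vert R\vert)=O(\log n)$ by binary search. Two sweeps therefore run in $O(n\log n)$ time, and all auxiliary structures — the sorted array, the hull, the $2\vert B\vert$ stored rays — use $O(n)$ space. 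Combining with the $O(n\log n)$ sort gives the stated bounds.

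\textbf{Main obstacle.} Nothing here is deep; the one place that needs care is the geometric bookkeeping in step~(2)–(3): making precise that ``the tangent from $b$ to the red hull visited so far'' is exactly the correct bounding ray of the \emph{maximal} wedge, including the limiting cases (empty red subset, red points collinear with $b$, the interaction between the two sweeps when deciding which side owns a red point directly above $b$), and that the two half-wedges glue to a single wedge rather than overlapping or leaving a gap. I would handle this by appealing to the monotone structure already exploited by Avis et al.~\cite{avis_1998} — our construction is their ``restricted unoriented maximum'' with the one modification that hull maintenance uses only red points while wedge apices are blue points — and by an explicit rotating-ray argument for the base cases. Everything else is a direct citation of standard presorted-online-convex-hull and point-to-polygon-tangent subroutines~\cite{preparata_1985}.
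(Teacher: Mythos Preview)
Your proposal is correct and follows essentially the same approach as the paper: the paper's proof is precisely the two-sweep procedure described in the paragraph preceding the lemma (normalize $\widehat{D}=Y^{+}$, sort orthogonally, maintain the online red convex hull in each sweep, and for each blue point take the tangent to the current red hull to obtain $w_l(b)$ and $w_r(b)$, reporting their union), together with the same $O(\log n)$-per-point complexity accounting via standard online-hull and tangent queries. Your write-up adds welcome detail on correctness and edge cases, but the method is identical; one small slip is that when $R^{-}_b=\emptyset$ the left half-wedge is the open \emph{left} halfplane bounded by the vertical through $b$ (the bounding ray is the downward vertical, not the horizontal), though this does not affect the argument.
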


And we obtain the following result.

\begin{lemma}\label{lem:feasible_maximal_arcs}
  There are $O(n)$ blue maximal arcs that are feasible. The set of
  blue maximal arcs that are feasible can be computed in $O(n \log n)$
  time and $O(n)$ space, where $n = \vert R \vert + \vert B \vert$.
\end{lemma}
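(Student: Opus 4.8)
## Proof proposal

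The plan is to reduce the computation of all feasible blue maximal arcs to a constant number of calls of the procedure underlying \Cref{lem:restricted_maximal_wedges}. Recall that a blue maximal arc is feasible precisely when it is induced by a blue maximal wedge of size at least $\frac{\pi}{2}$. The key observation is that any wedge of size at least $\frac{\pi}{2}$ must contain at least one of the four rays emanating from its vertex in the four canonical directions $\widehat{D}_0,\widehat{D}_1,\widehat{D}_2,\widehat{D}_3$ pointing along $X^+$, $Y^+$, $X^-$, $Y^-$ (more precisely, any four directions spaced $\frac{\pi}{2}$ apart). Hence every blue maximal wedge $w$ of size at least $\frac{\pi}{2}$ is constrained to at least one of $\widehat{D}_0,\ldots,\widehat{D}_3$, and therefore appears among the wedges reported by one of the four invocations of the procedure with input directions $\widehat{D}_0,\ldots,\widehat{D}_3$.

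First I would run the procedure of \Cref{lem:restricted_maximal_wedges} four times, once for each of the directions $\widehat{D}_0,\ldots,\widehat{D}_3$. Each call returns, for every blue point $b$, the (unique) $R$-free maximal wedge with vertex on $b$ constrained to that direction, in $O(n\log n)$ time and $O(n)$ space; over the four calls this is still $O(n\log n)$ time and $O(n)$ space, and it produces $O(n)$ candidate wedges in total (at most $4\vert B\vert$). Next I would discard every reported wedge whose size is strictly less than $\frac{\pi}{2}$, since these do not induce feasible arcs. For each surviving wedge I would translate it to the origin and intersect it with $\mathbb{S}^1$ to obtain the corresponding blue maximal arc, each step taking $O(1)$ time; this yields $O(n)$ blue maximal arcs that are feasible. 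Finally, since the same geometric wedge may be produced by more than one of the four calls (a wedge of size at least $\pi$, say, can contain several of the canonical rays), I would remove duplicates — e.g. by sorting the arcs by their endpoints and scanning — in $O(n\log n)$ time, so the resulting collection is exactly the set of feasible blue maximal arcs with no repetitions.

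To argue correctness I need the two inclusions. On one hand, every wedge reported by the procedure is, by construction, an $R$-free maximal wedge with vertex on a blue point, so after the size filter every arc produced is a feasible blue maximal arc; hence the output is a subset of the feasible blue maximal arcs. On the other hand, let $a$ be any feasible blue maximal arc, induced by a blue maximal wedge $w$ of size at least $\frac{\pi}{2}$ with vertex on some $b\in B$. Then $w$ contains the ray leaving $b$ in direction $\widehat{D}_i$ for some $i$, so $w$ is constrained to $\widehat{D}_i$; since the procedure reports the unique $R$-free maximal wedge with vertex on $b$ constrained to $\widehat{D}_i$, and a maximal wedge constrained to a direction is uniquely determined (two $R$-free wedges with the same vertex containing the same ray cannot both be maximal unless they coincide), that reported wedge is exactly $w$. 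Thus $a$ appears in the output, giving the reverse inclusion and the bound of $O(n)$ feasible blue maximal arcs.

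The main obstacle I anticipate is the uniqueness/maximality bookkeeping in the reverse inclusion: one must be careful that the wedge $w_l(b)\cup w_r(b)$ returned by the two-sweep procedure really is \emph{the} maximal $R$-free wedge constrained to $\widehat{D}_i$, i.e. that the left tangent from the left-to-right sweep and the right tangent from the right-to-left sweep together give the widest $R$-free wedge at $b$ containing the $\widehat{D}_i$-ray. This is where the adaptation of the restricted unoriented maximum of Avis et al.~\cite{avis_1998} does the work, and it is essentially already justified in the discussion preceding \Cref{lem:restricted_maximal_wedges}; the remaining points — the pigeonhole covering of large wedges by four canonical directions, the $O(1)$ arc extraction, and the deduplication — are routine.
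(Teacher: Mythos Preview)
Your proposal is correct and follows essentially the same approach as the paper: four calls to the constrained-direction procedure (one per coordinate semiaxis), filter by size $\ge \frac{\pi}{2}$, then convert each surviving wedge to an arc in $O(1)$ time. The only difference is that you add an explicit deduplication pass, which the paper omits; this is harmless but not strictly necessary for the stated $O(n)$ count and $O(n\log n)$ time bounds.
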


\begin{proof}
  A maximal arc is induced by a blue maximal wedge with size at least $\frac{\pi}{2}$.
  Since a blue point is the vertex of at most four of such wedges, then each blue point has at most four blue maximal arcs that are feasible.
  Hence, there are $O(\vert B \vert) = O(n)$ arcs.

  We compute the set of blue maximal arcs that are feasible as follows.
  Note that a maximal wedge with size at least $\frac{\pi}{2}$ is constrained to one of the $X^+$, $X^-$, $Y^+$, or $Y^-$ coordinate semiaxis.
  In $O(n \log n)$ time and $O(n)$ space, we compute the set of blue maximal wedges constrained to each coordinate semiaxis, by means of the algorithm used to prove \cref{lem:restricted_maximal_wedges}.
  Then, we traverse the resulting set of $O(\vert B \vert) = O(n)$ blue maximal wedges, and keep those with size at least $\frac{\pi}{2}$.
  Finally, we transform each maximal wedge into a maximal arc in $O(1)$ time per wedge.
  Since the most expensive step is the computation of the set of blue maximal wedges, the whole procedure takes $O(n \log n)$ time and $O(n)$ space.
\end{proof}

\subsection{The algorithm}
\label{sec:rch_algorithm}

We are now ready to describe the algorithm to compute the set of angular intervals of $\theta\in(0,2\pi]$ for which $\rcht[R]$ is $B$-free.
Our strategy is to perform an angular sweep on the set of blue maximal arcs that are feasible, while we maintain the set $B_\theta$ of blue points in the interior of $\rcht[R]$.
To perform the angular sweep we increment $\theta$ from $0$ to $\pi/2$, so the four rays of $\o_\theta$ sweep all the directions of $\mathbb{S}^1$.
By \cref{lem:rch_bichromatic_inclusion_arcs}, for a particular value of $\theta$ during the sweep process, a blue point $b$ is contained in $B_\theta$ if all the blue maximal arcs of $b$ that are feasible are intersected by a single ray of $\o_\theta$.
Hence, $B_\theta$ only changes at the values of $\theta$ where a ray of $\o_\theta$ passes over an endpoint of a maximal arc.
We call these rotation angles \emph{intersection events}.
By means of a set of $\vert B \vert$ auxiliary variables, we update $B_\theta$ at each intersection event in constant time.
The algorithm is described in detail next.

\subparagraph{Step 1.  Computing the set of feasible maximal arcs.}

The first step of the algorithm is to compute the set $\mathcal{A}$ of $O(\vert B \vert) = O(n)$ blue maximal arcs that are feasible.
We compute this set by means of the procedure used to prove \cref{lem:feasible_maximal_arcs}.
Hence, this step takes $O(n \log n)$ time and $O(n)$ space.

\subparagraph*{Step 2. Computing the list of intersection events.}

The second step is to transform the set of blue maximal arcs that are feasible into a sorted circular list $\mathcal{L}$ of intersection events.
Since intersection events are given by the endpoints of maximal arcs, each maximal arc is transformed into two intersection events, hence there are $O(\vert B \vert) = O(n)$ intersection events.
Let $a$ be a blue maximal arc that is feasible, and let $p$ and $q$ be the endpoints of $a$.
We transform $a$ into a pair of intersection events by computing, in $O(1)$ time, the directions in $\mathbb{S}^1$ of the rays leaving the origin that pass through $p$ and $q$, see \cref{fig:arc_to_events}.
We can thus transform the set of blue maximal arcs that are feasible into the set of intersection events in $O(n)$ time.
We store the set of $O(n)$ intersection events in $\mathcal{L}$, sorted as the endpoints of the maximal arcs appear while traversing $\mathbb{S}^1$ in the counter-clockwise direction.
Since the most expensive task is to sort the set of intersection events, this step takes $O(n \log n)$ time and $O(n)$ space.

\begin{figure}[ht]
  \centering

  \subcaptionbox
  {\label{fig:arc_to_events:1}
    A blue maximal wedge $w$ with size at least $\frac{\pi}{2}$.
  }
  [0.3\linewidth][c]{\includegraphics[page=1]{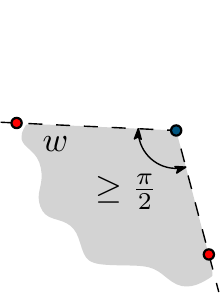}}
  \qquad
  \subcaptionbox
  {\label{fig:arc_to_events:2}
    The feasible maximal arc $a$ induced by $w$.
  }
  [0.3\linewidth][c]{\includegraphics[page=2]{intersection_events}}
  
  \caption
  {
    The endpoints $p$ and $q$ of $a$ can be transformed in $O(1)$ time into two intersection events $\theta_p$ and $\theta_q$, respectively.
  }
  \label{fig:arc_to_events}
\end{figure}

\subparagraph*{Step 3. Performing the angular sweep.}

The final step is to perform an angular sweep on the set of blue maximal arcs that are feasible.
Let $b_1,\ldots,b_{\vert B \vert}$ be the set of blue points labeled with no particular order.
Let $N_i$, $0 \leq N_i \leq 4$, denote the number of blue maximal arcs of the point $b_i$ that are feasible, and let $n_i$, $0 \leq n_i \leq N_i$, denote the number of blue maximal arcs of $b_i$ that are intersected by a single ray of $\o_\theta$.
We use an array of $\vert B \vert$ Boolean flags to represent if a blue point belongs to $B_\theta$, so
the status of a blue point can be changed in $O(1)$ time.
Following the condition from \cref{lem:rch_bichromatic_inclusion_arcs}, we set the $i$-th flag of the array to \texttt{True} if $n_i = N_i$ ($b_i$ belongs to $B_\theta$), and to \texttt{False} if $n_i < N_i$ ($b_i$ does not belong to $B_\theta$).

To process intersection events during the angular sweep we use the following auxiliary structures.
For each blue maximal arc $a$ that is feasible, we define a variable $\rho(a)$ that contains the number of rays of $\o_\theta$ currently intersecting $a$.
% reviewer #2 comment
%To predict the next intersection event  we use a min-priority queue $\mathcal{Q}$.
We use a min-priority queue $\mathcal{Q}$ to predict the next intersection event, among the events induced by all the blue maximal arcs.
Let $r_1,\ldots,r_4$ be the rays of $\o_\theta$ sorted in counter-clockwise circular order around the origin, and let $\theta_i$ be the smallest rotation angle for which $r_i$ passes over an endpoint of a maximal arc.
The queue contains the angles $\theta_1,\ldots,\theta_4$ that are less than $\frac{\pi}{2}$.
The next intersection event is thus given by the minimum element in $\mathcal{Q}$.
Since $\mathcal{Q}$ contains at most four elements, both update and query operations on $\mathcal{Q}$ take $O(1)$ time.
See \cref{fig:angular_sweep}.

\begin{figure}[ht]
  \centering

  \subcaptionbox
  {\label{fig:angular_sweep:1}
    The point $b$ has two blue feasible maximal arcs.
    These arcs induce the sorted sequence of intersection events $\{ \alpha, \beta, \gamma \}$.
  }
  [0.46\linewidth][c]{\includegraphics[page=1,scale=0.95]{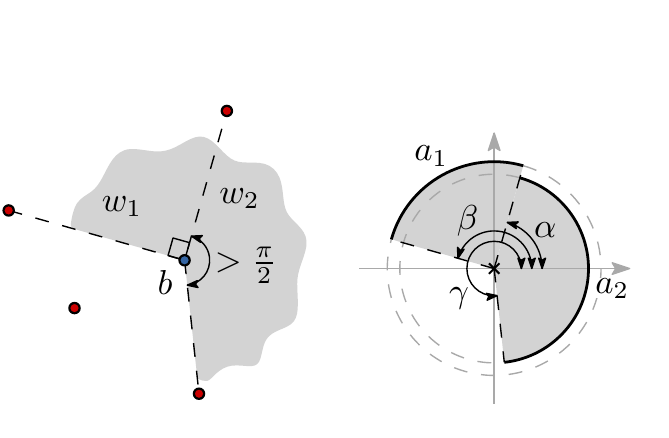}}
  \qquad
  \subcaptionbox
  {\label{fig:angular_sweep:2}
    For $\theta=0$ the point $b$ is contained in $\rcht[R]$, since all its feasible maximal arcs are intersected by a single ray of $\o_\theta$.
    Since $\theta_3 \geq \frac{\pi}{2}$, only the angles $\theta_1$, $\theta_2$, and $\theta_4$ are stored in $\mathcal{Q}$.
    The next intersection event is given by $\theta_4$.
  }
  [0.46\linewidth][c]{\includegraphics[page=2,scale=0.95]{rch_sweep}}
  \\[1em]
  \subcaptionbox
  {\label{fig:angular_sweep:3}
    A value of $\theta$ for which $b$ is not contained in $\rcht[R]$.
    Since $\theta_4 \geq \frac{\pi}{2}$, only the angles $\theta_1$, $\theta_2$, and $\theta_3$ are stored in $\mathcal{Q}$.
    The next intersection event is given by $\theta_1$.
  }
  [0.46\linewidth][c]{\includegraphics[page=3,scale=0.95]{rch_sweep}}

  \caption{Illustration of the angular sweep. For the sake of clarity, we consider a set $B$ with a single blue point $b$.}
  \label{fig:angular_sweep}
\end{figure}

We now describe how to perform the angular sweep.
First, we initialize the auxiliary data structures described above at an initial value of
$\theta$, say $\theta=0$.
Consider the four rays of $\o_\theta$ sorted in counter-clockwise circular order around the origin.
We first merge, in $O(n)$ time, the angles in $\mathcal{L}$ with the orientation angles given by the sorted set of rays of $\o_\theta$.
After merging, we can say which blue maximal arcs are intersected by each ray of $\o_\theta$, as well as the smallest rotation angle for which each ray of $\o_\theta$ passes over the endpoint of a maximal arc.
Using this information, in $O(n)$ time we compute the values of the variables $n_i$, $1 \leq i \leq \vert B \vert$, and $\rho(a)$ for all $a \in \mathcal{A}$, and initialize the set of $\vert B \vert$ Boolean flags we use to represent $B_\theta$.
We finally initialize $\mathcal{Q}$ in $O(1)$ time.
Hence, the whole initialization step takes $O(n)$ time.

We perform the angular sweep by incrementing $\theta$ from $0$ to $\sfrac{\pi}{2}$.
The next intersection event is obtained by extracting the minimum angle from $\mathcal{Q}$.
Consider an intersection event $\theta$ for which a ray $r\in\o_\theta$ is passing over the endpoint of a maximal arc $a$ of a blue point $b_i$.
We process the event as follows:
\begin{itemize}
\item
  If $r$ starts intersecting $a$, then we increase $\rho(a)$ by one.
  If, instead, $r$ stops intersecting $a$, then we decrease $\rho(a)$ by one.
\item
  If $\rho(a)$ was changed, then we update $n_i$.
  If $\rho(a)$ is equal to one, then we increase $n_i$ by one.
  If $\rho(a)$ is instead different from one, then we decrease $n_i$ by one.
\item
  If $n_i$ was changed, then we update the Boolean flags that represent $B_\theta$. If $n_i = N_i$, then we set the $i$-th flag of $B_\theta$ to \texttt{True}.
  If instead $n_i < N_i$, then we set the $i$-th flag to \texttt{False}.
\item
  Finally, we obtain from $\mathcal{L}$ the successor $\theta^{\prime}$ of $\theta$, and insert the angle $\theta^{\prime} - \theta$ into $\mathcal{Q}$.
\end{itemize}

\begin{lemma}\label{lemma:rch}
  The set $B_\theta$ can be computed and maintained while $\theta$ is increased from $0$ to $\sfrac{\pi}{2}$ in $O(n \log n)$ time and $O(n)$ space, where $n = \vert R \vert + \vert B \vert$.
\end{lemma}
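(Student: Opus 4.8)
The plan is to verify that the three steps described just before the statement — computing feasible maximal arcs, converting them into a sorted list of intersection events, and running the angular sweep — together fit in the claimed bounds, with the angular sweep analysis being the crux. I would begin by invoking \cref{lem:feasible_maximal_arcs} for Step 1: the set $\mathcal{A}$ of $O(n)$ blue feasible maximal arcs is obtained in $O(n\log n)$ time and $O(n)$ space. For Step 2, each arc contributes two endpoints, each of which is turned into an angular event (the direction of the ray from the origin through that endpoint) in $O(1)$ time; sorting the resulting $O(n)$ events into the circular list $\mathcal{L}$ costs $O(n\log n)$ time and $O(n)$ space. So after Steps 1 and 2 we are within budget, and it remains to account for Step 3.

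For the sweep itself, the key observations to spell out are: (i) the auxiliary structures — the array of $|B|$ Boolean flags for $B_\theta$, the counters $n_i$ and $N_i$ per blue point, the counter $\rho(a)$ per feasible arc, and the min-priority queue $\mathcal{Q}$ — together use $O(n)$ space, since $\sum_i N_i \le 4|B|$ and $|\mathcal{A}| = O(n)$; (ii) the initialization at $\theta = 0$ takes $O(n)$ time, by merging $\mathcal{L}$ with the four sorted ray directions of $\o_0$ and scanning once to set all $\rho(a)$, all $n_i$, and all flags, and then filling $\mathcal{Q}$ with the (at most four) next-event angles in $O(1)$; (iii) as $\theta$ runs from $0$ to $\pi/2$, each of the four rays of $\o_\theta$ sweeps a quarter-turn, so together they sweep all of $\mathbb{S}^1$ exactly once, hence each of the $O(n)$ arc endpoints is passed over exactly once, giving $O(n)$ intersection events in total; and (iv) each event touches a single arc $a$ of a single blue point $b_i$, and the cascade of updates — adjust $\rho(a)$, possibly adjust $n_i$, possibly flip the $i$-th flag, extract the minimum from $\mathcal{Q}$ and insert the successor angle from $\mathcal{L}$ — is $O(1)$, because $\mathcal{Q}$ holds at most four elements so its operations are $O(1)$ and $\mathcal{L}$ gives the successor in $O(1)$ by a precomputed pointer. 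Multiplying, Step 3 is $O(n)$ time and $O(n)$ space, and adding the cost of Steps 1 and 2 gives the overall $O(n\log n)$ time and $O(n)$ space bound.

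I would also include a brief correctness remark so the lemma is not merely a timing claim: by \cref{lem:rch_bichromatic_inclusion_arcs}, $b_i \in \rcht[R]$ exactly when every feasible maximal arc of $b_i$ is hit by exactly one ray of $\o_\theta$, i.e. when $n_i = N_i$ with $n_i$ counting arcs $a$ of $b_i$ having $\rho(a) = 1$; since the update rules maintain precisely this invariant for $\rho(a)$, $n_i$, and the flags across every event, the flag array correctly encodes $B_\theta$ at every $\theta$ between consecutive events, and since $B_\theta$ can change only at events, it is correct throughout $(0,\pi/2]$.

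The main obstacle I anticipate is not the asymptotics — those are a straightforward amortized-$O(1)$-per-event argument — but making the bookkeeping airtight: one must be careful that a ray ``passing over'' an arc endpoint is classified correctly as a start- versus stop-intersection (this is why the problem statement fixes strict containment and why arc endpoints are excluded from the arc), that the priority queue correctly predicts the very next event among all four rays rather than per-ray, and that simultaneous events (several endpoints at the same $\theta$, excluded here by the general-position assumption that no three points of $R\cup B$ are collinear, but still worth a sentence) are handled without double counting. Once those corner cases are pinned down, the lemma follows by combining the per-step bounds above.
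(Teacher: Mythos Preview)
Your proposal is correct and follows essentially the same approach as the paper: the paper's proof simply notes that Steps~1 and~2 take $O(n\log n)$ time and $O(n)$ space, and that Step~3 processes $O(n)$ events in $O(1)$ time each, hence $O(n)$ time and space overall. Your write-up is considerably more detailed---spelling out the space accounting, the initialization, the per-event cascade, and a correctness invariant---but the underlying argument is the same three-step cost summation.
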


\begin{proof}
  Steps 1 and 2 take $O(n \log n)$ time and $O(n)$ space.
  Since we have $O(n)$ intersection events and each event is processed in $O(1)$ time, the sweep process of Step 3 takes $O(n)$ time and $O(n)$ space.
  The lemma follows.
\end{proof}

By keeping track of the changes of $B_\theta$ we can construct the angular intervals for which all the flags of $B_\theta$ are \texttt{False}.
Hence, from \cref{lemma:rch} we obtain the main result of this section.

\begin{theorem}\label{thm:rch_separability}
  Given two disjoint sets $R$ and $B$ of points in the plane, the (possibly empty) set of angular intervals of $\theta \in [0,2\pi)$ for which $\rcht[R]$ is $B$-free (i.e., $\rcht[R]$ is a separator of $R$ and $B$) can be computed in $O(n\log n)$ time and $O(n)$ space, where $n = \vert R \vert + \vert B \vert$.
\end{theorem}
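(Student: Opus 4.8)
The plan is to combine \Cref{lem:rch_bichromatic_inclusion_arcs}, \Cref{lem:feasible_maximal_arcs} and \Cref{lemma:rch} into a single angular sweep and then to undo the $\sfrac{\pi}{2}$-reduction. First I would restate the goal: by \Cref{lem:rch_bichromatic_inclusion_arcs}, $\rcht[R]$ is $B$-free exactly when the set $B_\theta$ of blue points in the interior of $\rcht[R]$ is empty, so it suffices to report the angular intervals of $\theta$ on which $B_\theta=\emptyset$.

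Next I would run the three-step procedure of \Cref{sec:rch_algorithm}. The set $\mathcal{A}$ of feasible blue maximal arcs depends only on $R$ and $B$ (not on $\theta$), has size $O(n)$, and is computed in $O(n\log n)$ time and $O(n)$ space by \Cref{lem:feasible_maximal_arcs}; converting its $O(n)$ endpoints into the sorted circular list $\mathcal{L}$ of intersection events costs another $O(n\log n)$ time. By \Cref{lemma:rch}, sweeping $\theta$ from $0$ to $\sfrac{\pi}{2}$ --- during which the four rays of $\o_\theta$ together cover all of $\mathbb{S}^1$ --- maintains the Boolean array representing $B_\theta$ in $O(n)$ further time, since each of the $O(n)$ events is handled in $O(1)$. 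To turn this into the output I would keep one extra integer counting the \texttt{True} flags; each event flips at most one flag, so the counter is updated in $O(1)$, and $B_\theta=\emptyset$ iff the counter is $0$. Recording the parameter values at which the counter enters and leaves $0$ yields $O(n)$ maximal angular sub-intervals of $[0,\sfrac{\pi}{2})$ on which $\rcht[R]$ separates $R$ from $B$; the convention that a blue point on the boundary of $\rcht[R]$ is not contained in it fixes whether each such interval is open or closed at an endpoint.

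Finally I would extend the answer to $[0,2\pi)$. Because $\o$ is a pair of orthogonal lines, $\o_{\theta+\sfrac{\pi}{2}}$ and $\o_\theta$ are the same set of rays, so the planar set $\rcht[R]$ is $\sfrac{\pi}{2}$-periodic in $\theta$; the set of separating orientations in $[0,2\pi)$ is therefore the union of the intervals found above with their translates by $\sfrac{\pi}{2}$, $\pi$ and $\sfrac{3\pi}{2}$, which is formed in $O(n)$ time and $O(n)$ space. Adding up the bounds gives the claimed $O(n\log n)$ time and $O(n)$ space. The conceptual content is entirely in the earlier lemmas; the only points that need care are arguing via \Cref{lem:rch_bichromatic_inclusion_arcs} that $B_\theta$ is piecewise constant with breakpoints exactly at the intersection events --- so that the event list $\mathcal{L}$ is complete --- and handling coincident events and boundary angles consistently, which is where the main (though modest) difficulty lies.
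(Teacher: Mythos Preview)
Your proposal is correct and follows essentially the same route as the paper: the paper derives \Cref{thm:rch_separability} directly from \Cref{lemma:rch} by tracking when all Boolean flags of $B_\theta$ are \texttt{False} during the sweep over $[0,\sfrac{\pi}{2})$, which is exactly your counter argument. Your explicit use of the $\sfrac{\pi}{2}$-periodicity to pass from $[0,\sfrac{\pi}{2})$ to $[0,2\pi)$ is a detail the paper leaves implicit but is correct and harmless.
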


The algorithm we described to prove \Cref{thm:rch_separability} is time-optimal.
A proof of the $\Omega(n\log n)$-time lower bound is presented in \cref{sec:lower_bounds}.
There are a couple of additional facts worth mentioning.
First, the algorithm only computes a set of angular intervals.
To actually compute a monochromatic rectilinear convex hull we need to first choose an angle in one of these intervals, and then spend additional $O(n\log n)$ time~\cite{alegria_2020,diaz_2011,ottmann_1984}.
Second, the reported angular intervals are maximal in the sense that no two of them intersect each other.
The intervals are also open since they are bounded by intersection events and, at such events, a blue point lies on the boundary of some $R$-free quadrant.
Hence, the point lies on the boundary of the rectilinear convex hull of $R$.
Finally, since there is at most one change in~$B_\theta$ per intersection event and there are $O(n)$ intersection events, then there are $O(n)$ angular intervals of $\theta$ where $\rcht[R]$ is $B$-free.
A matching lower bound is achieved by the point set we describe next.

\subsection{Lower bound for the number of intervals of separability}
\label{subsec:rch_separability_lower_bound}

In this subsection we describe a bichromatic point set with $\Omega(n)$ angular intervals of $\theta$ for which $\rcht[R]$ is $B$-free.
The first ingredient of the construction is the fact that $\rcht$ may be disconnected.
As previously mentioned, a connected component is either a single point of $P$, an orthogonal polygonal chain, or a closed orthogonal polygon.
The polygonal chain connects two extremal points of $P$ and contains exactly two segments.
The orthogonal polygon may have at most two ``degenerate edges'' in each direction, which are horizontal or vertical segments connecting its vertices with extremal points of $P$.
The segments of the polygonal chains and the edges of the orthogonal polygons are called the \emph{edges} of $\rcht$.
Each edge is contained in a ray of some $P$-free quadrant.
%Since $\rcht$ is obtained by ``carving out'' $P$-free quadrants off the plane, we say that such a quadrant is \emph{stabbing $P$}.
% referee #1 comment
% $P$-free quadrants are said to be \emph{stabbing} $P$, since $\mathcal{RH}_{\theta}(P)$ is obtained by ``carving out'' this type of quadrants.\footnote{Here, carving means removing, in the sense that one starts with the whole plane and goes carving/removing from it the quadrants which are empty of points of $P$.}
Such a $P$-free quadrant is said to be \emph{stabbing} $P$.
Note that each of the rays of a $P$-free quadrant stabbing $P$ contains an edge of $\mathcal{RH}_{\theta}(P)$.
See \Cref{fig:rch_disconnected:1}.

Let $r_1,\ldots,r_4$ be the rays of $\o_\theta$ labeled in counter-clockwise circular order around the origin.
For the sake of simplicity, in the following we assume an index $i \in \{ 1,\ldots,4\}$ is such that $i + 4 := i$.
Let $Q_i$ denote the quadrant bounded by $r_i$ and $r_{i+1}$.
A \emph{$Q_i$-quadrant} is a translation of $Q_i$.
We say that a $Q_i$-quadrant and a $Q_{i+2}$-quadrant are \emph{opposite} to each other.
The following lemma states the conditions in which $\rcht$ is disconnected.
See \Cref{fig:rch_disconnected}.

\begin{lemma}[Alegr\'{i}a et al.~\cite{alegria_2020}, Lemma~1]
  \label{lemma:opposite_quadrants}
  Let $i$ and $j$, $i \neq j$, be two indices in $\{1,\ldots,4\}$.
  Let $q_i$ be a $Q_i$-quadrant and $q_j$ be a $Q_j$-quadrant.
  If both $q_i$ and $q_j$ are stabbing $P$ and $q_i \, \cap \, q_j \, \cap \, \ch \neq \emptyset$, then the following statements hold true:
  \begin{enumerate}[(a)]
  \item The quadrants $q_i$ and $q_j$ are opposite to each other, that is $\vert j - i \vert = 2$.
  \item For all $k\in\{1,\ldots,4\}$, $k\neq i$, $k\neq j$, every $Q_k$-quadrant $q_k$ and every $Q_{k+2}$-quadrant $q_{k+2}$ are such that $q_k \, \cap \, q_{k+2} \, \cap \, \ch = \emptyset$.
  \item $\rcht$ is disconnected.
  \end{enumerate}
\end{lemma}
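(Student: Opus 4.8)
The plan is to argue by a careful case analysis on the relative angular position of the two stabbing quadrants $q_i$ and $q_j$, using the geometry of how a $P$-free quadrant that stabs $P$ must ``see'' the convex hull $\ch$. The starting observation is that a $Q_i$-quadrant $q_i$ which is $P$-free but stabs $P$ has both of its bounding rays touching $P$ (each ray contains an edge of $\rcht$, hence an extremal point of $P$ on it); in particular $q_i$ cannot contain all of $\ch$, and the apex of $q_i$ lies inside $\ch$ or on its boundary side ``facing'' the quadrant. I would first record this as a preliminary remark, together with the complementary fact that the open quadrant opposite to $q_i$ at the same apex (the $Q_{i+2}$-quadrant sharing the vertex) must contain at least one point of $P$, since otherwise $\rcht$ would lose an edge there --- this is really just \cref{prop:rch_inclusion} applied at the apex of $q_i$.

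For part~(a): suppose $q_i$ and $q_j$ both stab $P$, their intersection meets $\ch$, but $\vert j-i\vert \neq 2$, i.e.\ the two quadrants are not opposite. Then, as wedges of aperture $\frac{\pi}{2}$ with axis directions differing by $\frac{\pi}{2}$ (adjacent quadrants) --- or $0$ (the same quadrant type, $i=j$, already excluded by hypothesis) --- the set $q_i \cap q_j$ is itself a quadrant-shaped region, and a short geometric check shows that if this region meets $\ch$ then one can enlarge either $q_i$ or $q_j$ slightly, or slide it, to obtain a $P$-free quadrant strictly containing $\ch \cap$(a neighborhood of the alleged intersection point). The cleaner route, which I would take, is: a point $z \in q_i \cap q_j \cap \ch$ lies in the interior of two $P$-free quadrants of different types $Q_i$ and $Q_j$ with $\vert i - j\vert = 1$; translate both so their apexes lie at $z$. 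Their union is then a $P$-free wedge at $z$ of aperture strictly greater than $\frac{\pi}{2}$ (it is $\frac{3\pi}{4}$ in the worst case after the translation, but in any event $>\frac{\pi}{2}$), and a wedge of aperture $>\frac{\pi}{2}$ at a point of $\ch$ contradicts $z \in \ch$ combined with the quadrant-free characterization --- more precisely, such a wedge contains a full $P$-free quadrant with apex at $z$, so $z \notin \rch$; but this is not yet a contradiction with $z \in \ch$. To force the contradiction I instead push the apex along $\ch$: since $z \in \ch$, the wedge of aperture $>\frac{\pi}{2}$ can be translated to put its apex on the boundary of $\ch$ while staying $P$-free, and there it necessarily contains one of the two coordinate semiaxes' worth of an empty quadrant on the side away from $\ch$, witnessing that $\ch$ itself is contained in that remaining half --- contradicting that both bounding rays of each of $q_i,q_j$ touch $P$. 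The bookkeeping of which semiaxis survives is the part that needs to be done by hand; I expect this to be the main obstacle, and the figure (\cref{fig:rch_disconnected}) is essentially the argument.

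For part~(b): once (a) is known, $\{i,j\}$ is one of the two opposite pairs, say $\{1,3\}$. For the other pair $\{2,4\}$, suppose for contradiction there were a $Q_2$-quadrant $q_2$ and a $Q_4$-quadrant $q_4$, both stabbing $P$, with $q_2 \cap q_4 \cap \ch \neq \emptyset$. Applying the part~(a)/(c) reasoning to the pair $\{2,4\}$ (the argument is symmetric in the pair) would make $\rcht$ disconnected ``in the perpendicular direction'' as well, and I would derive a contradiction from the structure of $\rcht$: a disconnecting pair of opposite stabbing quadrants splits the plane into two half-plane-like regions each containing part of $\ch$; having this happen simultaneously for $\{1,3\}$ and for $\{2,4\}$ would force $\ch$ into at least four separated pieces, impossible for a connected set. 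Concretely, the two opposite $P$-free quadrants of the pair $\{1,3\}$, together with the two of the pair $\{2,4\}$, would cover a neighborhood of some point of $\ch$ by four $P$-free quadrants whose union is all of $\R$ near that point, contradicting \cref{prop:rch_inclusion}; this last step is the crisp way to close it.

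For part~(c): given the disconnecting opposite pair $q_i, q_{i+2}$ from (a), their union $q_i \cup q_{i+2}$ is an open $P$-free set (each is $P$-free) whose removal from $\rcht$ we must show separates it. Since $q_i \cap q_{i+2} \cap \ch \neq \emptyset$ and $\rch \subseteq \ch$, the set $q_i \cup q_{i+2}$ meets $\ch$, hence meets the ``bounding box'' region of $\rcht$; and because $q_i$ and $q_{i+2}$ are opposite translated quadrants with overlapping interiors, $q_i \cup q_{i+2}$ contains a bi-infinite ``slab-like'' $P$-free region that crosses $\ch$ entirely from one side to the other (its complement has two components, one on each side). Each of these two components contains at least one point of $P$ --- by (a), $q_i$ and $q_{i+2}$ each have both bounding rays touching $P$, so there are extremal points of $P$ on either side of the crossing region. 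Hence $\rcht$, which contains those extremal points but is disjoint from the interior of $q_i \cup q_{i+2}$, has points in both components of $\R \setminus (q_i \cup q_{i+2})$ and is therefore disconnected. I would present (c) as the short consequence it is, with the only care being to verify that $q_i \cup q_{i+2}$ genuinely disconnects the plane into exactly two pieces, which follows because two opposite quadrants with a common interior point have union equal to the complement of two disjoint closed opposite quadrants (the ``$Q_{i+1}$'' and ``$Q_{i-1}$'' leftovers).
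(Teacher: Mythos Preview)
The paper does not prove this lemma at all: it is quoted verbatim as Lemma~1 of Alegr\'ia et~al.~\cite{alegria_2020} and used as a black box, so there is no in-paper argument to compare against. That said, let me comment on the soundness of your attempt.

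Your part~(c) is essentially correct. Once you know $q_i$ and $q_{i+2}$ are opposite $P$-free quadrants whose interiors meet, a direct computation (say with $i=1$, apex of $q_1$ at $(a_1,a_2)$ and apex of $q_3$ at $(b_1,b_2)$ with $a_1<b_1$, $a_2<b_2$) shows that $\R\setminus(q_1\cup q_3)$ is the disjoint union of the closed $Q_2$-quadrant $\{x\le a_1,\ y\ge b_2\}$ and the closed $Q_4$-quadrant $\{x\ge b_1,\ y\le a_2\}$; the stabbing hypothesis then puts a point of $P$ in each piece, so $\rcht$ is disconnected.

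Your part~(a), however, does not close. You yourself flag the obstacle, and indeed translating the two adjacent quadrants to a common apex $z\in\ch$ only yields $z\notin\rcht$, which is no contradiction. The clean way to finish is to work with the original apexes. With $j=i+1$ (say $i=1$), write $q_1=\{x>a_1,\ y>a_2\}$ and $q_2=\{x<b_1,\ y>b_2\}$; if $q_1\cap q_2\neq\emptyset$ then $a_1<b_1$, and with $M:=\max(a_2,b_2)$ the union $q_1\cup q_2$ contains the whole open halfplane $\{y>M\}$. Since both quadrants are $P$-free, every point of $P$ has $y\le M$, hence $\ch\subseteq\{y\le M\}$. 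But $q_1\cap q_2\subseteq\{y>M\}$, so $q_1\cap q_2\cap\ch=\emptyset$, contradicting the hypothesis. This is the missing step.

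Your part~(b) has a genuine gap. The ``crisp'' closing you propose --- that the four $P$-free quadrants of the two opposite pairs would cover a neighborhood of some point of $\ch$ --- tacitly assumes that the intersection region of the $\{1,3\}$ pair and that of the $\{2,4\}$ pair overlap, and nothing in the hypotheses forces this. What you do get from applying (c) twice is that $P$ lies in the disjoint union of a closed $Q_2$- and a closed $Q_4$-quadrant, and simultaneously in the disjoint union of a closed $Q_1$- and a closed $Q_3$-quadrant; but turning this into a contradiction still requires an argument (for instance, tracking where the points of $P$ on the bounding rays of the stabbing $q_2$ and $q_4$ can live, or invoking the edge-direction observation recorded as Lemma~\ref{lemma:stabbing_quadrants}). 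As written, (b) is not proved.
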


\begin{figure}[ht]
  \centering
  \subcaptionbox
  {\label{fig:rch_disconnected:1}
    A stabbing $Q_2$-quadrant and a stabbing $Q_4$-quadrant have a non empty intersection, hence they disconnect $\rcht$.
    The boundary of~$\ch$ is shown in blue.
    The quadrants are shown in light gray, their bounding rays in dashed lines, and their intersection in dark gray.
  }
  [.45\linewidth][c]{\includegraphics[page=1]{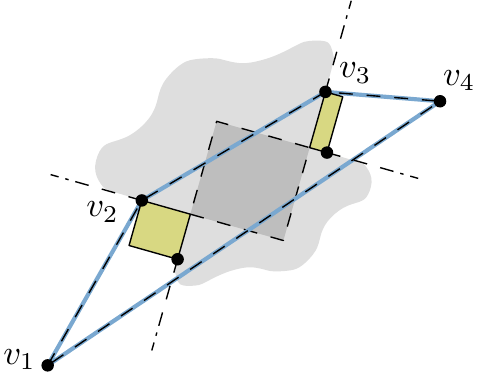}}
  \qquad
  \subcaptionbox
  {\label{fig:rch_disconnected:2}
    The set $\o_\theta$ and the directions of the edges of $\ch$; where $d_i$ is the direction of the edge $\overline{v_i v_{i+1}}$.
    All the directions lie in either $Q_1$ or~$Q_3$, hence $\rcht$ is disconnected because of the intersection of $Q_2$-quadrants and $Q_4$-quadrants.
  }
  [.45\linewidth][c]{\includegraphics[page=2]{rch_opposite_quadrants}}
  
  \caption
  {
    A set $P$ of six points for which $\rcht$ is formed by four connected components (two of which are points of $P$), and an orientation set $\o_\theta$, both for some value of $\theta$.
    Figure \subref{fig:rch_disconnected:1} illustrates \Cref{lemma:opposite_quadrants}, and Figure \subref{fig:rch_disconnected:2} illustrates \Cref{lemma:stabbing_quadrants}.
  }
  \label{fig:rch_disconnected}
\end{figure}

It is known that $\rcht$ is contained in $\ch$ regardless of the value of $\theta$~\cite[Theorem~4.7]{preparata_1985}.
Hence, a quadrant stabbing $P$ is necessarily intersecting $\ch$.
Let $u$ and $v$ be two vertices of $\ch$ such that $u$ precedes $v$ in the clockwise circular order of the vertices of $\ch$.
Let $r$ be the ray leaving $u$ passing through $v$.
The \emph{direction} of the edge $\overline{u v}$ is the translation of $r$ so that $u$ lies on the origin.
The following lemma is used in our construction to identify which of the four families of $Q_i$-quadrants can stab $P$.
Refer again to \Cref{fig:rch_disconnected}.

\begin{lemma}[Alegr\'{i}a et al.~\cite{alegria_2020}, Observation~2]
  \label{lemma:stabbing_quadrants}
  If a $Q_i$-quadrant is stabbing $P$, then there is at least one edge of $\ch$ whose direction is contained in $Q_i$.
\end{lemma}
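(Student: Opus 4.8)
The statement to prove is \cref{lemma:stabbing_quadrants}: if a $Q_i$-quadrant $q$ is stabbing $P$, then some edge of $\ch$ has its direction contained in $Q_i$. Recall that $q$ stabbing $P$ means $q$ is $P$-free but each of its two bounding rays contains an edge of $\rcht$; in particular $q$ is not one of the four ``outer'' $P$-free quadrants, so both bounding rays must actually touch $\ch$. Since $\rcht\subseteq\ch$, the $P$-free quadrant $q$ necessarily intersects $\ch$, and because it is $P$-free its interior meets the \emph{boundary} of $\ch$ in an arc, not in the interior. The plan is to look at the portion of $\partial\ch$ that lies inside $q$ (a connected polygonal chain of one or more edges of $\ch$), and argue that at least one of those edges has direction in $Q_i$.

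First I would set up coordinates so that the two bounding rays $r_i,r_{i+1}$ of the $Q_i$-quadrant $q$ are axis-parallel — say $q$ is the open first quadrant translated to its apex $x$. The $P$-freeness of $q$ together with $\rcht\subseteq\ch$ forces $x\notin\mathrm{int}(\ch)$ and forces $\partial\ch\cap q$ to be a single connected chain of consecutive hull edges $e_1,\dots,e_m$ (consecutiveness because $\ch$ is convex and $q$ is convex, so their boundaries cross at most twice). The key geometric observation is a ``sweep'' of tangent directions: as one traverses this chain in the clockwise order used to define edge directions in the excerpt, the edge directions rotate monotonically, and the chain must enter $q$ through one bounding ray and leave through the other (or through the apex). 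I would then argue that a convex chain that crosses from, say, the ray $r_i$ to the ray $r_{i+1}$ while staying inside $q$ must contain at least one edge whose direction points ``into'' $Q_i$ — i.e. whose direction vector has both coordinates of the appropriate sign. Intuitively: if every edge of the chain had direction outside the open cone $Q_i$, the chain could not simultaneously make progress along both bounding rays, so it could not connect the two sides of $q$ while remaining $P$-free and inside $\ch$.

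The cleanest way to make this rigorous is a contradiction argument at the level of supporting lines. Suppose, for contradiction, that no edge of $\ch$ has its direction in $Q_i$. Edge directions come in antipodal pairs, so this means no hull edge direction lies in $Q_i\cup Q_{i+2}$; equivalently every hull edge is ``$(Q_{i+1},Q_{i+3})$-directed''. Consider the two vertices of $\ch$ that are extreme in the two bounding directions of $q$ (the clockwise-most and counterclockwise-most hull vertices with respect to the cone $Q_i$): under the assumption, the hull arc between them that would have to pass through $q$ actually bends the wrong way, so $\ch$ lies entirely on one side, which means $q$ cannot have both of its bounding rays supporting an edge of $\rcht$ — contradicting that $q$ stabs $P$. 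I would phrase this using the monotonicity of the Gauss map (outer normals) of $\ch$: the outer normals of the chain $e_1,\dots,e_m$ sweep an arc, and $q$ being stabbing pins that arc to straddle the outward normal direction of the cone $Q_i$, which is exactly the condition that some $e_j$ has direction in $Q_i$.

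The main obstacle I anticipate is handling the degenerate and boundary cases cleanly: namely (i) when the stabbing quadrant's bounding ray contains a hull edge rather than crossing the hull transversally (so the chain ``enters'' along an edge), (ii) when the apex $x$ of $q$ lies exactly on $\partial\ch$, and (iii) the general-position caveat — the excerpt assumes no three points of $R\cup B$ are collinear, but says nothing about hull edges being parallel to lines of $\o$, so I should either invoke that as an additional simplifying assumption or argue that a perturbation does not change which $Q_i$-families can stab. Apart from these case distinctions, the argument is short; the substance is the monotonicity-of-normals observation, and everything else is bookkeeping about which half-plane $\ch$ lies in relative to the two bounding rays of $q$.
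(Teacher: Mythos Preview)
The paper does not prove \cref{lemma:stabbing_quadrants} at all: it is imported verbatim from~\cite{alegria_2020} (there stated as Observation~2) and no argument is given here. So there is nothing in the present paper to compare your attempt against.

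That said, your geometric outline is sound and is essentially the standard justification for this fact. A stabbing $Q_i$-quadrant $q$ is $P$-free and meets the interior of $\ch$, hence $q\cap\partial\ch$ is a single convex subchain whose endpoints lie on the two bounding rays of $q$. Traversing this subchain in the clockwise order used to define edge directions, the outer normal (equivalently, the edge direction) rotates monotonically; since the chain enters through one bounding ray of $q$ and exits through the other, the intermediate-value property of this monotone rotation forces some edge direction to lie in the open cone $Q_i$. Your contradiction version (``if no hull edge has direction in $Q_i$ then the chain cannot connect the two bounding rays while staying in $q$'') encodes the same thing.

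One small correction: your reading of ``stabbing'' is slightly off. In the paper, a $P$-free quadrant is called stabbing as soon as \emph{one} of its bounding rays contains an edge of $\rcht$; the paper then observes that in fact both rays do. For the proof you only need that $q$ is $P$-free and intersects the interior of $\ch$, which follows already from a single bounding ray supporting an edge of $\rcht\subset\ch$. The degenerate cases you flag (apex on $\partial\ch$, a bounding ray collinear with a hull edge) are genuine but harmless: in each of them the relevant hull edge direction lies on the \emph{boundary} of $Q_i$, and since $Q_i$ is open one should interpret the conclusion as ``direction in the closure of $Q_i$'' or appeal to the no-three-collinear assumption to rule out exact alignment. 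None of this affects how the lemma is used downstream.
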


% CA:
We are now ready to describe the construction.
The convex hull of $R$ is a rhombus whose diagonals are parallel to the coordinate axes.
Let $v_1,\ldots,v_4$ be the vertices of $\ch[R]$ labeled in clockwise circular order starting at the left-most vertex.
The vertices $v_2$ and $v_4$ lie outside the circle $C$ that has the line segment $\overline{v_1 v_3}$ as diameter; thus, the interior angles of the rhombus at $v_2$ and $v_4$ are smaller than $\frac{\pi}{2}$, as well as the orientation $\alpha$ of the line through $v_3$ and $v_4$; see \Cref{fig:rch_separability_red_points:1}.
% ORIGINAL:
% Suppose for the moment that the set $R$ is formed by the four red points shown in \Cref{fig:rch_separability_red_points:1}.
% The points lie on the vertices of a rhombus whose diagonals are parallel to the coordinate axes.
% The circle $C$ has the line segment $\overline{v_1 v_3}$ as diameter, so the interior angles of the rhombus at the points $v_2$ and $v_4$ are smaller than $\frac{\pi}{2}$.

\begin{figure}[ht]
  \centering

  \subcaptionbox
  {\label{fig:rch_separability_red_points:1}The convex hull is a rhombus.}
  [.25\linewidth][c]{\includegraphics[page=1]{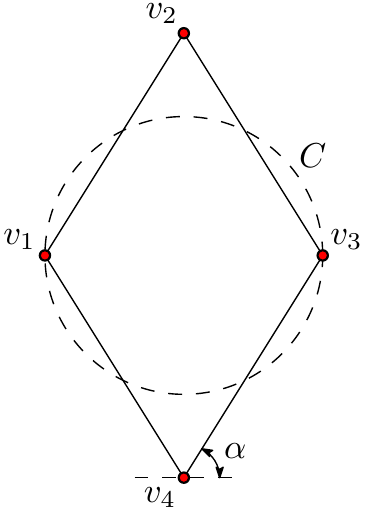}}%
  \hfill
  \subcaptionbox
  {\label{fig:rch_separability_red_points:2}
    The rectilinear convex hull has three connected components: the points $v_2$ and $v_4$, and a rectangle inscribed in the circle $C$ whose sides are parallel to the lines of $\o_\theta$.
  } [.45\linewidth][c]{\includegraphics[page=4,scale=0.8]{separability_rch_construction}\includegraphics[page=2]{separability_rch_construction}}%
  \hfill
  \subcaptionbox
  {\label{fig:rch_separability_red_points:3}
    The remaining points lie in a second rhombus contained in $C$.
  }
  [.25\linewidth][c]{\includegraphics[page=3]{separability_rch_construction}}

  \caption{A bichromatic point set with $\Omega(n)$ angular intervals of separability: The set of red points.}
  \label{fig:rch_separability_red_points}
\end{figure}

% CA:
Let $d_i$ be the direction of the edge $\overline{v_i v_{i+1}}$.
Let $l_\alpha = d_1 \cup d_3$ and $l_{\pi-\alpha} = d_2 \cup d_4$ be respectively, the lines through the origin with orientations $\alpha$ and $\pi-\alpha$ formed by the orientations of the edges of $\ch[R]$; see \Cref{fig:rch_separability_red_points:2}, left.
% ORIGINAL:
% The directions of the edges of the rhombus form two lines $l_\alpha$ and $l_{\pi-\alpha}$ through the origin with orientations $\alpha$ and $\pi-\alpha$ respectively, see \Cref{fig:rch_separability_red_points:2}, left.
Let $\ell_i$ denote the line of $\o_\theta$ that contains the rays $r_i$ and $r_{i+2}$.
While incrementing $\theta$ from $0$ to $\frac{\pi}{2}$, the lines of $\o_\theta$ counter-clockwise rotate around the origin while the lines $l_\alpha$ and $l_{\pi-\alpha}$ remain fixed.
The rotation angles that are relevant for the lower bound are those in the interval $\varphi=[\frac{\pi}{2}-\alpha, \alpha]$.
At the angle $\theta=\frac{\pi}{2}-\alpha$ the line $\ell_2$ coincides with $l_{\pi-\alpha}$.
At the angle $\theta=\alpha$ the line $\ell_1$ coincides with $l_\alpha$.
For any other rotation angle in $\varphi$, the directions $d_1$ and $d_4$ lie in $Q_1$, whereas $d_2$ and $d_3$ lie in $Q_3$.
Hence, by \Cref{lemma:stabbing_quadrants} the set $R$ is stabbed only by $Q_2$- and $Q_4$-quadrants for all $\theta\in\varphi$.
Note that $R$ is stabbed on all the edges of the rhombus, and the vertices of the stabbing quadrants lie on semicircles in the interior of the rhombus whose diameters are the edges of the rhombus.
Therefore, every point in the dashed regions lies in the intersection of a stabbing $Q_2$-quadrant and a stabbing $Q_4$-quadrant.
Since these quadrants are opposite to each other, we have by \Cref{lemma:opposite_quadrants} that $\rcht[\{v_1,\ldots,v_4\}]$ is disconnected for all $\theta\in\varphi$.
As shown in \Cref{fig:rch_separability_red_points:2}, right, $\rcht[\{v_1,\ldots,v_4\}]$ is actually formed by three connected components: the point~$v_2$, the point $v_4$, and a rectangle inscribed in $C$ whose sides are parallel to the lines of $\o_\theta$.
By intersecting all such rectangles for all the rotation angles in $\varphi$, we obtain the rhombus highlighted in \Cref{fig:rch_separability_red_points:3}.
% reviewer #3 comment
%The rest of the red points lie in this region, and are omitted for the sake of clarity.
%Note these points are contained in $\rcht[R]$ for all $\theta\in\varphi$.
Note that any red point lying in this region is contained in $\rcht[R]$ for all $\theta\in\varphi$.
Hence, we may add as many red points as desired without affecting the construction.

The set of blue points is shown in \Cref{fig:rch_separability_blue_points}.
Let $d(p,q)$ denote the Euclidean distance between two given points $p$ and $q$.
The blue points lie in the interior of $\ch[R]$, on a circle with center on the middle point of the segment $\overline{v_1 v_3}$, and radius $\sfrac{d(v_1,v_3)}{2} - \varepsilon$ for $0 < \varepsilon < \sfrac{d(v_1,v_3)}{2}$.
The points are spread so that, at every $\theta\in\varphi$, at most a single blue point is contained in $\rcht[R]$.
As shown in \Cref{fig:rch_separability_intervals:1} (see the figures from left to right), while incrementing $\theta$ from $\frac{\pi}{2}-\alpha$ to $\alpha$, two of the vertices of the rectangle inscribed in $C$ remain anchored at the red points, while the other two traverse the red circular arcs in the counter-clockwise direction.
Hence $\rcht[B]$ captures one blue point at a time, generating $\Omega(n)$ disjoint angular intervals of separability.

\begin{figure}[ht]
  \centering

  \subcaptionbox
  {\label{fig:rch_separability_blue_points}
    The points lie inside $\ch[R]$, on a circle with center on the middle point of $\overline{v_1 v_3}$ and radius $\sfrac{d(v_1,v_3)}{2} - \varepsilon$.
  }
  [.25\linewidth][c]{\includegraphics[page=1]{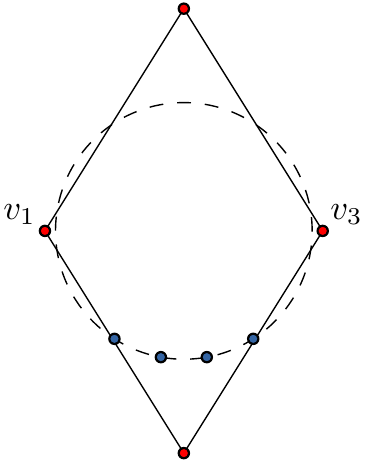}}%
  ~~
  \subcaptionbox
  {\label{fig:rch_separability_intervals:1}
     While incrementing $\theta$ from $\frac{\pi}{2}-\alpha$ to $\alpha$, $\rcht[R]$ captures one blue point at a time.
    The dashed rectangles are the intersections between the opposite quadrants that disconnect $\rcht[R]$.
  }
  [.758\linewidth][c]{%
    \includegraphics[page=3]{separability_rch_points}%
    \hspace{-1.1em}
    \includegraphics[page=4]{separability_rch_points}%
    \hspace{-1.1em}
    \includegraphics[page=5]{separability_rch_points}%
  }

  \caption{A bichromatic point set with $\Omega(n)$ angular intervals of separability: The set of blue points.}
  \label{fig:rch_separability_intervals}
\end{figure}

\subsection{Inclusion detection}
\label{subsec:rch_inclusion}

An elementary property of the standard convex hull is the following: $\ch[B]$ is in the interior of $\ch[R]$ if all the blue points are in the interior of $\ch[R]$.
This property translates to the rectilinear convex hull, regardless of the slopes of the lines of $\o_\theta$, and the connected components of both $\rcht[B]$ and $\rcht[R]$.

\begin{lemma}\label{lem:rch_containment}
If all the points of $B$ are in the interior of $\rcht[R]$, then $\rcht[B]$ is in the interior of $\rcht[R]$.
\end{lemma}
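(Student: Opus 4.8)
The plan is to argue purely through the quadrant characterization of \cref{prop:rch_inclusion}, so that the statement becomes insensitive both to the slopes of $\o_\theta$ and to the number of connected components of the two hulls. I would prove the contrapositive: if a point $x\in\R$ does not lie in the interior of $\rcht[R]$, then $x\notin\rcht[B]$. The one elementary geometric fact I would isolate first is that quadrants of a fixed type are nested when the vertex is moved ``inward'': if $Q$ is an open quadrant with vertex $v$ and a point $y$ lies in the closed quadrant $\overline{Q}$, then the quadrant of the same type (same two edge directions among the two orthogonal directions of $\o_\theta$) with vertex $y$ is contained in $Q$ --- because being strictly past $y$ in each coordinate and $y$ being weakly past $v$ in each coordinate forces being strictly past $v$ in each coordinate.

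The key step is to show that $x$ not lying in the interior of $\rcht[R]$ forces the existence of an $R$-free quadrant with vertex \emph{exactly} at $x$. Here I would use that the complement of $\rcht[R]$ is the union of the maximal $R$-free quadrants, of which there are only finitely many since $R$ is finite (this is the staircase structure of~\cite[Section~4.1.3]{preparata_1985}); call them $q_1,\dots,q_m$. Then the interior of $\rcht[R]$ equals $\R\setminus\overline{q_1\cup\dots\cup q_m}=\R\setminus(\overline{q_1}\cup\dots\cup\overline{q_m})$, the second equality because the union is finite. Hence $x\in\overline{q_i}$ for some $i$, and by the nesting fact the quadrant $Q_x$ of the same type as $q_i$ with vertex $x$ satisfies $Q_x\subseteq q_i$; in particular $Q_x$ is $R$-free.

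It then remains to assemble the contradiction. Assume in addition $x\in\rcht[B]$. Applying \cref{prop:rch_inclusion} to the set $B$, every quadrant with vertex at $x$ contains a point of $B$, so the $R$-free quadrant $Q_x$ contains some $b\in B$. But $b$ then lies in an open $R$-free quadrant, so $b\notin\rcht[R]$ by the definition of the rectilinear convex hull, contradicting the hypothesis that every point of $B$ lies in the interior of $\rcht[R]$. Therefore no point outside the interior of $\rcht[R]$ belongs to $\rcht[B]$, i.e., $\rcht[B]$ is contained in the interior of $\rcht[R]$.

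I expect the only genuinely delicate point to be that first step: pinning down a single $R$-free quadrant whose vertex is exactly a boundary point $x$ of $\rcht[R]$, since a priori $x$ might only be approached by $R$-free quadrants of varying types and shrinking aperture. Finiteness of the maximal $R$-free quadrants (equivalently, a short compactness argument over the four quadrant types) is what makes this rigorous; once that is in hand the rest is routine bookkeeping with \cref{prop:rch_inclusion}. Note that nothing in the argument inspects the shapes or the number of components of $\rcht[R]$ or $\rcht[B]$, which is exactly why the conclusion holds for every orientation of $\o_\theta$, as asserted.
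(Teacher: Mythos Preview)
Your argument is correct and rests on the same two ingredients as the paper's proof---\cref{prop:rch_inclusion} and the nesting of same-type quadrants---but you run it as a contrapositive, whereas the paper argues directly: for $x$ in $\rcht[B]$, any quadrant $w_x$ with vertex $x$ contains some $b\in B$ by \cref{prop:rch_inclusion}; the same-type quadrant $w_b$ with vertex $b$ satisfies $w_b\subset w_x$ and, since $b$ lies in $\rcht[R]$, contains a red point $r$ by \cref{prop:rch_inclusion} again; hence $r\in w_x$. Your route is slightly longer but is more careful about the word ``interior'' in the conclusion (the paper's direct argument, read literally, yields only $\rcht[B]\subseteq\rcht[R]$, not containment in the interior).

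One small caveat: the claim that there are finitely many \emph{maximal} $R$-free quadrants is not quite right---for finite $R$ such maximal quadrants typically fail to exist, since one can slide the vertex of an $R$-free quadrant along a boundary ray indefinitely. The compactness alternative you already flag (pigeonhole over the four quadrant types along a sequence $x_n\to x$ with $x_n\notin\rcht[R]$, then pass to the limit using finiteness of $R$) is the clean way to produce an $R$-free quadrant with vertex exactly at a boundary point $x$; once you use that in place of the finiteness claim, the proof goes through verbatim.
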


\begin{proof}
Suppose a fixed value of $\theta$ and that all the points of $B$ are in the interior of $\rcht[R]$. Let $x$ be a point in the plane in the interior of $\rcht[B]$. By \Cref{prop:rch_inclusion}, every $Q_i$-quadrant with vertex at $x$ contains at least one blue point. Let $w_x$ be a $Q_i$-quadrant with vertex at $x$, and $b$ denote one of the blue points contained in $w_x$. Let $w_b$ be the $Q_i$-quadrant resulting from translating $w_x$ so its vertex lies on $b$. Since we assumed all the blue points being in the interior of $\rcht[R]$, then $b$ is in the interior of $\rcht[R]$ and by \Cref{prop:rch_inclusion}, $w_b$ contains at least one red point $r$. Note that $w_x$ contains $r$ since $w_b\subset w_x$. Thus, every $Q_i$-quadrant with vertex at a point in the interior of $\rcht[B]$ contains at least one red point.
\end{proof}

By \Cref{lem:rch_containment}, if there is a value of $\theta$ for which all the Boolean flags of the set that encodes~$B_\theta$ are \texttt{True}, then $\rcht[B]$ is contained in $\rcht[R]$.
We obtain the following theorem as a consequence of \Cref{lemma:rch}.

\begin{theorem}\label{thm:rch_containment}
  Given two disjoint sets $R$ and $B$ of points in the plane, the (possibly empty) set of angular intervals of $\theta \in [0,2\pi)$ for which $\rcht[B]$ is contained in $\rcht[R]$ can be computed in $O(n \log n)$ time and $O(n)$ space, where $n = \vert R \vert + \vert B \vert$.
\end{theorem}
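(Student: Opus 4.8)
The plan is to piggyback entirely on the angular-sweep algorithm behind \Cref{lemma:rch}. By \Cref{lem:rch_containment}, for any fixed $\theta$, if every point of $B$ lies in the interior of $\rcht[R]$ then $\rcht[B]$ lies in the interior of $\rcht[R]$; conversely, since $B \subseteq \rcht[B]$, if $\rcht[B]$ lies in the interior of $\rcht[R]$ then so does every blue point. Hence, for each $\theta$, the set $\rcht[B]$ is in the interior of $\rcht[R]$ if, and only if, all the $\vert B\vert$ Boolean flags that the algorithm of \Cref{sec:rch_algorithm} maintains to encode $B_\theta$ are \texttt{True}. The whole task therefore reduces to reporting the maximal subintervals of $[0,\frac{\pi}{2})$ on which all flags are \texttt{True}, and then extending the answer periodically to $[0,2\pi)$.

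First I would run the sweep of \Cref{lemma:rch} verbatim, augmenting its state with a single integer counter $c$ equal to the number of \texttt{False} flags, i.e., the number of blue points currently lying outside $\rcht[R]$. This counter is initialized together with the flag array during the $O(n)$-time initialization of Step~3. Since each intersection event flips at most one flag (exactly as in the event-processing description of \Cref{sec:rch_algorithm}), after updating that flag I update $c$ by $+1$ or $-1$ in $O(1)$ time. A containment configuration then occurs precisely at the values of $\theta$ with $c=0$, so detecting it costs nothing extra per event.

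Next, to produce the intervals I track the transitions of $c$ into and out of $0$: each such transition is a left or right endpoint of a maximal containment interval inside $[0,\frac{\pi}{2})$. Since there are $O(n)$ intersection events and each produces at most one transition, there are $O(n)$ such intervals, and they are open for the same reason as in \Cref{thm:rch_separability} — every event happens when some blue point lies on a bounding ray of an $R$-free quadrant, hence on the boundary of $\rcht[R]$, which is excluded by the strict-containment convention. Finally, because the two orthogonal lines of $\o$ yield a hull that is invariant under a $\frac{\pi}{2}$ rotation, the set of intervals for $\theta\in[0,2\pi)$ is obtained as four shifted copies of the intervals just computed, in $O(n)$ additional time. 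The running time is dominated by Steps~1 and~2 of \Cref{lemma:rch} (computing the feasible blue maximal arcs via \cref{lem:feasible_maximal_arcs} and sorting the intersection events), giving $O(n\log n)$ time and $O(n)$ space overall.

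The only point that needs care is the amortization: a direct per-event rescan of all $\vert B\vert$ flags would cost $\Theta(n^2)$, and this is exactly what the $O(1)$-updatable counter $c$ avoids. Beyond that there is no real obstacle — the geometric content is already packaged in \Cref{lem:rch_containment} and in the correctness of the flag array maintained for \Cref{lemma:rch} — so \Cref{thm:rch_containment} follows essentially as a corollary of \Cref{lemma:rch}.
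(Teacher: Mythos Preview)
Your proposal is correct and follows essentially the same approach as the paper: the paper also derives \Cref{thm:rch_containment} directly as a consequence of \Cref{lemma:rch} together with \Cref{lem:rch_containment}, observing that containment holds exactly when all the Boolean flags encoding $B_\theta$ are \texttt{True}. Your added implementation details (the counter $c$ and the $\tfrac{\pi}{2}$-periodicity argument) are sound but go slightly beyond what the paper spells out.
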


It is not hard to see that, as in the separability problem, there are $O(n)$ angular intervals of containment. We next adapt the bichromatic point set from Subsection~\ref{subsec:rch_separability_lower_bound} to obtain a matching lower bound.
Consider a rhombus whose diagonals are parallel to the coordinate axes, and a circle $C$ whose diameter is the diagonal of the rhombus that is parallel to the $X$ axis.
The convex hull of the set $R$ is now formed by the five points shown in \Cref{fig:rch_containment_red_points:1}.
The points $v_1$, $v_2$, and $v_5$ lie on vertices of the rhombus, and the points $v_3$ and $v_4$ on intersection points between the rhombus and the circle $C$.

\begin{figure}[ht]
  \centering

  \subcaptionbox
  {\label{fig:rch_containment_red_points:1}
    The points lie on the boundary of a rhombus.
  }
  [.25\linewidth][c]
  {\includegraphics{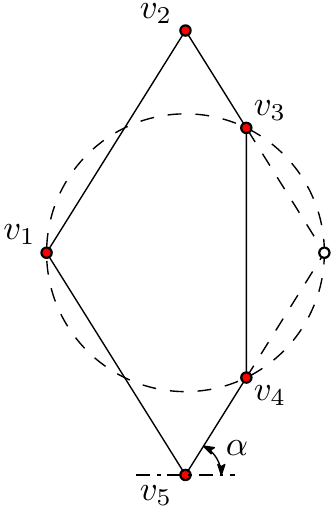}}%
  ~
  \subcaptionbox
  {\label{fig:rch_containment_red_points:2}
    The rectilinear convex hull is formed by three connected components: $v_2$, $v_5$, and an $L$-shaped region whose sides are parallel to the lines of $\o_\theta$.
    % While incrementing $\theta$ from $\frac{\pi}{2}-\alpha$ to $\alpha$, three of its convex vertices are anchored at $v_1$, $v_3$, and $v_4$, while the remaining vertices traverse the dashed semicircles in the counter-clockwise direction.
    Points in the dashed regions lie in the intersection of opposite stabbing $\o_\theta$-wedges.
  }
  [.45\linewidth][c]{\includegraphics{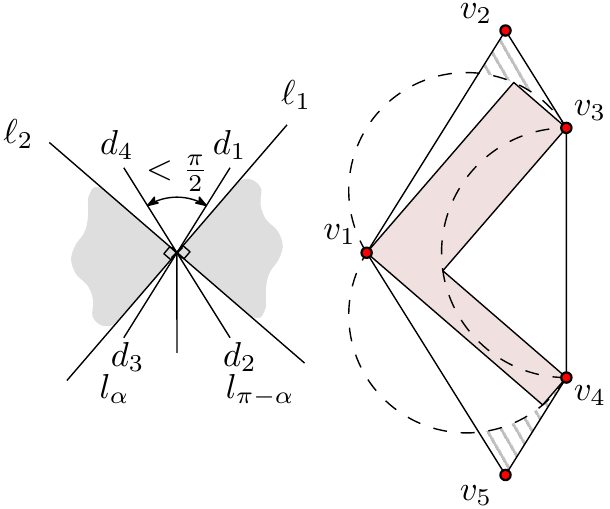}}%
  ~
  \subcaptionbox
  {\label{fig:rch_containment_red_points:3}
    The remaining points lie in the highlighted region.
  }
  [.25\linewidth][c]{\includegraphics{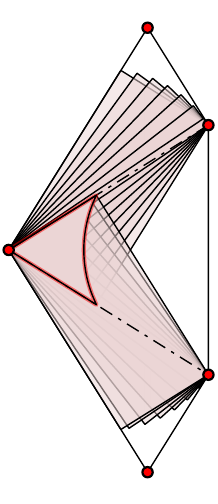}}%

  \caption{A bichromatic point set with $\Omega(n)$ angular intervals
    of containment: The set of red points.}
  \label{fig:rch_containment_red_points}
\end{figure}

The relevant rotation angles are again those in the interval $\varphi=[\frac{\pi}{2}-\alpha, \alpha]$.
Note that, since the direction of the edge $\overline{v_3 v_4}$ is parallel to the $Y$-axis, the observations we made about the construction described in Subsection~\ref{subsec:rch_separability_lower_bound} still hold: for any $\theta\in\varphi$ we have that $\ch[R]$ is stabbed only by $Q_2$- and $Q_4$-quadrants, and $\rcht[R]$ is formed by three connected components.
The relevant difference is the central component, which instead of a rectangle inscribed in $C$, is now an $L$-shaped region whose sides are parallel to the sides of $\o_\theta$; see \Cref{fig:rch_containment_red_points:2}.
While incrementing $\theta$ from $\frac{\pi}{2}-\alpha$ to $\alpha$, three of the vertices of this region are anchored at $v_1$, $v_3$, and $v_4$, while the remaining three vertices traverse the dashed semicircles in the counter-clockwise direction.
By intersecting the $L$-shaped regions for all the rotation angles in $\varphi$, we obtain the region highlighted in \Cref{fig:rch_containment_red_points:3}.
% reviewer #3 comment
% The rest of the red points lie in this region, and are omitted for the sake of clarity.
% These points are contained in $\rcht[R]$ for all $\theta\in\varphi$.
Note that any red point lying in this region is contained in $\rcht[R]$ for all $\theta\in\varphi$.
Hence, we may add as many red points as desired without affecting the construction.

The set of blue points is shown in \Cref{fig:rch_containment_blue_points}.
The points lie in the interior of the triangle with vertices $v_1, v_3, v_4$, on a circle with center on the middle point of the segment $\overline{v_3 v_4}$, and radius $\sfrac{d(v_3,v_4)}{2} - \varepsilon$, for $0 < \varepsilon < \sfrac{d(v_3,v_4)}{2}$.
The points are spread so at every $\theta\in\varphi$, at most a single blue point is \emph{not} contained in $\rcht[R]$.
As shown in \Cref{fig:rch_containment_rotation} (see the figures from left to right), while rotating the lines of $\o_\theta$ around the origin by incrementing $\theta$ from $\frac{\pi}{2}-\alpha$ to $\alpha$, the reflex vertex of the $L$-shaped region traverses the red circular arc in the clockwise direction.
Hence $\rcht[R]$ loses one blue point at a time, generating $\Omega(n)$ intervals of containment.

\begin{figure}[ht]
  \centering

  \subcaptionbox
  {\label{fig:rch_containment_blue_points}
    The points lie in the interior of the triangle with vertices $v_1, v_3, v_4$, on a circle with center on the middle point of the segment $\overline{v_3 v_4}$, and radius $\sfrac{d(v_3,v_4)}{2} - \varepsilon$.
  }
  [.4\linewidth][c]{\includegraphics{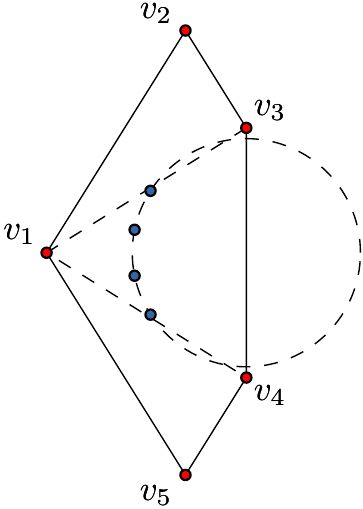}}%
  ~~
  \subcaptionbox
  {\label{fig:rch_containment_rotation}
    While incrementing $\theta$ from $\frac{\pi}{2}-\alpha$ to $\alpha$, $\rcht[R]$ loses one blue point at a time, generating $\Omega(n)$ angular intervals of containment.
  }
  [.6\linewidth][c]{%
    \includegraphics{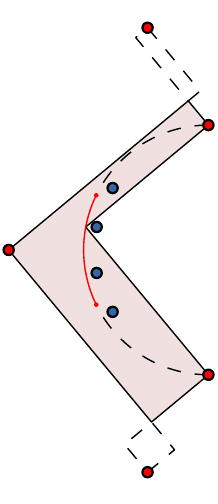}%
    \qquad%
    \includegraphics{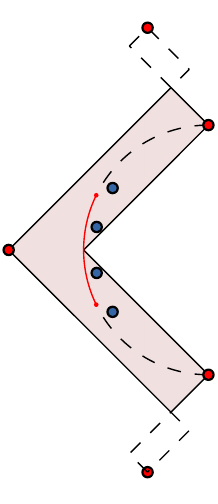}%
    \qquad%
    \includegraphics{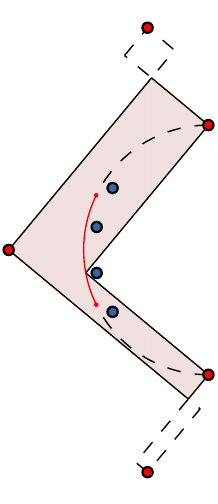}%
  }

  \caption
  {
    A bichromatic point set with $\Omega(n)$ angular intervals of containment: The set of blue points.
  }
  \label{fig:rch_containment_intervals}
\end{figure}

We summarize the lower bounds discussions of \Cref{subsec:rch_inclusion,subsec:rch_separability_lower_bound} in the following proposition.

\begin{proposition}
There exist disjoint sets $R$ and $B$ of red and blue points in the plane that induce $\Omega(n)$ intervals of $\theta$ in which either i) $\rchtR$ is $B$-free or ii) $\rchtR$ contains $\rchtB$, where $n = \vert R \vert + \vert B \vert$ and $R$ may have $O(1)$ points.
\end{proposition}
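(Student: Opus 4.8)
The plan is to note that this proposition simply repackages the two explicit lower-bound families already built in \Cref{subsec:rch_separability_lower_bound,subsec:rch_inclusion}, so it suffices to invoke those constructions and verify that each yields $\Omega(n)$ pairwise-disjoint, non-degenerate intervals of the stated type while keeping $|R|$ constant. Fix $n$. For part~i) I would take $R=\{v_1,\dots,v_4\}$ to be the four rhombus vertices of \Cref{fig:rch_separability_red_points:1} --- if a specific constant size for $R$ is wanted, pad $R$ with finitely many red points drawn from the small rhombus of \Cref{fig:rch_separability_red_points:3}, which by the argument there lie in $\rcht[R]$ for every $\theta\in\varphi$ and therefore change nothing --- and place the remaining $n-O(1)$ points of $B$ on the circle centred at the midpoint of $\overline{v_1 v_3}$ with radius $\sfrac{d(v_1,v_3)}{2}-\varepsilon$. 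By the analysis of \Cref{subsec:rch_separability_lower_bound} (using \Cref{lemma:stabbing_quadrants,lemma:opposite_quadrants}), on the interval $\varphi=[\frac{\pi}{2}-\alpha,\alpha]$ the set $\rcht[R]$ consists of the isolated points $v_2,v_4$ together with a rectangle inscribed in $C$, two of whose corners stay pinned to red points while the other two move along the two red circular arcs as $\theta$ increases from $\frac{\pi}{2}-\alpha$ to $\alpha$. Placing the blue points so that these two moving corners sweep over them one at a time, for every $\theta\in\varphi$ at most one blue point lies in $\rcht[R]$; the maximal subintervals on which \emph{no} blue point lies in $\rcht[R]$ are precisely the intervals on which $\rcht[R]$ is $B$-free, and there are $\Omega(n)$ of them.

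For part~ii) I would instead use the five-point red configuration $v_1,\dots,v_5$ of \Cref{fig:rch_containment_red_points:1} (again optionally padded with a constant number of red points from the region of \Cref{fig:rch_containment_red_points:3}), and place the remaining $n-O(1)$ blue points on the circle centred at the midpoint of $\overline{v_3 v_4}$ with radius $\sfrac{d(v_3,v_4)}{2}-\varepsilon$, inside the triangle $v_1 v_3 v_4$. The same quadrant analysis shows that on $\varphi$ the set $\rcht[R]$ is the union of the isolated points $v_2,v_5$ and a central $L$-shaped region whose reflex corner traverses a red circular arc as $\theta$ grows; spacing the blue points so that this reflex corner expels exactly one of them at a time makes at most one blue point lie outside $\rcht[R]$ at any $\theta\in\varphi$. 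By \Cref{lem:rch_containment}, whenever all of $B$ lies in the interior of $\rcht[R]$ we get $\rcht[B]\subseteq\rcht[R]$, so the maximal subintervals of $\varphi$ on which no blue point has been expelled are intervals of containment, and there are again $\Omega(n)$ of them.

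The only point that needs genuine care, in both parts, is the disjointness and non-degeneracy of the $\Omega(n)$ intervals: one must show that the angular position on $\mathbb{S}^1$ of the relevant sweeping corner --- the two free corners of the inscribed rectangle in part~i), the reflex corner of the $L$-region in part~ii) --- is a continuous and strictly monotone function of $\theta$ on $\varphi$. Granting this, the $\Omega(n)$ directions determined by the blue points have $\Omega(n)$ distinct preimages in the interior of $\varphi$, and placing the blue points uniformly in angle keeps the resulting intervals from collapsing or overlapping. I expect this monotonicity to be elementary: the sweeping corner is the apex of a stabbing $Q_2$- or $Q_4$-quadrant, it is constrained to a fixed circular arc while its two bounding rays rotate rigidly with $\o_\theta$, and its position is then pinned down by \Cref{lemma:opposite_quadrants,lemma:stabbing_quadrants}; no new idea beyond those already used in \Cref{subsec:rch_separability_lower_bound,subsec:rch_inclusion} is needed, and the remaining work is bookkeeping --- choosing $\alpha$, $\varepsilon$ and the angular spacing of $B$ consistently so that every one of the $\Omega(n)$ intervals is nonempty.
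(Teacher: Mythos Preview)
Your proposal is correct and matches the paper's approach: the proposition is explicitly presented there as a summary of the constructions in \Cref{subsec:rch_separability_lower_bound,subsec:rch_inclusion}, with no separate proof, so invoking those two families and checking the $\Omega(n)$ count with $|R|=O(1)$ is exactly what is intended. Your added remark on the monotonicity of the sweeping corner is a reasonable bit of extra care that the paper leaves implicit.
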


\section{Generalizations}
\label{sec:generalizations}

In this section we generalize the results from \Cref{sec:rch}.
First, in Subsection~\ref{subsec:och}, we consider the case in which the set $\o$ contains not only two lines, but $k \geq 2$ lines with arbitrary orientations.
In this setting the corresponding convex hull is known as the $\o$-convex hull~\cite{rawlins_thesis_1987}.
Then, in Subsection~\ref{subsec:obh}, we consider the case in which the set $\o$ of two orthogonal lines is substituted by a set $\ob$ of two lines that are not necessarily orthogonal to each other, but form
an angle $\beta\in(0,\pi)$.
In this setting the corresponding convex hull is known as the $\o_\beta$-convex hull~\cite{alegria_2018}.
We split the description of each generalization in three parts.
In the first part, we adapt the needed results from Subsection~\ref{subsec:maximal_wedges_and_arcs} to characterize the conditions in which a blue point is contained in the hull of the set of red points.
In the second part, we adapt the algorithm from Subsection~\ref{sec:rch_algorithm} to compute and maintain the set of blue points contained in the hull of the set of red points while we change the orientation of the lines of $\o$.
Finally, in the third part, we generalize the results from
Subsections~\ref{subsec:rch_separability_lower_bound} and~\ref{subsec:rch_inclusion} to bound the number of angular intervals of separability and containment between the hulls of the red and the blue point sets.

\subsection{The $\boldsymbol \o$-convex hull}
\label{subsec:och}

In this subsection we solve the following problem.

\begin{problem}\label{problem:och}
  Given a set $\o$ of orientations formed by $k \geq 2$ lines, compute the set of rotation angles for which the lines of $\o$ have to be simultaneously rotated counterclockwise around the origin, so the $\o$-convex hull of $R$ contains no points of $B$.
\end{problem}

% CA:
For the sake of simplicity, throughout this subsection we
consider indices $i$ to be modulo $2k$.
We also assume that the $k \geq 2$ lines of $\o$ are labeled with $\ell_1,\dots,\ell_k$ so that $i<j$ implies that the orientation of $\ell_i$ is smaller than the orientation of $\ell_j$.
Let $r_i$ and $r_{i+k}$ denote the rays into which $\ell_i$ is split by the origin.
% ORIGINAL:
% The origin splits each line $\ell_i$ into two rays $r_i$ and $r_{i+k}$.
Given two indexes $i$ and $j$, we denote with $W_{i,j}$ the wedge spanned as we counterclockwise rotate $r_i$ anchored at the origin until we obtain $r_j$.
A \emph{$W_i^j$-wedge} is a translation of $W_{i,j}$.
We say that a $W_{i+1}^{i+k}$-wedge is an \emph{$\o$-wedge}, see \Cref{fig:o-convex_hull:1}.
The \emph{$\o$-convex hull} of a finite point set $P$, denoted with $\oh$, is the set
\[
  \oh =\mathbb{R}^{2} \setminus\bigcup_{i=1}^{2k}\mathcal{W}^i,
\]
where $\mathcal{W}^i$ denotes the union of all the $P$-free $W_{i+1}^{i+k}$-wedges, see \Cref{fig:o-convex_hull:2}.
Note that, as the rectilinear convex hull, the $\o$-convex hull of a finite point set is typically not convex, may be disconnected, and is orientation-dependent.
More details on these and other properties can be found in~\cite{fink_2004}.

\begin{figure}[ht]
  \centering
  
  \subcaptionbox
  {\label{fig:o-convex_hull:1}
    From top to bottom and left to right, the set $\o$ and the wedges $W_{i+1,i+k}$ for $i = 1,\ldots,2k$.
    Each wedge defines an infinite family of $\o$-wedges.
  }
  [0.4\linewidth][c] {
    \centering%
    \includegraphics[scale=0.8]{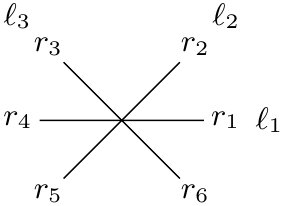}%
    \vspace{1em}%
    \linebreak[1]%
    \includegraphics[scale=0.7]{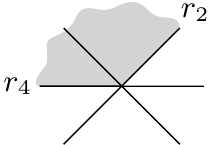}%
    \hfill%
    \includegraphics[scale=0.7]{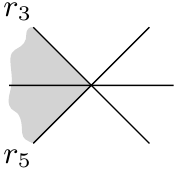}%
    \hfill%
    \includegraphics[scale=0.7]{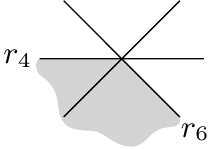}%
    \vspace{1em}%
    \linebreak[4]%
    \includegraphics[scale=0.7]{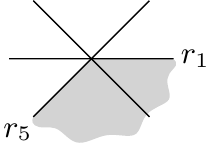}%
    \hfill%
    \includegraphics[scale=0.7]{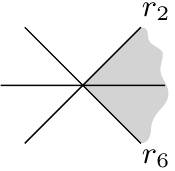}%
    \hfill%
    \includegraphics[scale=0.7]{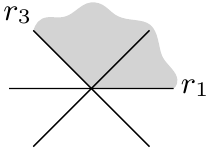}%
  }
  ~
  \subcaptionbox
  {\label{fig:o-convex_hull:2}
    On the left, the wedge $W_{3,5}$ is next to a $W_3^5$-wedge.
    On the right and from top to bottom, the wedge $W_{1,3}$ is next to a $W_1^3$-wedge, the wedge $W_{6,2}$ is next to a $W_6^2$-wedge, and the wedge $W_{5,1}$ is next to a $W_5^1$-wedge.
  }
  [0.55\linewidth][c]{
    \centering%
    \includegraphics{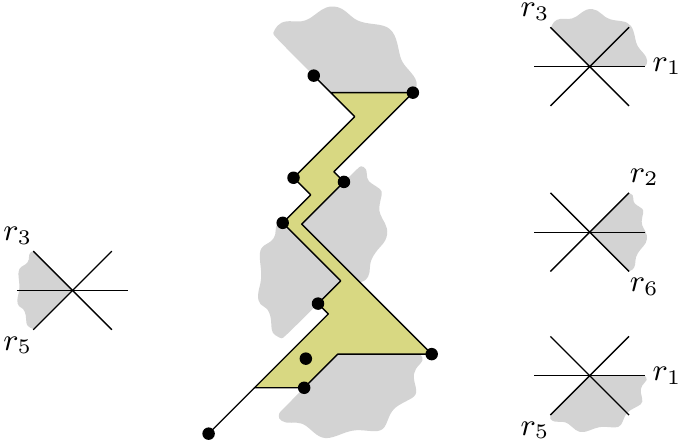}%
  }

  \caption
  {
    A set $\o$ of orientations with $k=3$ lines in Figure~\subref{fig:o-convex_hull:1}, and the $\o$-convex hull of a finite point set $P$ in Figure~\subref{fig:o-convex_hull:2}.
    % \subref{fig:o-convex_hull:1} From top to bottom and left to right, a set $\o$ with $k=3$ lines and the wedges $W_{i+1,i+k}$ for $i = 1,\ldots,2k$.
    % Each wedge defines an infinite family of $\o$-wedges.
    % \subref{fig:o-convex_hull:2} A finite set $P$ of points, and the $\o$-convex hull of $P$ for the set $\o$ shown in figure~\subref{fig:o-convex_hull:1}.
    % On the left, the wedge $W_{3,5}$ is next to a $W_3^5$-wedge.
    % On the right and from top to bottom, the wedge $W_{1,3}$ is next to a $W_1^3$-wedge, the wedge $W_{6,2}$ is next to a $W_6^2$-wedge, and the wedge $W_{5,1}$ is next to a $W_5^1$-wedge.
  }
  \label{fig:o-convex_hull}
\end{figure}

Let $\o_\theta$ denote the set of lines obtained after simultaneously rotating the lines of $\o$ counterclockwise around the origin by an angle of $\theta$.
We solve \cref{problem:och} by describing an algorithm to compute the (possibly empty) set of angular intervals of $\theta \in [0, 2\pi)$ for which the $\o_\theta$-convex hull of $R$ is $B$-free.
See \Cref{fig:o-convex_hull_components}.

\begin{figure}[ht]
  \centering

  {\includegraphics[page=1,scale=0.95]{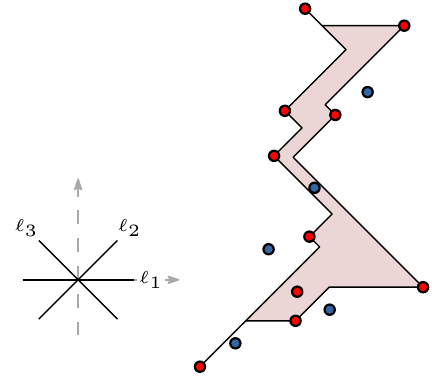}}%
  \quad%
  {\includegraphics[page=2,scale=0.95]{o-convex_hull__bichromatic}}
  \quad%
  {\includegraphics[page=3,scale=0.95]{o-convex_hull__bichromatic}}

  \caption
  {
    The sets $R$ and $B$, and the $\o_\theta$-convex hull of $R$ for three different values of $\theta$.
    In each figure, the set $\o$ is shown at the bottom left corner along with the coordinate axes, which are shown with dashed lines.
  }
  \label{fig:o-convex_hull_components}
\end{figure}

We start with the following generalization of \cref{prop:rch_inclusion}, which derives directly from the definition of $\o$-convex hull.

\begin{proposition}\label{prop:oh_inclusion}
  A point $x \in \R$ is contained in $\oh$ if, and only if, every $\o$-wedge with vertex on $x$ contains at least one point of $P$.
\end{proposition}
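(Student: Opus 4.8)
The plan is to unwind the definition of $\oh$ and reduce the statement to the elementary fact that a convex cone is stable under translations into itself; this is the exact analogue of the reasoning behind \Cref{prop:rch_inclusion}, with $\o$-wedges (i.e. $W_{i+1}^{i+k}$-wedges) playing the role of quadrants. First I would record that, by definition, $x\in\oh$ if and only if $x$ lies in none of the sets $\mathcal{W}^i$, and that $\bigcup_{i=1}^{2k}\mathcal{W}^i$ is precisely the union of all $P$-free $\o$-wedges, since an $\o$-wedge is by definition a $W_{i+1}^{i+k}$-wedge for some~$i$. Hence it suffices to prove the following: $x$ lies in some $P$-free $\o$-wedge if and only if some $\o$-wedge with vertex at $x$ contains no point of $P$.

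Next I would observe that every $\o$-wedge is a convex cone: the wedge $W_{i+1,i+k}$ is spanned by rotating $r_{i+1}$ counterclockwise onto $r_{i+k}$, and since $r_{i+k}$ precedes $r_{i+k+1}$ (the ray opposite to $r_{i+1}$) in the circular order, its aperture is strictly smaller than $\pi$. For one direction of the equivalence, suppose $x$ lies in a $P$-free $\o$-wedge $W$ with vertex $y$, spanned by a translate of $W_{i+1,i+k}$, and let $W'$ be the translate of $W$ carrying $y$ to $x$. Writing an arbitrary $w'\in W'$ as $w'=w-y+x$ with $w\in W$, we get $w'-y=(w-y)+(x-y)$, a sum of two vectors in the open convex cone of directions of $W$ (the second vector because $x\in W$); since a convex cone is closed under addition, $w'-y$ lies in that cone, so $w'\in W$. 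Thus $W'\subseteq W$, so $W'$ is again a $P$-free $\o$-wedge, now with vertex at $x$. For the reverse direction, if some $\o$-wedge $W'$ with vertex at $x$ avoids $P$, then translating the vertex of $W'$ a sufficiently small distance into the interior of $W'$ yields an $\o$-wedge of the same form that still avoids the finitely many points of $P$ and now contains $x$ in its interior, so $x$ lies in a $P$-free $\o$-wedge.

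The only delicate point — and the one I expect to be the main obstacle, just as it is implicitly for the rectilinear convex hull — is the open/closed bookkeeping in the last step: $\o$-wedges are open regions, so in the boundary-degenerate cases (for instance when $x\in P$, or when a point of $P$ lies on a bounding ray of $W'$) one must read ``contains a point of $P$'' under the same convention adopted for \Cref{prop:rch_inclusion}; for $x$ in general position with respect to $R\cup B$ the small-translation argument goes through verbatim. Everything else is a routine consequence of the definition of $\oh$, so I expect the proof to be short.
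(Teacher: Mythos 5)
The paper offers no argument for this proposition at all: it is introduced with ``derives directly from the definition of the $\o$-convex hull,'' exactly as \cref{prop:rch_inclusion} is. So you are supplying a proof where the paper has none, and your central idea is the right one. The forward half of your equivalence --- if $x$ lies in a $P$-free $W_{i+1}^{i+k}$-wedge $W$ with vertex $y$, then the translate of $W$ carrying $y$ to $x$ is contained in $W$ (because the open cone of directions has aperture $<\pi$, is convex, and is closed under addition), hence is itself a $P$-free $\o$-wedge with vertex at $x$ --- is correct and is the substantive content of the proposition.

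The reverse half is where your proof genuinely breaks, and I would push back on calling it mere ``open/closed bookkeeping.'' Translating the vertex of a $P$-free wedge $W'=x+C$ by $-\varepsilon d$ with $d\in C$ produces a \emph{strictly larger} wedge $W''\supseteq W'$, and $W''$ swallows every point of $P$ lying on the closure of $W'$ (since $\overline{C}+C\subseteq C$); no choice of small $\varepsilon$ avoids this. Moreover the failure is not only in your argument but in the literal statement: take $k=2$ with $\o$ the coordinate axes, $x$ the origin, and $P=\{(1,0),(0,1),(-1,0),(0,-1)\}$ (which satisfies the paper's no-three-collinear assumption). Every open quadrant with vertex $x$ is $P$-free, yet every open quadrant of any family containing $x$ contains one of the four points, so $x\in\rch$ while the right-hand side of \cref{prop:rch_inclusion} is false; the same construction works for general $\o$. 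So your closing caveat --- that one must adopt a convention about points of $P$ on bounding rays, or restrict to $x$ not lying on a line through a point of $P$ parallel to a line of $\o$ --- is not a cosmetic remark but a genuine hypothesis the statement needs. The paper sidesteps this silently: it only ever applies the proposition to blue points strictly between intersection events, where no red point lies on a bounding ray of the relevant wedges, and it separately declares boundary points to be ``not contained.'' Your proof is complete for exactly those configurations; I would just state the non-degeneracy hypothesis explicitly rather than deferring to an unstated convention.
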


As in \cref{sec:rch}, we consider $\o$ to be not only a set of $k$ lines, but also the set of $2k$ rays in which the lines of $\o$ are split by the origin.
We generalize the definition of feasible maximal arc as follows.
Let $\alpha_i$ denote the size of the wedge $W_{i+1,i+k}$.
We denote with $\Theta = \operatorname{min}\{ \alpha_1, \ldots, \alpha_k\}$ the smallest angle among the sizes of the $\o$-wedges defined by the lines of $\o$.
We say that a maximal arc is \emph{feasible}, if it is induced by a maximal wedge with size at least~$\Theta$.
See \Cref{fig:oh_feasible_arcs_oset,fig:oh_feasible_arcs}.

\begin{figure}[ht]
  \centering

  \includegraphics[scale=0.8]{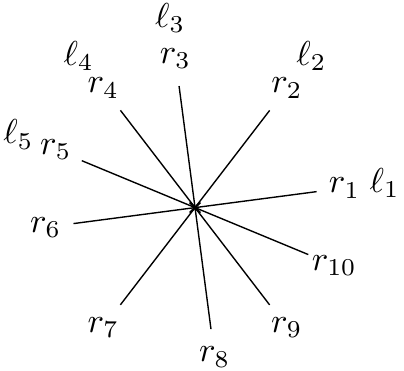}
  \\[1em]
  \includegraphics[scale=0.8]{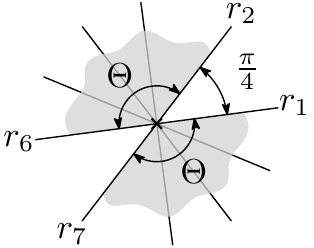}
  \hfill
  \includegraphics[scale=0.8]{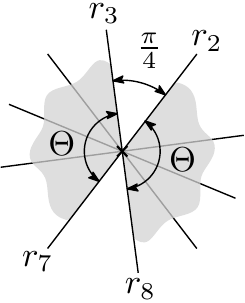}
  \hfill
  \includegraphics[scale=0.8]{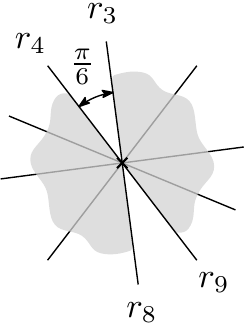}
  \hfill
  \includegraphics[scale=0.8]{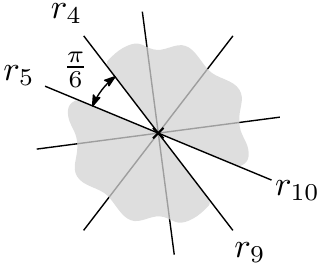}
  \hfill
  \includegraphics[scale=0.8]{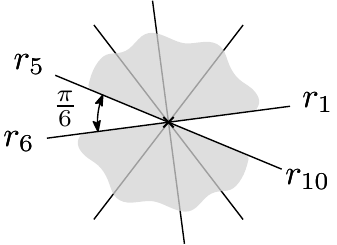}

  \caption
  {
    At the top, a set $\o$ with $k=5$ lines.
    At the bottom and from left to right, the wedges $W_{i+1,i+k}$ for $i = 1,\ldots,2k$.
    The first two figures show the wedges with the smallest size among all.
  }
  \label{fig:oh_feasible_arcs_oset}
\end{figure}

\begin{figure}[ht]
  \centering

  \subcaptionbox
  {\label{fig:rch_feasible_arcs:1}
    A feasible blue maximal arc is induced by a wedge with size at least $\Theta$.
  }
  {\includegraphics{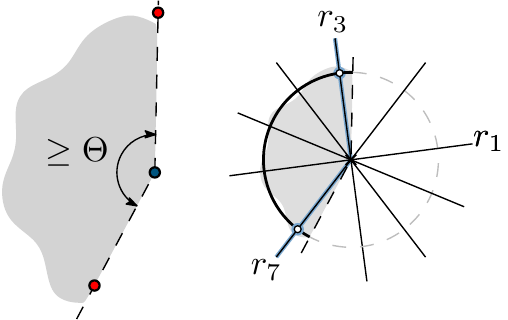}}%
  \qquad
  \subcaptionbox
  {\label{fig:rch_feasible_arcs:2}
    A blue maximal arc that is not feasible is induced by a wedge with size smaller than $\Theta$.
  }
  {\includegraphics{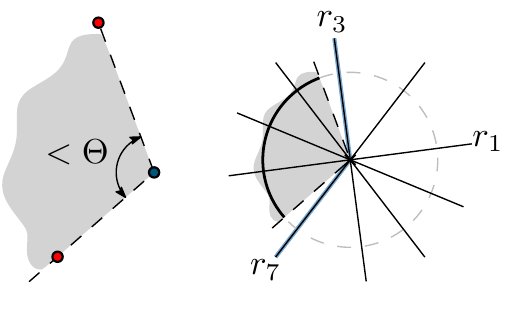}}

  \caption
  {
    A feasible arc for the set $\o$ of \Cref{fig:oh_feasible_arcs_oset}.
    % \subref{fig:rch_feasible_arcs:1} A feasible blue maximal arc is induced by a wedge with size at least $\Theta$.
    % \subref{fig:rch_feasible_arcs:2} A blue maximal arc that is not feasible is induced by a wedge with size smaller than $\Theta$.
  }
  \label{fig:oh_feasible_arcs}  
\end{figure}

We generalize \cref{lem:rch_inclusion_arcs} as follows.

\begin{lemma}\label{lem:oh_inclusion_arcs}
  For any fixed value of $\theta$, a point $x \in \R$ is contained in the $\o_\theta$-convex hull of $P$ if, and only if, every feasible maximal arc of $x$ is intersected by a set $r_i,r_{i+1},\ldots,r_j$ of rays of $\o_\theta$ such that $j < i + k$.
\end{lemma}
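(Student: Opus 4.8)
The plan is to mirror the proof of \cref{lem:rch_inclusion_arcs} almost verbatim, with the orthogonal quadrants replaced by the family of $\o$-wedges and the threshold $\frac{\pi}{2}$ replaced by $\Theta = \min\{\alpha_1,\ldots,\alpha_k\}$. By \cref{prop:oh_inclusion} it suffices to prove: every $\o$-wedge with vertex on $x$ contains at least one point of $P$ if, and only if, every feasible maximal arc of $x$ is intersected by a consecutive block $r_i,r_{i+1},\ldots,r_j$ of rays of $\o_\theta$ with $j < i+k$. As before, I would assume without loss of generality that $\theta = 0$ and that $x$ lies at the origin, so that the rays of $\o$ emanate from $x$ and each $\o$-wedge with vertex on $x$ is exactly some $W_{i+1,i+k}$ (a wedge bounded by two rays of $\o$ spanning "$k-1$ sectors", i.e. one short of half of the $2k$ rays).

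First I would handle $(\Longrightarrow)$. Suppose every $\o$-wedge with vertex on $x$ contains a point of $P$. Let $w$ be a maximal $P$-free wedge at $x$ with size at least $\Theta$, inducing a feasible maximal arc $a$. The key geometric observation is that any wedge of size at least $\Theta$ whose vertex is the origin must contain at least one ray $r_i$ of $\o$ in its interior: indeed $\Theta$ is the minimum gap between cyclically consecutive rays $r_i$ and $r_{i+1}$ (note $\alpha_i$ is the size of $W_{i+1,i+k}$, and consecutive-ray gaps are exactly these $\alpha$-type quantities in the refined $2k$-ray picture), so a wedge wider than the largest possible ray-free sector must swallow a ray. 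Hence $a$ is intersected by at least one ray. Conversely, the rays intersecting $a$ form a cyclically consecutive block $r_i,\ldots,r_j$ (since $a$ is a single arc, $w$ a single wedge). If this block had size $j \geq i+k$, i.e. it contained $k+1$ or more cyclically consecutive rays, then $w$ would contain the ray pair $r_{i+1}$ and $r_{i+k}$ together with all rays between them, hence $w$ would contain the closed $\o$-wedge $W_{i+1,i+k}$ — in fact a slightly enlarged, still $P$-free, open $\o$-wedge, contradicting the hypothesis that every $\o$-wedge at $x$ contains a point of $P$. So $j < i+k$, as required.

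For $(\Longleftarrow)$, suppose every feasible maximal arc of $x$ is cut by a consecutive block $r_i,\ldots,r_j$ with $j < i+k$, and for contradiction suppose some $\o$-wedge $q = W_{i+1,i+k}$ with vertex on $x$ is $P$-free. Since $q$ is $P$-free, it is contained in some maximal $P$-free wedge $w$; and $q$ already has size $\alpha_i \geq \Theta$, so $w$ has size at least $\Theta$ and induces a feasible maximal arc $a$. Because wedges are open, the bounding rays $r_{i+1}$ and $r_{i+k}$ of $q$ need not lie in $q$, but $w \supseteq q$ together with $w$ being $P$-free forces $w$ to be strictly larger than $q$ on at least one side (otherwise $q$ itself would not be strictly contained in a larger free wedge, and — crucially — if $w=q$ exactly then $a$ is cut by the rays $r_{i+2},\ldots,r_{i+k-1}$ strictly interior to $q$, which is a block of only $k-2$ consecutive rays; I need to be a little careful here and argue via the maximality of $w$ that $w$ picks up at least one of $r_{i+1}$ or $r_{i+k}$, hence $a$ is cut by a block that either still satisfies $j<i+k$ trivially, or — if $w$ is maximal and strictly $P$-free — extends to include rays beyond, and the contradiction comes from the count). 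The cleanest route is to reverse the direction-of-implication bookkeeping of the orthogonal case: a $P$-free $\o$-wedge of size exactly $\alpha_i \geq \Theta$ sitting inside a maximal wedge $w$ means $w$ has size $> \Theta$ strictly if $q$ is strictly contained, in which case $w$ contains at least $k$ consecutive rays (the $k-1$ strictly inside any $\o$-wedge, plus at least one boundary ray), i.e. the block $r_i,\ldots,r_j$ cutting $a$ has $j \geq i+k-1+1 = i+k$ — wait, I want $j\ge i+k$, so I must verify an $\o$-wedge of size $\alpha_i$ strictly contains exactly $k-1$ rays and its closure contains $k+1$, making the contradiction land. I expect the count at the two extremes (exactly $\Theta$ vs. strictly larger, and open vs. closed boundary rays) to be the main obstacle, exactly as the $\frac{\pi}{2}$ case in \cref{lem:rch_inclusion_arcs} split into "$=\frac{\pi}{2}$" and "$>\frac{\pi}{2}$" subcases; here the analogous split is "size $=\Theta$" versus "size $>\Theta$", and I would carry the arc-vs-ray incidence count through both, using that $\Theta$ is the minimum $\o$-wedge size to guarantee at least one ray is always captured and the consecutive-block structure to force $j<i+k$ on the feasible side.
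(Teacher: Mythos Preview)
Your overall strategy---reduce to \cref{prop:oh_inclusion} and rerun the two implications of \cref{lem:rch_inclusion_arcs} with $\o$-wedges in place of quadrants---is exactly what the paper does. But the execution has a genuine error that breaks both directions.

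The mistake is your identification of $\Theta$ with the minimum gap between cyclically consecutive rays. For $k\ge 3$ the wedge $W_{i+1,i+k}$ spans $k-1$ consecutive ray-gaps, not one, so $\alpha_i$ is a \emph{sum} of $k-1$ gaps; in fact $\alpha_i=\pi-g_i$ where $g_i$ is the gap between $r_i$ and $r_{i+1}$, and hence $\Theta=\pi-\max_i g_i$. Your claim that ``any wedge of size at least $\Theta$ must contain at least one ray'' therefore fails: a wedge of size $\Theta$ can sit entirely inside the largest gap whenever $\Theta\le \pi-\Theta$. This undermines the non-emptiness part of your $(\Longrightarrow)$ argument, and the same miscount is what makes your $(\Longleftarrow)$ case analysis spiral (your ``$k-1$ strictly inside any $\o$-wedge'' should be $k-2$, which is why the $j\ge i+k$ inequality never lands).

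The paper bypasses this counting entirely with one clean observation: if a feasible maximal arc $a$ is intersected by two rays $r_i$ and $r_{i+k}$ (which lie on the \emph{same} line $\ell_i$, hence are opposite), then the inducing maximal wedge $w$ contains a full half-line in each direction and therefore contains an $\o_\theta$-wedge bounded by rays parallel to $r_i$ and $r_{i+k}$; that $\o_\theta$-wedge is $P$-free, so $x\notin\oh$. Contrapositively, if $x\in\oh$ then no feasible arc contains an opposite pair, which is exactly the condition $j<i+k$. The converse direction is the ``straightforward adaptation'' of the $(\Longleftarrow)$ part of \cref{lem:rch_inclusion_arcs}. I would recommend abandoning the gap-counting approach and rebuilding the argument around this opposite-ray observation.
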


\begin{proof}
  By a straightforward adaptation of the proof of \cref{lem:rch_inclusion_arcs}, we can show that every $\o_\theta$-wedge with vertex on $x$ contains at least one point of $P$ if, and only if, every feasible maximal arc of $x$ is intersected by a set $r_i,r_{i+1},\ldots,r_j$ of rays such that $j < i + k$.
  The key observation for this adaptation is that, if two rays $r_i$ and $r_j$ with $j = i + k$ intersect a maximal arc $a$ that is feasible, then $a$ is induced by a maximal wedge that contains an $\o_\theta$-wedge bounded by rays parallel to $r_i$ and $r_j$.
  The lemma follows from this fact and \Cref{prop:oh_inclusion}.
\end{proof}

In the following lemma we rephrase \cref{lem:oh_inclusion_arcs} to a bichromatic setting.

\begin{lemma}\label{lem:oh_bichromatic_inclusion_arcs}
  For every fixed value of $\theta$, a blue point $b\in B$ is contained in the $\o_\theta$-convex hull of $R$ if, and only if, every blue maximal arc of $b$ that is feasible is intersected by a set $r_i,r_{i+1},\ldots,r_j$ of rays of $\o_\theta$ such that $j < i + k$.
\end{lemma}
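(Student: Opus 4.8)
The plan is to derive Lemma~\ref{lem:oh_bichromatic_inclusion_arcs} as the bichromatic specialization of Lemma~\ref{lem:oh_inclusion_arcs}, exactly as Lemma~\ref{lem:rch_bichromatic_inclusion_arcs} was obtained from Lemma~\ref{lem:rch_inclusion_arcs} in the orthogonal case. First I would recall that, by definition, a blue point $b\in B$ is contained in the $\o_\theta$-convex hull of $R$ precisely when no $R$-free $\o_\theta$-wedge has its vertex at $b$; this is just Proposition~\ref{prop:oh_inclusion} applied with $x=b$ and $P=R$. So it suffices to translate the containment condition ``every $\o_\theta$-wedge with vertex on $b$ contains at least one point of $R$'' into the arc language, and this is exactly what Lemma~\ref{lem:oh_inclusion_arcs} does once we reinterpret its ingredients monochromatically over $R$.

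The key point is that the notions of \emph{maximal wedge}, \emph{maximal arc}, and \emph{feasible} all have the natural bichromatic analogues already introduced before Lemma~\ref{lem:rch_bichromatic_inclusion_arcs}: a blue maximal wedge is an $R$-free maximal wedge with vertex on a blue point, a blue maximal arc is the arc induced by such a wedge, and such an arc is feasible if the inducing wedge has size at least $\Theta$. With these substitutions, ``feasible maximal arc of $x$'' in Lemma~\ref{lem:oh_inclusion_arcs} becomes ``feasible blue maximal arc of $b$'', and ``contains a point of $P$'' becomes ``contains a point of $R$''. I would therefore state that running the proof of Lemma~\ref{lem:oh_inclusion_arcs} verbatim, with $P$ replaced by $R$ and $x$ restricted to blue points $b$, yields the equivalence: every $\o_\theta$-wedge with vertex on $b$ meets $R$ if and only if every feasible blue maximal arc of $b$ is intersected by a consecutive set $r_i,r_{i+1},\dots,r_j$ of rays of $\o_\theta$ with $j<i+k$. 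Combining this with Proposition~\ref{prop:oh_inclusion} gives the lemma.

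There is essentially no obstacle here beyond bookkeeping, since the geometry is identical; the only thing worth a sentence of care is that the ``single ray'' condition of the orthogonal case is replaced by the ``consecutive block of fewer than $k$ rays'' condition, because an $\o_\theta$-wedge spans $k$ consecutive rays rather than the two semiaxes spanning a quadrant. Concretely, an $R$-free $\o_\theta$-wedge with vertex on $b$ exists iff some feasible blue maximal wedge of $b$ contains an $\o_\theta$-wedge, which (by the key observation in the proof of Lemma~\ref{lem:oh_inclusion_arcs}) happens iff the associated feasible blue maximal arc is met by two rays $r_i,r_j$ with $j=i+k$, equivalently by a ray-block that is \emph{not} of the restricted form $j<i+k$; negating both sides gives the stated biconditional. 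I would write the proof in three short sentences: invoke Proposition~\ref{prop:oh_inclusion} with $x=b$, $P=R$; invoke the $R$-version of Lemma~\ref{lem:oh_inclusion_arcs}; chain the two equivalences.

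\begin{proof}
  Fix a value of $\theta$ and a blue point $b\in B$.
  By \Cref{prop:oh_inclusion} applied with $x=b$ and $P=R$, the point $b$ is contained in the $\o_\theta$-convex hull of $R$ if, and only if, every $\o_\theta$-wedge with vertex on $b$ contains at least one point of $R$.
  On the other hand, \Cref{lem:oh_inclusion_arcs} applied with $P=R$ and specialized to the point $b$ states that every $\o_\theta$-wedge with vertex on $b$ contains at least one point of $R$ if, and only if, every maximal arc of $b$ induced by an $R$-free maximal wedge of size at least $\Theta$ --- that is, every blue maximal arc of $b$ that is feasible --- is intersected by a set $r_i,r_{i+1},\ldots,r_j$ of rays of $\o_\theta$ such that $j < i + k$.
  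The key geometric fact behind this specialization, already noted in the proof of \Cref{lem:oh_inclusion_arcs}, is that a blue maximal arc of $b$ is met by two rays $r_i$ and $r_j$ with $j=i+k$ exactly when its inducing $R$-free maximal wedge contains an $\o_\theta$-wedge with vertex on $b$, i.e.\ exactly when $b$ witnesses an $R$-free $\o_\theta$-wedge.
  Chaining the two equivalences yields the statement.
\end{proof}
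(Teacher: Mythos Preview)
Your proposal is correct and follows exactly the approach the paper takes: the paper simply states that \Cref{lem:oh_bichromatic_inclusion_arcs} is a rephrasing of \Cref{lem:oh_inclusion_arcs} in the bichromatic setting, without providing an explicit proof. Your argument makes this explicit by applying \Cref{lem:oh_inclusion_arcs} with $P=R$ and $x=b$, which is precisely the intended specialization; if anything, invoking \Cref{prop:oh_inclusion} separately is redundant, since \Cref{lem:oh_inclusion_arcs} already gives the full biconditional directly.
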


We now adapt the algorithm from Subsection~\ref{sec:rch_algorithm}.
The adaptation consists of four steps.
The first step is an additional preprocessing step in which we compute the angle $\Theta$.
The remaining steps are adaptations of those of the original algorithm.

\subparagraph*{Step 0. Computing the angle $\boldsymbol \Theta$.}

To compute the angle $\Theta$, we first sort the lines of $\o$ by orientation in increasing order in $O(k \log k)$ time and $O(k)$ space.
Then, we compute in $O(k)$ time the set of angles $\{ \alpha_1,\ldots,\alpha_k \}$.
We finally obtain $\Theta$ by keeping the smallest angle in the set.
Clearly, this step takes in $O(k \log k)$ time and $O(k)$ space.

\subparagraph*{Step 1. Computing the set of feasible maximal arcs.}

In this step we generalize the Step~1 of the original algorithm to compute the set of blue maximal arcs that are feasible.

We start by computing the set $\mathcal W$ of blue maximal wedges with size at least $\Theta$.
We proceed as follows.
A blue maximal wedge with size at least $\Theta$ is constrained to either the $X^+$ semiaxis, or to one of the $\sfrac{2\pi}{\Theta}$ directions defined by counterclockwise rotating $X^+$ by an integer multiple of $\Theta$.
By means of the algorithm we described in \Cref{subsec:maximal_wedges_and_arcs}, we compute the set of blue maximal wedges constrained to each one of these $\sfrac{2\pi}{\Theta} + 1$ directions.
From the resulting set of wedges, we obtain $\mathcal W$ by keeping those wedges whose size is at least $\Theta$.
By \cref{lem:restricted_maximal_wedges}, we have computed $\mathcal W$ in $O(\sfrac{1}{\Theta} \cdot n \log n)$ time and $O(\sfrac{1}{\Theta} \cdot n)$ space.
Moreover, note that $\mathcal W$ contains $O(\sfrac{1}{\Theta} \cdot n)$ wedges.

We now traverse $\mathcal W$, and process each wedge $w\in\mathcal{W}$ by transforming $w$ into a blue maximal arc that is feasible in $O(1)$ time.
Since each wedge is transformed into a single arc, there are $O(\sfrac{1}{\Theta} \cdot n)$ blue maximal arcs that are feasible.
Clearly, the time complexity of the whole step is $O(\sfrac{1}{\Theta} \cdot n \log n)$ time and $O(\sfrac{1}{\Theta} \cdot n)$ space.

\subparagraph*{Step 2. Computing the list of intersection events.}

In this step we generalize the Step~2 of the original algorithm to compute the sorted list of intersection events.
This step does not need to be modified; nevertheless, since we now have $O(\sfrac{1}{\Theta} \cdot n)$ intersection events, the original complexity is replaced by $O(\sfrac{1}{\Theta} \cdot n \log n)$ time and $O(\sfrac{1}{\Theta} \cdot n)$ space.

\subparagraph*{Step 3. Performing the angular sweep.}

Finally, in this step we generalize the Step~3 of the original algorithm to perform the angular sweep on the set of blue maximal arcs that are feasible.

The required adaptations are the following.
The set $B_\theta$ now denotes the set of blue points contained in the $\o_\theta$-convex hull of $R$.
The upper bound on $N_i$ is increased from four to $\sfrac{2\pi}{\Theta}$.
The variable $n_i$ now denotes the number of blue maximal arcs of $b_i$ that are intersected either by one ray of $\o_\theta$, or by a set $r_u,\ldots,r_v$ of rays of $\o_\theta$ such that $v < u + k$.
Following the condition from \cref{lem:oh_inclusion_arcs}, the array of $\vert B \vert$ Boolean flags used to encode the set $B_\theta$ has the $i$-th flag set to \texttt{True} if $n_i=N_i$, and to \texttt{False} if $n_i < N_i$.

The variable $\rho(a)$ now denotes the range of subindices of the rays of $\o_\theta$ intersecting the arc $a$.
Observe that $\rho(a)$ cannot be empty.
Suppose that $\rho(a) = (u,v)$, $u \leq v$, and, at an intersection event, a ray starts intersecting $a$.
Since the lines of $\o_\theta$ are labeled by increasing orientation and are rotated in the counter-clockwise direction, the range is thus increased to $(u-1,v)$.
If instead the ray stops intersecting $a$, then the range is reduced to $(u,v-1)$.
Finally, the queue $\mathcal{Q}$ now contains at most $2k$ angles instead of four.
Let $\theta_i$ denote the smallest counter clockwise rotation angle for which the ray $r_i\in\o_\theta$ passes over an endpoint of a blue maximal arc.
The queue contains the angles $\theta_i$, $1\leq i \leq 2k$, that are less than $\pi$.
Hence update operations on $\mathcal{Q}$ take $O(\log k)$ time.

The sweep is essentially performed in the same way as explained in the algorithm from Subsection~\ref{sec:rch_algorithm}.
There are slight modifications to the algorithm and an increment in the time and space complexities, consequence of having $O(\sfrac{1}{\Theta} \cdot n)$ intersection events and $O(k)$ angles in $\mathcal{Q}$.
Since the lines of $\o$ are already sorted by slope (refer to Step~0), the $O(n)$ time complexity of the initialization step is replaced by $O(\sfrac{1}{\Theta} \cdot n)$ time.
On the other hand, since $\o$ may not be symmetric, to perform the angular sweep we increment $\theta$ from $0$ to $\pi$ so the rays of $\o_\theta$ sweep all the directions of $\mathbb{S}^1$, and the endpoints of each blue maximal arc are touched by all the lines of $\o_\theta$.
Consider an intersection event $\theta$ for which a ray $r_j\in\o_\theta$ is passing over the endpoint of a maximal arc $a$ of a blue point $b_i$.
Assume that $\rho(a) = (u,v)$.
We process the intersection event as follows:
\begin{itemize}
\item
  If $r_j$ starts intersecting $a$ then $j = u-1$, so  we set $\rho(a) = (u-1,v)$ to add $j$ to $\rho(a)$.
  If instead $r_j$ stops intersecting $a$ then $j = v$, so we set $\rho(a) = (u,v-1)$ to remove $j$ from $\rho(a)$.
\item
  If $\rho(a)$ was changed then we update $n_i$ as follows.
  If $j$ was added to $\rho(a)$ and $v - j = k$ then we decrease $n_i$ by one.
  If instead $j$ was removed from $\rho(a)$ and $v - j = k - 1$ then we increase $n_i$ by one.
\item
  Finally, we update the set of Boolean flags that encode $B_\theta$, obtain the
  next intersection event, and update $\mathcal{Q}$ as explained in the
  algorithm of Subsection~\ref{sec:rch_algorithm}.
\end{itemize}

\begin{lemma}\label{lemma:och}
  The subset of blue points contained in the $\o_\theta$-convex hull of $R$ can be computed and maintained, while $\theta$ is increased from $0$ to $\pi$, in $O(\sfrac{1}{\Theta} \cdot N \log N)$ time and $O(\sfrac{1}{\Theta} \cdot N)$ space, where $N = \max \{ k, \vert R \vert + \vert B \vert \}$.
\end{lemma}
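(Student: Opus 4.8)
The plan is to derive the lemma by combining the complexity bounds already established for Steps~0--3 of the adapted algorithm with an argument that the Boolean array maintained during the sweep correctly encodes, for every $\theta$, the subset $B_\theta$ of blue points that lie in the $\o_\theta$-convex hull of $R$. For correctness I would appeal to \Cref{lem:oh_bichromatic_inclusion_arcs}: a blue point $b_i$ lies in that hull iff every feasible blue maximal arc $a$ of $b_i$ is met by a block $r_u,\dots,r_v$ of rays of $\o_\theta$ with $v<u+k$. The sweep stores, for each feasible arc, the index range $\rho(a)=(u,v)$ of the rays currently meeting it, keeps $n_i$ equal to the number of feasible arcs of $b_i$ that are ``good'' in this sense, and sets the $i$-th flag to \texttt{True} exactly when $n_i=N_i$. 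Since all lines of $\o_\theta$ rotate counterclockwise in unison and are indexed by increasing orientation, the rays meeting $a$ always form a contiguous cyclic block, so each intersection event changes $\rho(a)$ by one unit at a single end; hence the threshold $v-u=k$ --- at which the block first contains (or ceases to contain) two rays of a common line, equivalently the ``bad'' configuration of \Cref{lem:oh_inclusion_arcs} --- is crossed detectably, and the constant-time updates of $\rho(a)$, of $n_i$, and of the flags restore the invariant after every event.

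For the running time I would tally the four steps. Step~0 (sorting the $k$ lines by orientation and scanning them for $\Theta$) costs $O(k\log k)$ time and $O(k)$ space. For Step~1 I would use that every maximal wedge of size at least $\Theta$ contains one of the $O(\sfrac{1}{\Theta})$ directions obtained from $X^+$ by integer multiples of $\Theta$, so one call of the restricted-maximum procedure of \Cref{lem:restricted_maximal_wedges} per direction produces, in $O(\sfrac{1}{\Theta}\cdot n\log n)$ time and $O(\sfrac{1}{\Theta}\cdot n)$ space, $O(\sfrac{1}{\Theta}\cdot n)$ candidate wedges; keeping those of size at least $\Theta$ and converting each to an arc in $O(1)$ time leaves the set $\mathcal A$ of $O(\sfrac{1}{\Theta}\cdot n)$ feasible blue maximal arcs. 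Step~2 turns each arc into its two endpoint directions and sorts the $O(\sfrac{1}{\Theta}\cdot n)$ resulting intersection events, in $O(\sfrac{1}{\Theta}\cdot n\log n)$ time and $O(\sfrac{1}{\Theta}\cdot n)$ space. In Step~3 the initialization merges the sorted event list with the $2k$ ray directions and computes the $n_i$, the $\rho(a)$, and the flags in $O(\sfrac{1}{\Theta}\cdot n+k)$ time; the priority queue $\mathcal Q$ holds at most $2k$ angles, so each of the $O(\sfrac{1}{\Theta}\cdot n)$ events is processed in $O(\log k)$ time and the sweep takes $O(\sfrac{1}{\Theta}\cdot n\log k)$ time in total, while the space never exceeds $O(k+\sfrac{1}{\Theta}\cdot n)$.

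Finally I would collapse $O(k\log k+\sfrac{1}{\Theta}\cdot n\log n+\sfrac{1}{\Theta}\cdot n\log k)$ into the stated form. Since $\Theta$ is the size of a wedge we have $\Theta<2\pi$, so $\sfrac{1}{\Theta}$ is bounded below by a positive constant; together with $k\le N$, $n\le N$, and $\log k,\log n\le\log N$ this yields $k\log k=O(\sfrac{1}{\Theta}\cdot N\log N)$ and absorbs the remaining terms, giving an overall $O(\sfrac{1}{\Theta}\cdot N\log N)$ time and $O(\sfrac{1}{\Theta}\cdot N)$ space bound with $N=\max\{k,\vert R\vert+\vert B\vert\}$.

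The hard part will not be this arithmetic but the two places where the arguments of \Cref{sec:rch} must be re-examined in the $k$-line setting: showing that the $\sfrac{2\pi}{\Theta}$ grid of directions really captures \emph{all} maximal wedges of size at least $\Theta$ and that the restricted-maximum procedure charges at most one such wedge to each blue point per direction --- this is what makes $\vert\mathcal A\vert=O(\sfrac{1}{\Theta}\cdot n)$ and $N_i\le\sfrac{2\pi}{\Theta}$ --- and verifying that the incremental update of $n_i$ faithfully tracks the predicate ``$v<u+k$'' of \Cref{lem:oh_bichromatic_inclusion_arcs}, i.e.\ that an arc flips between good and bad exactly when a ray crossing one of its endpoints makes the index span of $\rho(a)$ reach or leave the value $k$. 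Once these are settled, the remainder is a routine combination of \Cref{lem:restricted_maximal_wedges,lem:oh_inclusion_arcs,lem:oh_bichromatic_inclusion_arcs} with the bookkeeping of Subsection~\ref{sec:rch_algorithm}.
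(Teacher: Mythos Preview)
Your proposal is correct and follows essentially the same approach as the paper: tallying the cost of Steps~0--3, identifying the angular sweep as the dominant step with $O(\sfrac{1}{\Theta}\cdot n)$ events each costing $O(\log k)$ for the priority-queue update, and then collapsing the bound via $k,n\le N$. You are in fact more thorough than the paper, which gives only a two-sentence proof pointing to Step~3 and writing $O(\sfrac{1}{\Theta}\cdot n\cdot\log k)=O(\sfrac{1}{\Theta}\cdot N\log N)$ directly; your explicit correctness argument via \Cref{lem:oh_bichromatic_inclusion_arcs} and your treatment of the $\sfrac{1}{\Theta}$ lower bound to absorb the $O(k\log k)$ term are useful additions that the paper leaves implicit.
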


\begin{proof}
  As in the algorithm from Subsection~\ref{sec:rch_algorithm}, the most expensive step is the execution of the angular sweep (Step~3).
  Since each intersection event is processed in $O(1)$ time, the queue $\mathcal{Q}$ is updated in $O(\log k)$ time, and there are $O(\sfrac{1}{\Theta} \cdot n)$ intersection events, then the time and space complexities of Step~3 are $O(\sfrac{1}{\Theta} \cdot n \cdot \log k) = O(\sfrac{1}{\Theta} \cdot N \log N)$ time and $O(\sfrac{1}{\Theta} \cdot n) = O(\sfrac{1}{\Theta} \cdot N)$ space, where $N = \max \{ k, n = \vert R \vert + \vert B \vert \}$.
\end{proof}

From \cref{lemma:och} we obtain the main result of this subsection.

\begin{theorem}\label{thm:och_separability}
  Given two disjoint sets $R$ and $B$ of points in the plane and a set $\o$ of $k \geq 2$ lines with different orientations, the (possibly empty) set of angular intervals of $\theta \in [0,2\pi)$ for which the $\o_\theta$-convex hull of $R$ is $B$-free (i.e., the $\o_\theta$-convex hull of $R$ is a separator of $R$ and $B$) can be computed in $O(\sfrac{1}{\Theta} \cdot N \log N)$ time and $O(\sfrac{1}{\Theta} \cdot N)$ space, where $N = \max \{ k, \vert R \vert + \vert B \vert \}$.
\end{theorem}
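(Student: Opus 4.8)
The plan is to reduce the theorem directly to \cref{lemma:och} by wrapping the angular sweep described above in a little bookkeeping. First I would run Step~0 to compute $\Theta$, and then Steps~1--3, sweeping $\theta$ from $0$ to $\pi$ while maintaining, as in \cref{lemma:och}, the array of $\vert B\vert$ Boolean flags that encodes $B_\theta$, the set of blue points lying in the $\o_\theta$-convex hull of $R$. Alongside that array I would keep a single integer counter $c$ equal to the number of \texttt{True} flags, i.e.\ $c=\vert B_\theta\vert$; its initial value is available from the $O(\sfrac{1}{\Theta}\cdot n)$-time initialization at $\theta=0$, and since every intersection event flips at most one flag (it involves one ray of $\o_\theta$, one endpoint, one blue maximal arc, one blue point), $c$ can be updated in $O(1)$ extra time per event. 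By \cref{lem:oh_bichromatic_inclusion_arcs}, the $\o_\theta$-convex hull of $R$ is $B$-free precisely when $c=0$.

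With this in place, I would emit the intervals on the fly: open a new angular interval whenever a processed event drives $c$ down to $0$, and close the current interval whenever a processed event makes $c$ positive again. As in the rectilinear case (cf.\ the discussion after \cref{thm:rch_separability}), at each such boundary value of $\theta$ a blue point sits on the boundary of the $\o_\theta$-convex hull of $R$, so the reported intervals are open; and since at most one flag changes per event and there are $O(\sfrac{1}{\Theta}\cdot n)$ events, there are $O(\sfrac{1}{\Theta}\cdot n)$ intervals in total. Because the $\o_\theta$-convex hull is a closed region, an interval still unclosed when the sweep reaches $\theta=\pi$ should be merged with one that begins at $\theta=0$; this is legitimate because all lines of $\o$ pass through the origin, so $\o_\theta=\o_{\theta+\pi}$ and the hull is $\pi$-periodic in $\theta$. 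For the same reason, the answer on the whole range $[0,2\pi)$ is obtained by duplicating every reported interval $(a,b)\subseteq[0,\pi)$ by its $\pi$-translate $(a+\pi,\,b+\pi)$.

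For the complexity I would simply observe that Step~0 costs $O(k\log k)$ time and $O(k)$ space, Steps~1--3 cost $O(\sfrac{1}{\Theta}\cdot N\log N)$ time and $O(\sfrac{1}{\Theta}\cdot N)$ space by \cref{lemma:och}, and the added work---maintaining $c$ and outputting $O(\sfrac{1}{\Theta}\cdot n)$ intervals---is $O(1)$ per event and hence subsumed; so the total is $O(\sfrac{1}{\Theta}\cdot N\log N)$ time and $O(\sfrac{1}{\Theta}\cdot N)$ space with $N=\max\{k,\vert R\vert+\vert B\vert\}$.

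The only delicate point, and the one I would be most careful about, is the correctness of the open-interval bookkeeping: I need that the boundary values of the reported intervals---the intersection events at which $c$ changes to or from $0$---are exactly the $\theta$ at which some blue point transitions between lying inside and outside the $\o_\theta$-convex hull of $R$, and that at such a $\theta$ that blue point lies on the boundary of the hull. This follows from \cref{lem:oh_bichromatic_inclusion_arcs} together with the fact that an endpoint of a feasible blue maximal arc corresponds to a bounding ray of a maximal $R$-free wedge carrying a red point, exactly as in the rectilinear analysis; everything else is routine and rides on \cref{lemma:och} without increasing its cost.
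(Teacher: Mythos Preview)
Your proposal is correct and takes essentially the same approach as the paper: the theorem is derived directly from \cref{lemma:och} by tracking, during the angular sweep, the moments at which $B_\theta$ becomes empty or nonempty, exactly as in the rectilinear case after \cref{lemma:rch}. Your additional details---the counter $c$, the $\pi$-periodicity used to cover $[0,2\pi)$, and the openness of the reported intervals---are all sound and add nothing to the asymptotic cost.
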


There are a couple of remarks regarding the algorithm we described to prove \cref{thm:och_separability}.
First note that the time and space complexities are parametrized by both $\Theta$ and $k$.
If $\sfrac{1}{\Theta}$ is a constant value and $k$ is of the same order of magnitude than $\vert R \vert$ and $\vert B \vert$, then the complexities become $O(n \log n)$ time and $O(n)$ space.
These are the same complexities reported in \cref{thm:rch_separability} for the problem of separability by a rectilinear convex hull.
Second, as in \cref{thm:rch_separability}, the reported angular intervals are maximal and open, and the algorithm does not compute a separating $\o$-convex hull.
To actually compute an $\o$-convex hull separating $R$ from $B$, we first choose a rotation angle in an interval of separability, and then spend additional $O(\sfrac{1}{\Theta} \cdot N \log N)$ time~\cite{alegria_2020}.
Finally, we have an observation regarding the value of $k$, which derives from Observation~2 of~\cite{alegria_2020}.
To state the observation we first need to generalize the notion of stabbing quadrant we introduced in \cref{subsec:rch_separability_lower_bound}.

A connected component of the $\o$-hull of a finite point set $P$ is either (i) a single point of $P$, (ii) a polygonal chain of two segments parallel to the lines of $\o$ that connect two extremal points, or (iii) a closed polygon whose edges are parallel to the lines of $\o$.
The polygon may have ``degenerate edges'', which are line segments connecting its vertices with extremal points of $P$.
As with the rectilinear convex hull, the segments of the polygonal chains and the edges of the polygons are called the \emph{edges} of the $\o$-convex hull.
Each edge is contained in a ray of some $P$-free $\o$-wedge.
We say that such $\o$-wedge is \emph{stabbing $P$}.

\begin{observation}
  \label{obs:non_stabbing_wedges}
  Let $h$ be the number of edges of $\ch$.
  If $k$ is greater than $h$, then for any fixed value of $\theta$ there are $k - h$ lines in $\o_\theta$ that induce $\o_\theta$-wedges that do not stab $P$.
\end{observation}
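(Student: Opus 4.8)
The plan is to combine the containment relation $\oh\subseteq\ch$ with \Cref{lemma:stabbing_quadrants}, generalized to arbitrary orientation sets. First I would recall that, just as $\rcht\subseteq\ch$ holds for orthogonal convexity, the inclusion $\oh\subseteq\ch$ holds for every set $\o$ and every orientation, since an open halfplane empty of $P$ can be split into two $P$-free $\o$-wedges (each $\o$-wedge has size at least $\Theta<\pi$, so a halfplane is covered by finitely many of them with a common apex). Consequently any $P$-free $\o$-wedge that stabs $P$, meaning one of its bounding rays contains an edge of $\oh$, must intersect $\ch$: its apex is a vertex of $\oh$, hence lies in $\ch$, while the wedge itself is $P$-free and therefore cannot contain $\ch$.

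The heart of the argument is the generalization of \Cref{lemma:stabbing_quadrants}: if a $W_{i+1}^{i+k}$-wedge stabs $P$, then at least one edge of $\ch$ has its direction in the wedge $W_{i+1,i+k}$. I would prove this by the same reasoning as in the rectilinear case. Take a $P$-free $\o_\theta$-wedge $w$ with apex $v$ and bounding rays $\rho,\rho'$ parallel to $r_{i+1}$ and $r_{i+k}$, one of which, say $\rho$, contains an edge $e$ of $\oh$. Since $\oh\subseteq\ch$ and $v\in\oh\subseteq\ch$, the ray $\rho$ leaves $\ch$ along some edge of $\ch$; walking along $\rho$ from $v$, let $\overline{uu'}$ be the edge of $\ch$ crossed when $\rho$ exits $\ch$. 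The interior of $\ch$ lies locally on the side of $\overline{uu'}$ containing $v$, and because $w$ is $P$-free and $\rho'$ is the other bounding ray of $w$, the direction of $\overline{uu'}$ (taken in the clockwise order of the hull, as in the definition preceding \Cref{lemma:stabbing_quadrants}) must fall strictly between the directions of $r_{i+1}$ and $r_{i+k}$, i.e., inside $W_{i+1,i+k}$; otherwise $w$ would contain a point of $\ch$, hence of $P$, near that edge, contradicting that $w$ is $P$-free. This gives the claim.

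With this in hand the counting argument is immediate. The $2k$ rays $r_1,\dots,r_{2k}$ of $\o_\theta$ partition the circle of directions into $2k$ open arcs, and each $\o_\theta$-wedge $W_{i+1}^{i+k}$ corresponds to the union of the arcs from $r_{i+1}$ to $r_{i+k}$. However, an $\o_\theta$-wedge stabs $P$ only if some edge-direction of $\ch$ lies strictly inside the corresponding open span $W_{i+1,i+k}$. Since consecutive wedges $W_{i+1,i+k}$ and $W_{i+2,i+k+1}$ overlap, the right bookkeeping is: the index $i$ for which $W_{i+1}^{i+k}$ can stab $P$ is controlled by which of the $2k$ elementary arcs between consecutive rays contains an edge-direction of $\ch$. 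There are $h$ edge-directions of $\ch$, so at most $h$ of the elementary arcs are ``occupied'', and a wedge $W_{i+1}^{i+k}$ can stab $P$ only if its span contains at least one occupied arc; by the cyclic structure, each occupied arc is contained in exactly $k$ consecutive such spans, but the precise statement we need is weaker — I would argue that the number of indices $i$ for which $W_{i+1}^{i+k}$ stabs $P$ is at most $h$, hence at least $k-h$ lines (equivalently, indices) induce non-stabbing $\o_\theta$-wedges. Concretely, for each of the $2k$ wedge-indices $i$, associate to a stabbing wedge the occupied elementary arc of smallest index lying in its span; this assignment is injective across the $k$ lines because a line $\ell_i$ and its two rays $r_i,r_{i+k}$ determine two opposite wedges $W_{i+1}^{i+k}$ and $W_{i+k+1}^{i}$, and an edge-direction can be charged to the line only once. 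Thus at most $h$ lines give stabbing wedges, and the remaining $k-h$ lines give wedges that do not stab $P$.

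The main obstacle I anticipate is the precise cyclic counting in the last paragraph: matching ``edges of $\ch$'' to ``lines of $\o_\theta$'' cleanly, keeping track of the fact that each line contributes a pair of opposite wedges and that an edge-direction of $\ch$ falls in the span of several consecutive $\o_\theta$-wedges but should be charged to only one line. Getting the injectivity of the charging map exactly right — so that the bound is $k-h$ and not something weaker like $k-2h$ — is where care is needed; the geometric lemma ($\oh\subseteq\ch$ plus the generalized \Cref{lemma:stabbing_quadrants}) is routine given what is already in the paper.
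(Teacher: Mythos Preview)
The paper does not prove this statement: it is recorded as an observation and attributed to Observation~2 of~\cite{alegria_2020} (which is the rectilinear result restated here as \cref{lemma:stabbing_quadrants}). So there is no in-paper argument to compare against; the paper is simply asserting that the claim follows from the generalization of that lemma.

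Your overall plan---prove the $\o$-analogue of \cref{lemma:stabbing_quadrants} and then count---is exactly the derivation being invoked, and your sketch of the generalized stabbing lemma is fine. Your concern about the counting step, however, is well founded, and the charging argument you wrote does not work as stated. If ``line $\ell$ induces a stabbing $\o_\theta$-wedge'' is read as ``some $\o_\theta$-wedge bounded by a ray of $\ell$ stabs $P$'', the bound $k-h$ can fail: with $k=4$ equally spaced lines and a triangle whose three edge directions fall in the elementary arcs $(0^\circ,45^\circ)$, $(90^\circ,135^\circ)$, $(225^\circ,270^\circ)$, every one of the four lines bounds a stabbing wedge, yet $k-h=1$. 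The reading under which the pigeonhole is immediate---and the one the paper's illustrating sentence after the observation points to---associates to each of the $k$ gaps between consecutive lines of $\o_\theta$ the antipodal pair of $\o_\theta$-wedges lying in that gap. The $h$ edge directions of $\ch$ occupy at most $h$ of these $k$ pairs, and by the generalization of \cref{lemma:stabbing_quadrants} no wedge from the remaining $k-h$ pairs can stab $P$. With that reading the count is a single pigeonhole and no charging scheme is needed; what you should fix is not the injectivity argument but the interpretation of which wedges a line ``induces''.
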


\cref{obs:non_stabbing_wedges} implies that, if $k$ is greater than the number $h$ of edges of $\ch[R]$, then a separating $\o$-convex hull can be constructed using only $k-h$ of the lines of $\o$.
In \cref{fig:rch_disconnected} for example, any line added to the set $\o$ lying on the blue region (hence having an orientation greater than the orientation of $\ell_1$ and smaller than the orientation of $\ell_2$), induce $\o_\theta$-wedges that do not stab the point set $P$.

\subsubsection{Lower bound on the number of intervals of separability}
\label{subsubsec:oh_lowerbounds}

We now adapt the construction from Subsection~\ref{subsec:rch_separability_lower_bound} to obtain a $\Omega(n)$ bound on the number of intervals of separability.
For the sake of simplicity, we first describe the construction using a set $\o$ with $k=3$ lines with orientations $0$, $\frac{\pi}{3}$, and $\frac{2}{3}\pi$.
We later show how the construction can be extended to a set with more than three lines with arbitrary orientations.

Given two points $p$ and $q$, let $\ell_{pq}$ be the line through $p$ and $q$ directed from $p$ to $q$.
Let $C(p,q)$ be the circular arc spanned by $p$, $q$, and the angle $\frac{\pi}{3}$, that is contained in the right halfplane supported by $\ell_{pq}$.
We denote with $r(p,q)$ the radius of $C(p,q)$.
The convex hull of $R$ is again a rhombus whose diagonals are parallel to the coordinate axes; see \Cref{fig:och_adaptation_red_points:1}.
The points $v_2$ and $v_4$ lie outside the region bounded by $C(v_1,v_3) \cup C(v_3,v_1)$, so the interior angles of the rhombus at $v_2$ and $v_4$ are less than $\frac{\pi}{3}$.
Let $\alpha$ be the orientation of the line through $v_3$ and $v_4$.
For all $\theta$ in the interval $\varphi = [\frac{\pi}{3}-\alpha,\alpha]$, the direction of each edge of the rhombus lies in either $W_{2,3}$ or $W_{5,6}$; see \Cref{fig:och_adaptation_red_points:2}, left.
From this fact and a straightforward generalization of the arguments of Subsection~\ref{subsec:rch_separability_lower_bound} we have that, for all $\theta\in\varphi$, the $\o_\theta$-convex hull of $\{ v_1,\ldots,v_4 \}$ is formed by three connected components: the point $v_2$, the point $v_4$, and a rhombus inscribed in $C(v_1,v_3) \cup C(v_3,v_1)$ whose sides are parallel to $\ell_2$ and $\ell_3$; see \Cref{fig:och_adaptation_red_points:2}, right.
By intersecting all such rhombi for all the rotation angles in $\varphi$, we obtain the rhombus highlighted in \Cref{fig:och_adaptation_red_points:3}.
% reviewer 33 comment
% The rest of the red points lie in this region, and are omitted for the sake of clarity.
% These points are contained in the $\o_\theta$-convex hull of $R$ for all $\theta\in\varphi$.
Note that any red point lying in this region is contained in the $\o_\theta$-convex of $R$ for all $\theta\in\varphi$.
Hence, we may add as many red points as desired without affecting the construction.

\begin{figure}[ht]
  \centering
  \subcaptionbox
  {\label{fig:och_adaptation_red_points:1}
    $\ch[R]$ is a rhombus whose diagonals are parallel to the coordinate axes.
  }
  [.25\linewidth][c]{\includegraphics[page=1]{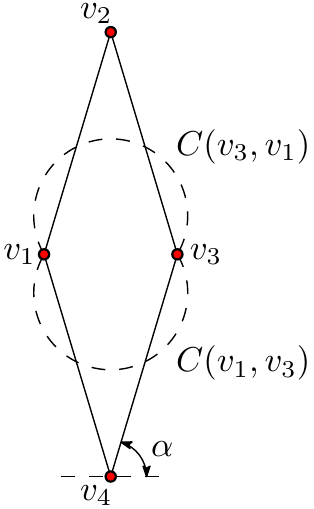}}
  ~
  \subcaptionbox
  {\label{fig:och_adaptation_red_points:2}
    The $\o$-convex hull has three connected components: the points $v_2$ and $v_4$, and a rhombus whose sides are parallel to the lines of $\o_\theta$.
  }
  [.45\linewidth][c]{\includegraphics[page=2]{separability_inclusion_oh}}
  ~
  \subcaptionbox
  {\label{fig:och_adaptation_red_points:3}
    The remaining points lie in the highlighted rhombus.
  }
  [.2\linewidth][c]{\includegraphics[page=3]{separability_inclusion_oh}}
  
  \caption
  {
    Adapting the construction of \Cref{subsec:rch_separability_lower_bound}: The set of red points.
    % The adaptation is based on a rhombus whose diagonals are parallel to the coordinate axes.
    % The set $\o$ is shown at the left, along with the directions of the edges of the rhombus.
  }
  \label{fig:och_adaptation_red_points}
\end{figure}

% The lower bound on the number of angular intervals of separability is
% achieved with the point set shown in \Cref{fig:och_separability_intervals}.
% A red point lies on each vertex of $\mathcal{R}$.
% The rest of the red points lie in the interior of the region highlighted in red and are omitted for the sake of clarity.
The set of blue points is shown in \Cref{fig:och_adaptation_blue_points:1,fig:och_adaptation_blue_points:2}.
The points lie in the interior of $\ch[R]$, on a circle concentric to
$C(v_1,v_3)$ with radius $r(v_1,v_3) - \varepsilon$, for $0 < \varepsilon < r(v_1,v_3)$.
Note that the observations and lemmas from Subsection~\ref{subsec:rch_separability_lower_bound} can be applied to this construction with minor and straightforward modifications.
Therefore, the bichromatic point set has $\Omega(n)$ angular intervals of separability.

\begin{figure}[ht]
  \centering

  \subcaptionbox
  {\label{fig:och_adaptation_blue_points:1}
    The points lie on a circle concentric to $C(v_3,v_1)$ and radius $r(v_3,v_1) - \varepsilon$.
  }
  [.18\linewidth][c]{\includegraphics[page=1]{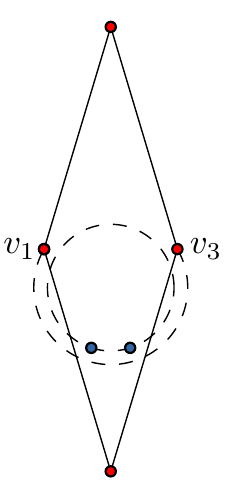}}
  ~
  \subcaptionbox
  {\label{fig:och_adaptation_blue_points:2}
    While incrementing $\theta$, the $\o_\theta$-convex hull of $R$ captures one blue point at a time, generating $\Omega(n)$ angular intervals of separability.
  }
  [.5\linewidth][c]
  {%
    \includegraphics[page=2]{separability_intervals_oh}%
    ~
    \includegraphics[page=3]{separability_intervals_oh}%
    ~
    \includegraphics[page=4]{separability_intervals_oh}%
  }
  ~
  \subcaptionbox
  {\label{fig:och_adaptation_blue_points:3}
    When $k>3$, for the construction we use a pair of consecutive lines of $\o$.
  }
  [.25\linewidth][c]{\includegraphics[page=5,scale=0.8]{separability_intervals_oh}}

  \caption
  {
    Adapting the construction of \Cref{subsec:rch_separability_lower_bound}: The set of blue points is described in Figures \subref{fig:och_adaptation_blue_points:1} and \subref{fig:och_adaptation_blue_points:2}, and a generalization to a set $\o$ with more than three lines in Figure \subref{fig:och_adaptation_blue_points:3}.
    % A bichromatic point set with $\Omega(n)$ angular intervals of separability.
    % \subref{fig:och_separability_intervals:1} The red points lie on the vertices of a rhombus, and inside the region highlighted in red.
    % The blue points lie on a circle concentric to $C(a,c)$ and radius $r(a,c) - \varepsilon$, $0 < \varepsilon < r(a,c)$.
    % \subref{fig:och_separability_intervals:2} While incrementing $\theta$, the $\o_\theta$-convex hull of $R$ captures one point at a time, generating $\Omega(n)$ angular intervals of separability.
  }
  \label{fig:och_adaptation_blue_points}
\end{figure}

Consider now a set $\o$ of $k>3$ lines with arbitrary orientations.
To extend the previous construction to this case, first choose any pair of consecutive lines $\ell_i$ and $\ell_{i+1}$ in $\o$.
Then create the point sets as described above, using a rhombus $\mathcal{R}$ such that the internal angles at two opposite vertices are less than the size of the wedge $W_{i,i+1}$.
As shown in \Cref{fig:och_adaptation_blue_points:3}, the directions of the edges of $\mathcal{R}$ lie either in $W_{i,i+1}$ or $W_{i+k,i+k+1}$.
It is thus not hard to see that the arguments above still hold, so the construction has $\Omega(n)$ angular intervals of separability.

\subsubsection{Inclusion detection}
\label{subsec:o_hull_inclusion}

It is not hard to see that, with minor modifications, the statement and proof of \cref{lem:rch_containment} can be generalized to $\o$-convexity.
Hence, in the algorithm we described to prove \cref{thm:och_separability},
if there is a value of $\theta$ for which all the Boolean flags of the set that encodes $B_\theta$ are \texttt{True}, then the $\o_\theta$-convex hull of $B$ is contained in the $\o_\theta$-convex hull of $R$.
We obtain the following theorem as a consequence of \cref{lemma:och}.

\begin{theorem}\label{thm:och_containment}
 Given two disjoint sets $R$ and $B$ of points in the plane, the (possibly empty) set of angular intervals of $\theta \in (0,2\pi]$ for which the $\o_\theta$-convex hull of $B$ is contained in the $\o_\theta$-convex hull of~$R$ can be computed in $O(\sfrac{1}{\Theta} \cdot N \log N)$ time and $O(\sfrac{1}{\Theta} \cdot N)$ space, where $N = \max \{ k, \vert R \vert + \vert B \vert \}$.
\end{theorem}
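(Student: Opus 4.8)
The plan is to follow the rectilinear argument of \cref{subsec:rch_inclusion} almost verbatim, replacing quadrants by $\o$-wedges and invoking \cref{lemma:och} in place of \cref{lemma:rch}.

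First I would establish the $\o$-convex analogue of \cref{lem:rch_containment}: \emph{if every point of $B$ lies in the interior of the $\o_\theta$-convex hull of $R$, then the $\o_\theta$-convex hull of $B$ lies in the interior of the $\o_\theta$-convex hull of $R$}. Fix $\theta$ and let $x$ be a point in the interior of the $\o_\theta$-convex hull of $B$. By \cref{prop:oh_inclusion}, every $\o$-wedge with vertex $x$ contains some blue point $b$. Translating such a wedge $w_x$ so that its vertex moves from $x$ to $b$ produces a wedge $w_b \subseteq w_x$: this inclusion holds because every $\o$-wedge is a convex cone --- its size $\alpha_i$ equals $\pi$ minus the positive angle between two consecutive lines of $\o$, hence is strictly less than $\pi$ --- and the translate of a convex cone by a vector pointing into the cone is contained in it. Since $b$ is interior to the $\o_\theta$-convex hull of $R$, \cref{prop:oh_inclusion} yields a red point inside $w_b \subseteq w_x$. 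Hence every $\o$-wedge with vertex $x$ contains a red point, and \cref{prop:oh_inclusion} places $x$ in the interior of the $\o_\theta$-convex hull of $R$, which proves the claim.

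Next I would combine this lemma with \cref{lemma:och}. Since $B$ is contained in the $\o_\theta$-convex hull of $B$, the containment of the $\o_\theta$-convex hull of $B$ in the $\o_\theta$-convex hull of $R$ holds if and only if every point of $B$ lies in the $\o_\theta$-convex hull of $R$, i.e.\ if and only if $B_\theta = B$, where $B_\theta$ is the set maintained by the sweep of \cref{lemma:och}. I would therefore run that sweep while $\theta$ increases from $0$ to $\pi$, maintaining not only the array of $\vert B \vert$ Boolean flags encoding $B_\theta$ but also a counter recording how many of them are currently \texttt{True}; the counter is updated in $O(1)$ per intersection event, so it does not affect the asymptotics. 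The maximal angular intervals on which the counter equals $\vert B \vert$ are precisely the sought intervals of containment, and they can be emitted as the counter crosses this threshold. Because the $\o$-convex hull depends only on the \emph{lines} of $\o$, which return to themselves after a rotation by $\pi$, the configuration is $\pi$-periodic, so the intervals found in $[0,\pi)$ together with their translates by $\pi$ describe all of $[0,2\pi)$. The time and space bounds are inherited directly from \cref{lemma:och}, yielding $O(\sfrac{1}{\Theta} \cdot N \log N)$ time and $O(\sfrac{1}{\Theta} \cdot N)$ space.

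The only genuinely delicate point is the generalized containment lemma: one must verify that $\o$-wedges are convex cones so that the inclusion $w_b \subseteq w_x$ is legitimate, and that the strict (interior) containment tracked by the sweep through \cref{prop:oh_inclusion} matches the containment claimed in the statement. Everything else is a routine transcription of \cref{subsec:rch_inclusion} to the $\o$-convex setting.
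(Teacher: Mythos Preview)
Your proposal is correct and follows essentially the same route as the paper: generalize \cref{lem:rch_containment} to $\o$-wedges (the paper merely asserts this ``with minor modifications'', whereas you spell out the convex-cone argument for $w_b\subseteq w_x$), then read off the containment intervals from the sweep of \cref{lemma:och} by tracking when all Boolean flags are \texttt{True}. The counter and the explicit $\pi$-periodicity remark are details the paper leaves implicit, but the overall structure is identical.
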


We now combine the constructions we described in Subsections~\ref{subsec:rch_inclusion}~and~\ref{subsubsec:oh_lowerbounds} to obtain a point set with $\Omega(n)$ angular intervals of containment.
We assume that $\o$ contains $k=3$ lines with orientations $0$, $\frac{\pi}{3}$, and $\frac{2}{3}\pi$.
The construction can be extended to a set with more than three lines with arbitrary orientations, in a similar way as we described in Subsection~\ref{subsubsec:oh_lowerbounds}.
The point set is illustrated in \Cref{fig:och_inclusion_intervals}.
The adaptation is based on the rhombus we used in the construction from Subsection~\ref{subsubsec:oh_lowerbounds}; refer again to \Cref{fig:och_adaptation_red_points}.
Three red points lie on vertices of the rhombus, and two red points on the intersections between the rhombus and a vertical line.
The line is chosen such that the region bounded by $C(v_3,v_4) \cup \overline{v_3 v_4}$ does not contain $v_1$.
% reviewer #3 comments
%The rest of the red points lie in the interior of the region highlighted in red, and are omitted for the sake of clarity.
Note that any red point lying in this region highlighted in red is contained in the $\o_\theta$-convex of $R$.
Hence, we may add as many red points as desired without affecting the construction.
The blue points lie in the interior of the triangle with vertices $v_1$,$v_3$, and $v_4$, on a circle concentric to $C(v_3,v_4)$ with radius $r(v_3,v_4) - \varepsilon$, for $0 < \varepsilon < r(v_3,v_4)$.
As in Subsections~\ref{subsec:rch_inclusion} the points are spread so that, while rotating the lines of $\o$ around the origin, the $\o$-convex hull of $R$ loses one blue point at a time.
From similar arguments as those made on Subsections~\ref{subsec:rch_inclusion}~and~\ref{subsubsec:oh_lowerbounds}, the bichromatic point set has $\Omega(n)$ angular intervals of containment.

\begin{figure}[ht]
  \centering

  \subcaptionbox
  {\label{fig:och_inclusion_intervals:1}Construction of the point set.}
  [.2\linewidth][c]{\includegraphics[page=1]{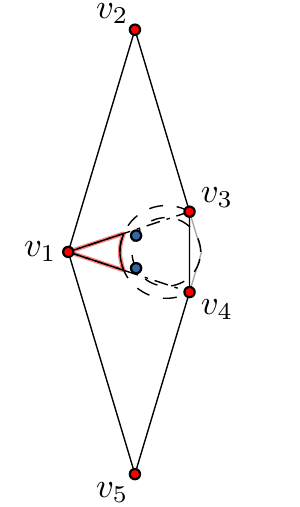}}
  ~
  \subcaptionbox
  {\label{fig:och_inclusion_intervals:2}
    While rotating the lines of $\o$ around the origin, the $\o$-convex hull of $R$ loses one point at a time, generating $\Omega(n)$ angular intervals of containment.
  }
  [.75\linewidth][c]{%
    \includegraphics[page=2]{inclusion_intervals_oh}%
    \quad
    \includegraphics[page=3]{inclusion_intervals_oh}%
    \quad
    \includegraphics[page=4]{inclusion_intervals_oh}%
  }

  \caption
  {
    A bichromatic point set with $\Omega(n)$ angular intervals of containment.
    The red points lie on the boundary of a rhombus, and in the interior of the red highlighted region.
    The blue points lie on a circle concentric to $C(v_3,v_4)$ with radius $r(v_3,v_4) - \varepsilon$.
  }
  \label{fig:och_inclusion_intervals}
\end{figure}

We summarize the lower bounds discussions of \Cref{subsec:o_hull_inclusion,subsubsec:oh_lowerbounds} in the following proposition.

\begin{proposition}
There exist disjoint sets $R$ and $B$ of red and blue points in the plane that induce $\Omega(n)$ intervals of $\theta$ in which either i) the $\o$-convex hull of $R$ is $B$-free or ii) the $\o$-convex hull of $R$ contains the $\o$-convex hull of $B$, where $n = \vert R \vert + \vert B \vert$ and $R$ may have $O(1)$ points.
\end{proposition}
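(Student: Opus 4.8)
The plan is to observe that the proposition merely packages the two explicit bichromatic families constructed in \Cref{subsubsec:oh_lowerbounds} and \Cref{subsec:o_hull_inclusion}, and to check that each family is realized with $|R|=O(1)$ and carries $\Omega(n)$ alternations of the relevant relation. For statement~(i) I would take $R=\{v_1,\dots,v_4\}$ to be the vertices of the rhombus of \Cref{fig:och_adaptation_red_points}, optionally augmented by arbitrarily many red points placed in the central highlighted rhombus; by the stabbing-wedge analysis those interior red points lie in $\oh[R]$ for every $\theta$ in the relevant interval $\varphi=[\frac{\pi}{3}-\alpha,\alpha]$, hence they do not change the part of $\oh[R]$ that matters for the sweep. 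The blue set is placed on a circle concentric with $C(v_1,v_3)$ of radius $r(v_1,v_3)-\varepsilon$, with $\Theta(n)$ points spread evenly along the sub-arc swept by the free vertices of the central component of $\oh[R]$. First I would record, via the generalizations of \Cref{lemma:opposite_quadrants,lemma:stabbing_quadrants} already invoked in \Cref{subsubsec:oh_lowerbounds}, that for all $\theta\in\varphi$ this central component is a rhombus inscribed in $C(v_1,v_3)\cup C(v_3,v_1)$ with two vertices pinned at $v_1,v_3$ and the other two moving along the two arcs; the decisive point is that this motion is strictly monotone in $\theta$, so each blue point is captured by $\oh[R]$ and then released exactly once, at most one blue point lies in $\oh[R]$ at a time, and the $\theta$-set where $\oh[R]$ is $B$-free splits into $\Omega(n)$ pairwise-disjoint open intervals.

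For statement~(ii) I would run the identical argument on the construction of \Cref{fig:och_inclusion_intervals}: $R$ consists of the five boundary points of that figure (three rhombus vertices and the two intersection points of the rhombus with a vertical line chosen so that $v_1\notin C(v_3,v_4)\cup\overline{v_3v_4}$), again optionally padded by red points in the red highlighted safe region, and $B$ lies on a circle concentric with $C(v_3,v_4)$ of radius $r(v_3,v_4)-\varepsilon$. Using the $\o$-convex version of \Cref{lem:rch_containment} recorded in \Cref{subsec:o_hull_inclusion}, one has $\oh[B]\subseteq\oh[R]$ iff every blue point lies in $\oh[R]$; since the reflex vertex of the central $L$-shaped component of $\oh[R]$ moves monotonically along $C(v_3,v_4)$ as $\theta$ increases across $\varphi$, it passes the blue points one at a time, so the containment relation toggles $\Omega(n)$ times and again yields $\Omega(n)$ disjoint intervals. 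In both cases $|R|=O(1)$ when no padding red points are added, which justifies the final clause, and both constructions extend verbatim to any $\o$ with $k>3$ lines of arbitrary orientation by fixing a consecutive pair $\ell_i,\ell_{i+1}$ and using a rhombus whose two opposite interior angles are smaller than the size of $W_{i,i+1}$, exactly as in \Cref{fig:och_adaptation_blue_points}. The proposition then follows by exhibiting the two families side by side.

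The step I expect to be the main obstacle is turning "spread evenly so that at most one blue point changes status" into a rigorous statement. Concretely, one must show that the map sending $\theta\in\varphi$ to the position of the moving vertex (resp. the reflex vertex) on its circular arc is a continuous, strictly monotone injection whose image is an arc of length at least a positive constant times the length of $\varphi$; only then can $\Theta(n)$ blue points be inserted with pairwise-disjoint, nonempty capture windows, none of which contains two of them. This reduces to a short, explicit computation with the inscribed-rhombus family (and its inscribed-$L$ analogue), essentially the same computation already used for the rectilinear hull in \Cref{subsec:rch_separability_lower_bound,subsec:rch_inclusion}; once monotonicity is established, the interval count and the reduction from $k=3$ to general $k$ are routine.
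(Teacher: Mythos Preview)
Your proposal is correct and follows essentially the same approach as the paper: the proposition is indeed just a summary of the two constructions in \Cref{subsubsec:oh_lowerbounds} and \Cref{subsec:o_hull_inclusion}, and you have reproduced both faithfully, including the $O(1)$-red-points observation and the extension to $k>3$ via a consecutive pair $\ell_i,\ell_{i+1}$. The monotonicity step you flag as the main obstacle is treated only pictorially in the paper as well (``captures one blue point at a time''), so your proposal is, if anything, more explicit than the paper about what remains to be checked.
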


\subsection{The $\boldsymbol{\ob}$-convex hull}
\label{subsec:obh}

Let $\ob$ denote a set of orientations formed by two lines with orientations $0$ and $\beta$.
An \emph{$\ob$-quadrant} is one of the four open wedges that result from subtracting the lines of $\ob$ from the plane. The $\ob$-convex hull of a finite point set $P$, denoted with $\obh$, is the set
\[
  \obh = \mathbb{R}^{2} \setminus \bigcup_{q\in\mathcal{Q}}q,
\]
where $\mathcal{Q}$ denotes the set of all $P$-free $\ob$-quadrants of the plane~\cite{alegria_2018}.
In this subsection we solve the following problem.

\begin{problem}
  \label{problem:obh}
  Compute the set of values of $\beta\in(0,\pi)$ for which the $\ob$-convex hull of $R$ contains no points of $B$.
\end{problem}

For the sake of simplicity, throughout this section we assume the set $R \cup B$ not only contains no three points on a line, but also no pair of points on a horizontal line.
To solve \Cref{problem:obh}, we adapt the results from \cref{sec:rch} to find the values of $\beta \in (0,\pi)$ for which $\obh[R]$ is $B$-free; see \Cref{fig:obh_bichromatic_inclusion}.
We start with the adaptation of \cref{prop:rch_inclusion}, which derives directly from the definition of $\ob$-convex hull.

\begin{figure}[ht]
  \centering

  \subcaptionbox
  {\label{fig:obh_bichromatic_inclusion:1}
    $\obh[R]$ is $B$-free, and is formed by the red points and a line segment with orientation $\beta$.
  }
  [0.3\linewidth][c]
  {\includegraphics[scale=1,page=1]{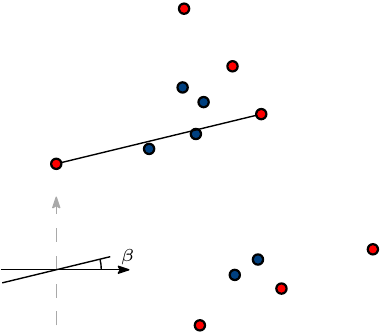}}
  ~
  \subcaptionbox
  {\label{fig:obh_bichromatic_inclusion:2}
    $\obh[R]$ contains some points of $B$, and is formed by three connected components.
  }
  [0.3\linewidth][c]
  {\includegraphics[scale=1,page=2]{bichromatic_inclusion_obh}}
  ~
  \subcaptionbox
  {\label{fig:obh_bichromatic_inclusion:3}
    $\obh[R]$ contains all the point of $B$, and is formed by two connected components.
  }
  [0.3\linewidth][c]
  {\includegraphics[scale=1,page=3]{bichromatic_inclusion_obh}}

  \caption
  {
    The sets $R$, $B$, and $\obh[R]$ for three different values of $\beta\in (0,\pi)$.
    The set $\ob$ and the coordinate axes are shown in the bottom-left corner of each figure.
    % $\obh[R]$ is \subref{fig:obh_bichromatic_inclusion:1} the red points and a $B$-free line segment with slope $\tan(\beta)$, \subref{fig:obh_bichromatic_inclusion:2} formed by three connected components that contain some points of $B$, and \subref{fig:obh_bichromatic_inclusion:3} formed by two connected components that contain all the point of $B$.
  }
  \label{fig:obh_bichromatic_inclusion}
\end{figure}

\begin{proposition}\label{prop:obh_inclusion}
A point $x\in\R$ is contained in $\obh$ if, and only if, every $\ob$-quadrant with vertex on $x$ contains at least one point of $P$.
\end{proposition}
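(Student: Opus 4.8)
The plan is to unfold the definition of $\obh$ and reduce everything to one elementary fact about wedges: once its vertex is translated to the origin, an $\ob$-quadrant is an open convex cone of aperture $\beta$ or $\pi-\beta$ (both in $(0,\pi)$), and such a cone is closed under addition. Consequently, translating the vertex of an $\ob$-quadrant $q$ to a point lying in the interior of $q$ yields an $\ob$-quadrant $q'\subseteq q$ of the same type, whereas translating the vertex \emph{outward} along the bisector of $q$ yields an $\ob$-quadrant $q''\supseteq q$ whose interior contains the old vertex. I would state this monotonicity fact first and then use it twice.

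Write $\mathcal Q$ for the family of all $P$-free $\ob$-quadrants, so that by \cref{problem:obh}'s ambient definition $x\in\obh$ exactly when $x\notin q$ for every $q\in\mathcal Q$. For the implication ``$x\in\obh \Rightarrow$ every $\ob$-quadrant with vertex on $x$ meets $P$'' I would argue the contrapositive: if some $\ob$-quadrant $\tilde q$ with vertex $x$ were $P$-free, push its vertex slightly outward along its bisector to obtain $q''$; then $x$ lies in the interior of $q''$, and since $P$ is finite a small enough push keeps $q''$ disjoint from $P$, so $q''\in\mathcal Q$ and $x\notin\obh$. For the converse, assume every $\ob$-quadrant with vertex on $x$ contains a point of $P$ but, for contradiction, $x\in q$ for some $q\in\mathcal Q$; since an open quadrant never contains its own vertex, the vertex $y$ of $q$ differs from $x$, so $x\in\operatorname{int}(q)$, and translating $q$ by $x-y$ produces an $\ob$-quadrant $q'$ with vertex $x$ and $q'\subseteq q$. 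Then $q'$ would be $P$-free, contradicting the hypothesis. This mirrors, with $\frac{\pi}{2}$ replaced by $\beta$, the (unstated) proof of Proposition~\ref{prop:rch_inclusion}.

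The only place that needs care is the outward perturbation in the first implication: one must guarantee that enlarging $\tilde q$ a little does not swallow a point of $P$. This is immediate whenever no point of $P$ lies on a bounding ray of $\tilde q$; a point on the horizontal bounding ray would place two points of $R\cup B$ on a common horizontal line, which is excluded by the general-position assumption adopted in this subsection, and in the remaining degenerate cases the conclusion still follows from a routine limiting argument using that $P$ is finite and $\tilde q$ is open. Everything else is a one-line consequence of the cone-monotonicity fact together with the definition of $\obh$.
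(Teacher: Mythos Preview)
The paper provides no proof here; it says only that the statement ``derives directly from the definition of $\ob$-convex hull.'' Your second implication (every $\ob$-quadrant at $x$ meets $P$ $\Rightarrow$ $x\in\obh$), via translating a $P$-free quadrant to have vertex $x$ and invoking cone monotonicity, is correct and is essentially the only content one would expect.

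Your first implication, however, has a real gap. The outward push fails exactly in the case you flag: if some $p\in P$ lies on a bounding ray of $\tilde q$, then \emph{every} translate of $\tilde q$ whose interior contains $x$ also contains $p$ in its interior, so none of them is $P$-free. Your general-position appeal does not help, since the statement is for an arbitrary $x\in\R$, not for $x\in R\cup B$; the hypothesis that no two points of $R\cup B$ share a horizontal line says nothing about the horizontal line through a generic $x$. In fact no ``routine limiting argument'' can exist, because under the set-membership reading the biconditional is simply false at such points: for $\beta=\pi/2$, $P=\{(0,1),(0,-1)\}$ (which satisfies all of the paper's general-position hypotheses) and $x=(0,0)$, every $\ob$-quadrant with vertex $x$ is $P$-free, yet one checks directly that every $\ob$-quadrant whose interior contains $x$ also contains a point of $P$, so $x\in\obh$.

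What dissolves the difficulty is the paper's convention, made explicit in \Cref{sec:rch} (``we are considering strict containment, so a blue point lying on the boundary of $\rcht[R]$ is not contained in $\rcht[R]$'') and echoed in the remark after \Cref{prop:rch_inclusion}. Reading ``contained in $\obh$'' as ``in the interior of $\obh$'', the forward direction needs no perturbation at all: if $\tilde q$ with vertex $x$ is $P$-free then $\tilde q\in\mathcal Q$, so the open set $\tilde q$ lies in $\R\setminus\obh$ and has $x$ in its closure, whence $x\notin\operatorname{int}(\obh)$.
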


As in \cref{sec:rch}, we consider $\ob$ to be not only a set of two lines, but also the set of four rays in which the lines are split by the origin. We adapt the definition of feasible maximal arc as follows: we say that a maximal arc is \emph{feasible} if it is induced by a $P$-free maximal wedge constrained to either the $X^+$ or the $X^-$ semiaxis. See \Cref{fig:obh_feasible_arcs}. 

\begin{figure}[ht]
  \centering

  \subcaptionbox
  {\label{fig:obh_feasible_arcs:1}A maximal arc that is feasible.}
  [0.4\linewidth][c]{\includegraphics{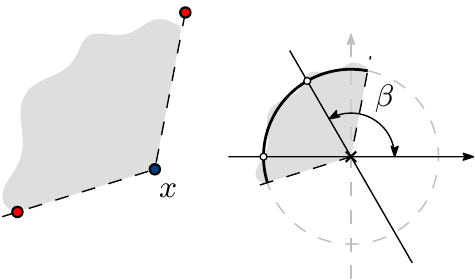}}%
  \qquad
  \subcaptionbox
  {\label{fig:obh_feasible_arcs:2}A maximal arc that is not feasible.}
  [0.4\linewidth][c]{\includegraphics{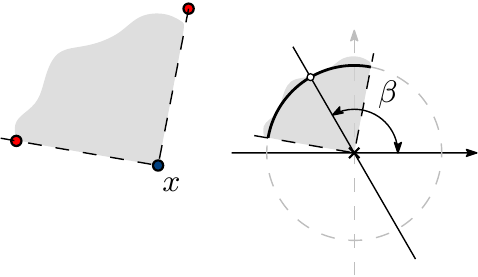}}

  \caption
  {
    Maximal arcs of a point $x\in\R$.
    % A maximal arc of a point $x\in\R$ that is \subref{fig:obh_feasible_arcs:1} feasible and \subref{fig:rch_feasible_arcs:2} not feasible.
  }
  \label{fig:obh_feasible_arcs}
\end{figure}

We now adapt \cref{lem:rch_inclusion_arcs} as follows.

\begin{lemma}\label{lem:obh_inclusion_arcs}
For any fixed value of $\beta$, a point $x \in \R$ is contained in $\obh$ if, and only if, every feasible maximal arc of $x$ is intersected by a single ray of $\ob$.
\end{lemma}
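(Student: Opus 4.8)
The plan is to transcribe the proof of \cref{lem:rch_inclusion_arcs} almost verbatim, making the two substitutions forced by the definition of the $\ob$-convex hull: the role played there by a \emph{coordinate semiaxis} is now played by a \emph{ray of $\ob$ parallel to the $X$-axis}, and the role played by ``size at least $\frac{\pi}{2}$'' is now played by ``constrained to $X^+$ or $X^-$'', which is exactly the notion of feasibility adopted in this subsection. By \cref{prop:obh_inclusion} it suffices to show that every $\ob$-quadrant with vertex on $x$ contains a point of $P$ if, and only if, every feasible maximal arc of $x$ meets a single ray of $\ob$. As in \cref{lem:rch_inclusion_arcs} I would assume without loss of generality that $x$ is the origin, so that the lines of $\ob$ are the $X$-axis and the line of orientation $\beta$; writing $r_0,r_\beta,r_\pi,r_{\beta+\pi}$ for the four rays of $\ob$ in counter-clockwise order, the four $\ob$-quadrants are the open sectors bounded by consecutive pairs of these rays, and each of them is bounded on one side by $r_0$ or by $r_\pi$.

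For the ``only if'' direction, let $a$ be a feasible maximal arc of $x$ induced by a $P$-free maximal wedge $w$; by feasibility $w$ contains $r_0$ or $r_\pi$, say $r_0$ (the other case being the point reflection of this one), so $a$ already meets $r_0$. It remains to see that $a$ meets no other ray, i.e.\ that $w$ contains none of $r_\beta,r_\pi,r_{\beta+\pi}$. If $w$ contained any of them then, using $0<\beta<\pi$ and the fact that $w$ is an open angular sector, $w$ would also contain a ray of $\ob$ consecutive to $r_0$, hence it would contain two consecutive rays of $\ob$ and therefore the entire open $\ob$-quadrant they bound; that quadrant contains a point of $P$ by hypothesis, contradicting that $w$ is $P$-free. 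So $a$ meets exactly the ray $r_0$.

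For the ``if'' direction I would argue by contradiction, mirroring the backward part of \cref{lem:rch_inclusion_arcs}. Suppose some $\ob$-quadrant $q$ with vertex on $x$ is $P$-free; say $q$ is bounded by $r_0$ and a consecutive ray, the cases where $r_\pi$ replaces $r_0$ being symmetric. Since $P$ is finite, I would first enlarge $q$ by slightly rotating its bounding ray $r_0$ away from the interior of $q$, obtaining a strictly larger $P$-free wedge $q'$ that has $r_0$ in its interior, and then take a $P$-free maximal wedge $w\supseteq q'$; since $w\supseteq q'$ it contains $r_0$, so $w$ induces a feasible maximal arc $a$ containing all directions of $q'$. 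If $a$ also meets the other bounding ray of $q$, then $a$ meets two rays of $\ob$, a contradiction. Otherwise $a$ is an arc whose closure contains both bounding rays of $q$ and which cannot be extended past them; pushing $w$ to maximality then forces $a$ to reach, on the far side, a ray of $\ob$ consecutive to that other bounding ray, so again $a$ meets at least two rays. In either case this contradicts the hypothesis, so every $\ob$-quadrant with vertex on $x$ contains a point of $P$, and by \cref{prop:obh_inclusion} this means $x\in\obh$.

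The delicate point — and the step I expect to take the most care in the final write-up — is the degenerate configuration in the ``if'' direction in which a point of $P$ lies exactly on the ray $r_0$ (or $r_\pi$) through $x$: then $q$ cannot be enlarged to have that ray in its interior, and the only maximal $P$-free wedge containing $q$ may be $q$ itself, which is not constrained to $X^+$ or $X^-$. This is precisely what the standing assumption of this subsection — that no two points of $R\cup B$ lie on a common horizontal line — rules out for the relevant points $x$, and for a fully general $x\in\R$ a routine perturbation reduces to the general-position case. Apart from checking exactly which enlargement of $q$ to take and that the resulting maximal wedge is indeed constrained to $X^+$ or $X^-$, the argument is a direct transcription of the proof of \cref{lem:rch_inclusion_arcs}.
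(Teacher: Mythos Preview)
Your approach---a direct transcription of the argument for \cref{lem:rch_inclusion_arcs}---is in spirit the same as the paper's, which applies an affine shear sending the line of orientation $\beta$ to the vertical axis and then invokes ``similar arguments to those we used to prove \cref{lem:rch_inclusion_arcs}''. The shear is just a convenient repackaging of your case analysis, and your forward direction is correct.

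The backward direction, however, has a genuine gap. After enlarging $q$ past $r_0$ and taking $w$ maximal, you case-split on whether the arc $a$ meets the other bounding ray of $q$ (say $r_\beta$). The ``otherwise'' branch does not work: the claim that ``pushing $w$ to maximality then forces $a$ to reach, on the far side, a ray of $\ob$ consecutive to that other bounding ray'' simply does not follow. Concretely, if a point of $P$ lies on the ray $r_\beta$ from $x$ and another lies at some direction $\gamma\in(\beta-\pi,0)$, then the maximal $P$-free wedge containing your $q'$ is exactly the sector of directions $(\gamma,\beta)$; it contains $r_0$ and no other ray of $\ob$, so its arc meets a single ray and you obtain no contradiction. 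For instance, with $\beta=\tfrac{\pi}{2}$, $x$ the origin, and $P=\{(0,1),(-1,1),(-1,-1),(1,-1)\}$, the first quadrant at $x$ is $P$-free while both feasible maximal arcs of $x$ meet exactly one ray. The cleaner argument---valid when no point of $P$ lies on any of the four rays $r_0,r_\beta,r_\pi,r_{\beta+\pi}$ through $x$---is simply to take $w$ to be the maximal $P$-free wedge containing $q$ itself; since neither bounding ray of $q$ carries a point of $P$, $w$ extends strictly past both, hence contains both $r_0$ and $r_\beta$, and its arc is a feasible maximal arc meeting two rays of $\ob$.

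Relatedly, your ``delicate point'' paragraph only addresses points of $P$ on $r_0$ or $r_\pi$, but the degeneracy that actually breaks your argument is a point of $P$ on $r_\beta$ or $r_{\beta+\pi}$, and the paper's standing assumption (no two points of $R\cup B$ on a common horizontal line) does not exclude it. The paper's own proof is terse enough that it does not confront this issue explicitly either.
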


\begin{proof}
We show that every $\ob$-quadrant with vertex on $x$ contains at least one point of $P$ if, and only if, every feasible maximal arc of $x$ is intersected by a single ray of $\ob$. The lemma follows from this fact and \Cref{prop:obh_inclusion}.

For any fixed value of $\beta$ there is an affine transformation that maps horizontal lines to horizontal lines, and lines with orientation $\beta$ to vertical lines~\cite[Section 2.5]{rawlins_thesis_1987}.
Let $\o^\prime_\beta$ and~$x^\prime$ denote the set and the point obtained after applying the transformation to $\ob$ and $x$, respectively. Assume without loss of generality that $x^\prime$ lies on the origin. The proof follows by observing that (i) the lines of $\o^\prime_\beta$ coincide with the coordinate axes, (ii) every maximal wedge with vertex on $x^\prime$ that induces a feasible maximal arc contains either the $X^+$ or the~$X^-$ semiaxes, and (iii) by similar arguments to those we used to prove \cref{lem:rch_inclusion_arcs}, such wedge contains a second semiaxis if, and only if, $x^\prime$ is contained in $\obh$.
\end{proof}

We rephrase \Cref{lem:obh_inclusion_arcs} to a bichromatic setting as follows.

\begin{lemma}\label{lem:obh_bichromatic_inclusion_arcs}
  A blue point $b\in B$ is contained in $\obh[R]$ if, and only if, every blue maximal arc of $b$ that is feasible is intersected by a single ray of $\ob$.
\end{lemma}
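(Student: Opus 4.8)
The plan is to obtain \cref{lem:obh_bichromatic_inclusion_arcs} as an immediate specialization of \cref{lem:obh_inclusion_arcs}, in exactly the way \cref{lem:rch_bichromatic_inclusion_arcs} specializes \cref{lem:rch_inclusion_arcs}. First I would fix an arbitrary value of $\beta$, take $P = R$ in \cref{lem:obh_inclusion_arcs}, and take as the test point $x$ the blue point $b$. This is legitimate because $R$ and $B$ are disjoint, so $b$ is a point of the plane (one that happens not to lie in $R$), and \cref{lem:obh_inclusion_arcs} applies to any $x\in\R$. It yields that $b\in\obh[R]$ if, and only if, every feasible maximal arc of $b$ with respect to $R$ is intersected by a single ray of $\ob$.

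The remaining step is a translation of vocabulary: I would check that, for a blue point $b$, ``feasible maximal arc of $b$ with respect to $R$'' and ``feasible blue maximal arc of $b$'' are the same notion. A maximal arc of $b$ with respect to $R$ is by definition induced by an $R$-free maximal wedge with vertex on $b$; since $b$ is blue, such a wedge is precisely a blue maximal wedge. In the $\ob$ setting ``feasible'' means, in both cases, that the inducing wedge is constrained to the $X^+$ or the $X^-$ semiaxis (this is the evident bichromatic reading of the feasibility definition stated just before the lemma). Hence the two families of arcs coincide, and substituting this identification into the biconditional obtained above gives \cref{lem:obh_bichromatic_inclusion_arcs} verbatim.

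I do not expect a real obstacle: all the geometric content --- including the affine map sending the lines of $\ob$ to the coordinate axes --- already lives inside the proof of \cref{lem:obh_inclusion_arcs}, so here everything reduces to unfolding definitions. The single point that merits a sentence of care is confirming that requiring the vertex of the $R$-free wedge to be the blue point $b$ does not interfere with the feasibility classification; it does not, since feasibility constrains only the position of the wedge relative to the $X^+$ and $X^-$ semiaxes and is insensitive to the color of the vertex.
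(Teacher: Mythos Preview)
Your proposal is correct and matches the paper's approach exactly: the paper introduces this lemma with the sentence ``We rephrase \Cref{lem:obh_inclusion_arcs} to a bichromatic setting as follows'' and gives no separate proof, treating it as the immediate specialization $P=R$, $x=b$ that you spell out. Your additional care in verifying that the bichromatic terminology (blue maximal wedge/arc, feasibility) coincides with the monochromatic notions for $P=R$ is sound and slightly more explicit than the paper itself.
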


Let $\mathcal{H}(R)$ denote the narrowest horizontal corridor enclosing $R$. Consider a blue point~$b$ lying outside $\mathcal{H}(R)$. Note that $b$ has a single maximal arc that is feasible, since $b$ is the vertex of a single $R$-free maximal wedge constrained to both the $X^+$ and the $X^-$ semiaxis. Such arc is intersected by both lines of $\ob$ for all $\beta \in (0,\pi)$.
Hence, by \Cref{lem:obh_bichromatic_inclusion_arcs}, the point~$b$ is not contained in $\obh[R]$ for all $\beta \in (0,\pi)$. See \Cref{fig:obh_outside_corridor}.

\begin{figure}[ht]
  \centering
  \includegraphics[page=1,scale=0.9]{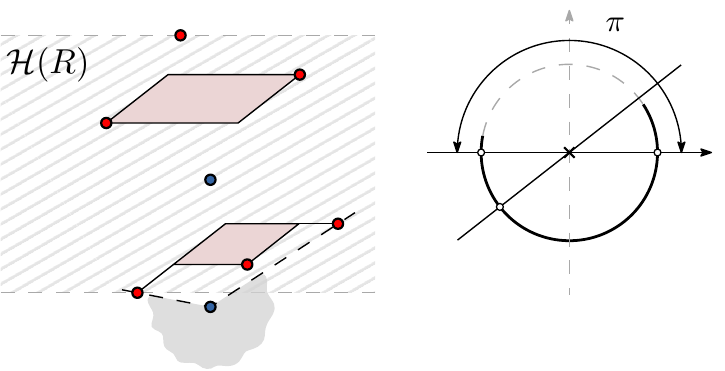}

  \caption{A blue point lying outside $\mathcal{H}(R)$ has a single feasible maximal arc, which is induced by a wedge constrained to both semiaxes of $X$. The point is not contained in $\obh[R]$ for all $\beta \in (0,\pi)$.}
  \label{fig:obh_outside_corridor}
\end{figure}

Consider now that $b$ is contained in $\mathcal{H}(R)$.
In this case $b$ has two maximal arcs that are feasible, since $b$ is the vertex of a single $R$-free maximal wedge constrained to the $X^+$ semiaxis, and a single $R$-free maximal wedge constrained to the $X^-$ semiaxis.
Such arcs are intersected by both lines of $\ob$ for values of $\beta$ in two angular intervals $(0,\beta_1)$ and $(\beta_2,\pi)$ for some angles $\beta_1,\beta_2\in (0,\pi)$ such that $0<\beta_1 <\beta_2 <\pi$. Hence, by \Cref{lem:obh_bichromatic_inclusion_arcs}, the point $b$ is not contained in $\obh[R]$ for values of $\beta$ in the intervals $(0,\beta_1)$ and $(\beta_2,\pi)$, and it is contained in $\obh[R]$ values of $\beta$ in the interval $(\beta_1,\beta_2)$.
See \Cref{fig:obh_inside_corridor}.

\begin{figure}[ht]
  \centering

  \subcaptionbox
  {\label{fig:obh_inside_corridor:1}
    The point is not contained in $\obh[R]$ for any $\beta \in (0,\beta_1) \cup (\beta_2,\pi)$.
  }
  {\includegraphics[page=2,scale=0.9]{separability_intervals_obh}}
  \hfill
  \subcaptionbox
  {\label{fig:obh_inside_corridor:2}
    The point is contained in $\obh[R]$ for any $\beta \in (\beta_1,\beta_2)$.
  }
  {\includegraphics[page=3,scale=0.9]{separability_intervals_obh}}

  \caption{
    A blue point lying inside $\mathcal{H}(R)$ has two feasible maximal arcs induced by two maximal wedges: one constrained to the  $X^+$ semiaxis, and the second one constrained to the $X^-$ semiaxis.
    % The point is \subref{fig:obh_inside_corridor:1} not contained in $\obh[R]$ for any $\beta \in (0,\beta_1) \cup (\beta_2,\pi)$, or \subref{fig:obh_inside_corridor:2} contained in $\obh[R]$ for any $\beta \in (\beta_1,\beta_2)$.
  }
  \label{fig:obh_inside_corridor}
\end{figure}

Let $b_1,\ldots,b_{\vert B \vert}$ be the points of $B$, and $S_i$ denote the set of angular intervals of $\beta$ for which $b_i$ is not contained in $\obh[R]$.
From the discussion above we have that $S_i$ contains either one or two angular intervals.
For some small enough $\varepsilon$, the intervals contain either the angle $\varepsilon$, the angle $\pi - \varepsilon$, or both.
Hence, the set $\underset{i=1}{\overset{\vert B \vert}{\bigcap}} \, S_i$ of angular intervals of $\beta$ for which $\obh[R]$ is $B$-free 
consists of at most two intervals.

\begin{theorem}\label{thm:obh_separability}
  Given two disjoint sets $R$ and $B$ of points in the plane, there are at most two open angular intervals of $\beta\in (0,\pi)$ where $\obh[R]$ is $B$-free.
  These intervals can be computed in $O(n\log n)$ time and $O(n)$ space, where $n=\vert R\vert +\vert B\vert$.
\end{theorem}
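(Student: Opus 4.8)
The plan is to reduce the problem point by point to the elementary case analysis sketched just above the statement, then to assemble and to compute the resulting intervals. Fix a blue point $b_i$ and let $\mathcal{H}(R)$ be the narrowest horizontal slab containing $R$. By \Cref{lem:obh_bichromatic_inclusion_arcs}, $b_i\in\obh[R]$ if and only if every feasible blue maximal arc of $b_i$ is met by a single ray of $\ob$. If $b_i\notin\mathcal{H}(R)$, then---using that no red point lies on the horizontal line through $b_i$---the red directions seen from $b_i$ span an arc of length less than $\pi$, so $b_i$ is the apex of a unique $R$-free maximal wedge, this wedge contains both $X^{+}$ and $X^{-}$, it has size larger than $\pi$, and hence its single feasible arc is met by both rays of the line of orientation $\beta$ for every $\beta$; thus $I_i:=\{\beta\in(0,\pi):b_i\in\obh[R]\}=\emptyset$. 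If $b_i\in\mathcal{H}(R)$, then $b_i$ has exactly two feasible arcs $a_i^{+}$ and $a_i^{-}$, induced by the $R$-free maximal wedges constrained to $X^{+}$ and to $X^{-}$; writing $a_i^{+}$ as an arc around $X^{+}$ with angular half-extents $\delta_i^{+}$ (upward) and $\gamma_i^{+}$ (downward), and $a_i^{-}$ as an arc around $X^{-}$ with half-extents $\delta_i^{-}$ and $\gamma_i^{-}$, a direct check of which of the rays $X^{+}$, $X^{-}$ and the two rays of orientation $\beta$ meet each arc shows that $a_i^{+}$ is met by a single ray exactly for $\beta\in[\delta_i^{+},\pi-\gamma_i^{+}]$ and $a_i^{-}$ exactly for $\beta\in[\gamma_i^{-},\pi-\delta_i^{-}]$. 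Hence $I_i=\bigl[\max(\delta_i^{+},\gamma_i^{-}),\ \pi-\max(\gamma_i^{+},\delta_i^{-})\bigr]$, a single closed interval (empty when its left endpoint exceeds its right one) contained in $(0,\pi)$ and with both endpoints bounded away from $0$ and from $\pi$, since all four half-extents are strictly positive.

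Next I would assemble the answer. The hull $\obh[R]$ is $B$-free if and only if no blue point lies inside it, that is, if and only if $\beta\in(0,\pi)\setminus\bigcup_i I_i$. The combinatorial crux is that $\bigcup_i I_i$ is connected; granting this, its complement in $(0,\pi)$ equals at most the two intervals $(0,\min_i\beta_1^{i})$ and $(\max_i\beta_2^{i},\pi)$, where $[\beta_1^{i},\beta_2^{i}]=I_i$ ranges over the nonempty ones, and since every $\beta_1^{i}>0$ this is one or two intervals (it degenerates to the single interval $(0,\pi)$ exactly when every blue point lies outside $\mathcal{H}(R)$). I would prove the connectedness by a one-dimensional Helly argument: produce a single orientation $\beta^{\star}\in(0,\pi)$ that lies in every nonempty $I_i$ simultaneously---equivalently, a slope at which every blue point of $\mathcal{H}(R)$ is at once inside $\obh[R]$---constructed from the extreme red points that define $\mathcal{H}(R)$, and then verify that for $\beta^{\star}$ both feasible arcs of each interior $b_i$ are cut by exactly one ray of $\ob$. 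From $\beta^{\star}$ the nonempty $I_i$ pairwise intersect, so their union is the single interval $[\min_i\beta_1^{i},\max_i\beta_2^{i}]$, and the at-most-two bound follows.

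For the running time I would: compute $\mathcal{H}(R)$ in $O(n)$ time after one sort; run the restricted-maxima double sweep behind \Cref{lem:restricted_maximal_wedges} for the two directions $X^{+}$ and $X^{-}$, which for each blue point yields the $R$-free maximal wedges constrained to $X^{+}$ and to $X^{-}$, hence the four half-extents $\delta_i^{\pm}$, $\gamma_i^{\pm}$, in $O(n\log n)$ time and $O(n)$ space overall; form each $I_i$ in $O(1)$ time; and finally take $\min_i\beta_1^{i}$ and $\max_i\beta_2^{i}$ over the nonempty intervals and output their complement in $(0,\pi)$, in $O(n)$ extra time. This yields the claimed $O(n\log n)$ time and $O(n)$ space.

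The main obstacle is the connectedness of $\bigcup_i I_i$: although each per-point ``bad'' set $S_i=(0,\pi)\setminus I_i$ is merely the complement of a single interval touching neither $0$ nor $\pi$, finitely many such sets can in principle intersect in an isolated ``middle'' interval and thus leave three or more components, so one genuinely needs the Helly-type common orientation $\beta^{\star}$. Pinning $\beta^{\star}$ down requires relating the tangent directions $\delta_i^{\pm},\gamma_i^{\pm}$ of the various interior blue points through the fixed extreme red points of $R$; everything else is a routine adaptation of the machinery of \Cref{sec:rch}.
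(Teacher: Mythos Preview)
Your computational plan---two calls to the restricted-maxima sweep of \Cref{lem:restricted_maximal_wedges} (for $X^{+}$ and $X^{-}$), turn the resulting wedges into one interval $I_i$ per blue point, then combine in linear time---is exactly what the paper does, and your explicit formula $I_i=[\max(\delta_i^{+},\gamma_i^{-}),\ \pi-\max(\gamma_i^{+},\delta_i^{-})]$ is just a more detailed rendering of the paper's ``$(\beta_1,\beta_2)$''. The $O(n\log n)$ time / $O(n)$ space analysis also matches.

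Where you diverge is in justifying the ``at most two intervals'' bound. The paper does \emph{not} argue via connectedness of $\bigcup_i I_i$ and certainly not via Helly; it simply records that each $S_i=(0,\pi)\setminus I_i$ is either all of $(0,\pi)$ or has the shape $(0,\beta_1^i)\cup(\beta_2^i,\pi)$---every piece touches an endpoint of $(0,\pi)$---and passes directly to $\bigcap_i S_i$, which it reads off as (at most) the two pieces $(0,\min_i\beta_1^i)$ and $(\max_i\beta_2^i,\pi)$. That is the whole argument; no common $\beta^{\star}$ is ever produced.

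Your Helly route is therefore an addition not present in the paper, and it is both stronger than necessary and not actually carried out. A common point in all nonempty $I_i$ would mean a single $\beta^{\star}$ at which \emph{every} interior blue point lies in $\obh[R]$ simultaneously; you give no construction of such a $\beta^{\star}$ (``constructed from the extreme red points that define $\mathcal{H}(R)$'' is a placeholder, not a proof), and there is no evident geometric reason it should exist. Note too that a common point is strictly stronger than connectedness of the union: intervals $[0,2],[1,3],[2.5,4]$ have connected union but empty intersection, so even if you proved connectedness you would not get it from Helly in the way you sketch. As written, this step is the genuine gap in your proposal; everything else is a routine rephrasing of the paper's proof.
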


\begin{proof}
  By means of the algorithm we described in the proof of \Cref{lem:restricted_maximal_wedges}, we compute the set of blue maximal wedges that are constrained to either the $X^+$ or the $X^-$ semiaxis in $O( n \log n)$ time and $O(n)$ space.
  We then transform in $O(n)$ time the resulting set of maximal wedges into a set of maximal arcs that are feasible, as described in the proof of \Cref{lem:feasible_maximal_arcs}.
  Next, we transform each of such arcs into an angular interval in $O(1)$ time, as we described in Step 2 of the algorithm from \Cref{sec:rch_algorithm}.
  In $O(n)$ time, we use these intervals to compute the set $S_i$ of angular intervals for which a blue point $b_i$ is not contained in $\obh[R]$, for all $1 \leq i \leq \vert B \vert $.
  We finally compute the angular intervals where $\obh[R]$ is $B$-free in $O(n)$ time, by computing the set $\underset{i=1}{\overset{\vert B \vert}{\bigcap}} \, S_i$.
\end{proof}

As discussed above, the values of $\beta$ where a blue point is contained in $\obh[R]$ form at most a single angular interval.
We obtain the following result from this fact and similar arguments to those we use to prove \Cref{thm:obh_separability}.

\begin{theorem}\label{thm:obh_containment}
  Given two disjoint sets $R$ and $B$ of points in the plane, there is at most one open angular interval of $\beta \in (0,\pi)$ where $\obh[B]$ is contained in $\obh[R]$.
  This interval can be computed in $O(n \log n)$ time and $O(n)$ space, where $n = \vert R \vert + \vert B \vert$.
\end{theorem}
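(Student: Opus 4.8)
The plan is to mirror the argument developed for \Cref{thm:obh_separability}, replacing the property ``$\obh[R]$ is $B$-free'' by ``$\obh[R]$ contains $\obh[B]$'', and to exploit the fact that, after an affine change of coordinates, $\ob$-convexity becomes orthogonal convexity. First I would establish the $\ob$-analogue of \cref{lem:rch_containment}: for any fixed $\beta\in(0,\pi)$, if every point of $B$ lies in the interior of $\obh[R]$, then $\obh[B]$ lies in the interior of $\obh[R]$. This follows immediately from the affine transformation used in the proof of \cref{lem:obh_inclusion_arcs}, which sends horizontal lines to horizontal lines and lines of orientation $\beta$ to vertical lines; it therefore maps $\ob$-quadrants to axis-aligned quadrants, $\obh[R]$ to $\rch[R']$, and $\obh[B]$ to $\rch[B']$, where $R'$ and $B'$ are the images of $R$ and $B$. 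Since affine maps preserve incidences and set containments, applying \cref{lem:rch_containment} to $R'$ and $B'$ yields the claim. The reverse inclusion is trivial because $B\subseteq\obh[B]$. Hence the set of values of $\beta\in(0,\pi)$ for which $\obh[B]\subseteq\obh[R]$ equals $\bigcap_{i=1}^{\vert B\vert} I_i$, where $I_i\subseteq(0,\pi)$ denotes the set of $\beta$ for which the blue point $b_i$ lies in the interior of $\obh[R]$.

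Next I would invoke the structural description of the sets $I_i$ already obtained in this subsection (the analysis preceding \cref{thm:obh_separability}). If $b_i$ lies outside the narrowest horizontal corridor $\mathcal{H}(R)$, then $b_i\notin\obh[R]$ for every $\beta$, so $I_i=\emptyset$. If $b_i$ lies inside $\mathcal{H}(R)$, then by \cref{lem:obh_bichromatic_inclusion_arcs} each of its two feasible maximal arcs (one constrained to the $X^+$ semiaxis, one to the $X^-$ semiaxis) is intersected by a single ray of $\ob$ exactly for $\beta$ ranging over a single open interval $(\beta_1^{(i)},\beta_2^{(i)})$; thus $I_i$ is a single, possibly empty, open interval. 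Since the intersection of a family of open intervals of the line is again a single, possibly empty, open interval, $\bigcap_{i} I_i$ consists of at most one open interval, which proves the combinatorial half of the statement.

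For the algorithmic half I would reuse the pipeline from the proof of \Cref{thm:obh_separability} almost verbatim. Using \cref{lem:restricted_maximal_wedges} we compute in $O(n\log n)$ time and $O(n)$ space the blue maximal wedges constrained to the $X^+$ or the $X^-$ semiaxis; as in the proof of \cref{lem:feasible_maximal_arcs} we turn them in $O(n)$ total time into feasible maximal arcs, and then, exactly as in Step~2 of \cref{sec:rch_algorithm}, into the angular interval $I_i$ attached to each $b_i$. Finally we compute $\bigcap_{i} I_i$ in $O(n)$ time by taking the maximum of the left endpoints and the minimum of the right endpoints.

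I do not expect a serious obstacle here. The only point that requires a little care is the containment lemma for $\ob$-convexity, but the affine reduction to \cref{lem:rch_containment} makes it essentially free; everything else is a direct specialization of the proof of \Cref{thm:obh_separability}, with the union of at most two intervals replaced by an intersection of single intervals.
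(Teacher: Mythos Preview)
Your proposal is correct and follows essentially the same approach as the paper, which simply observes that each blue point lies in $\obh[R]$ for at most a single angular interval and then appeals to ``similar arguments to those we use to prove \Cref{thm:obh_separability}.'' You are more explicit than the paper in justifying the reduction from ``$\obh[B]\subseteq\obh[R]$'' to ``every $b_i\in\obh[R]$'' via an $\ob$-analogue of \cref{lem:rch_containment} (which the paper leaves implicit here, though it states the analogous generalization for the $\o$-hull case), but this is a refinement of detail rather than a difference of method.
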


We finish this section with the following two remarks regarding \Cref{thm:obh_separability,thm:obh_containment}.
First, for a fixed value of $\beta$, the $\ob$-convex hull of a finite point set can be computed in $O(n \log n)$ time and $O(n)$ space~\cite{alegria_2018}.
Therefore, we need to spend an additional $O(n \log n)$ time and $O(n)$ space to compute the actual monochromatic $\ob$-convex hull in \Cref{thm:obh_separability}, or the $\ob$-convex hull of $R$ or $B$ in \Cref{thm:obh_containment}.
Second, since there is a constant number of angular intervals of $\beta$ where $\obh[R]$ is $B$-free or $\obh[B]$ is contained in $\obh[R]$, one may think that these intervals can be computed in $O(n)$ time.
Nevertheless, we show in \Cref{sec:lower_bounds} that the best possible time bound is actually $\Omega(n \log n)$.

\section{Lower bounds separation and inclusion detection problems}
\label{sec:lower_bounds}

In this section we consider the lines of $\o$ to be fixed, and we prove an $\Omega(n\log n)$ time lower bound in the algebraic computation tree model for the following problems:

\begin{problem*}[Rectilinear Convex Hull Separability Detection, RH-SD]
  Given two disjoint sets of $n$ red and $n$ blue points in the plane, decide if no blue point is contained in the rectilinear convex hull of the red point set.
\end{problem*}

\begin{problem*}[Rectilinear Convex Hull Containment Detection, RH-CD]
  Given two disjoint sets of $n$ red and $n$ blue points in the plane, decide if all the blue points are contained in the rectilinear convex hull of the red point set.
\end{problem*}

\begin{problem*}[Rectilinear Convex Hull Point Inclusion, RH-PI]
  Given two disjoint sets of $n$ red and $n$ blue points in the plane, compute the subset of blue points contained in the rectilinear convex hull of the red point set.
\end{problem*}

Note that these problems are particular cases of those we studied in \Cref{sec:rch,subsec:och,subsec:obh}.
For the problems related to the rectilinear convex hull (\Cref{sec:rch}) and the $\o$-convex hull (\Cref{subsec:och}), the set $\o_\theta$ is fixed to the case where $\theta$ is a constant value and contains $k=2$ orthogonal lines.
For the problem related to the $\ob$-convex hull (\Cref{subsec:obh}), the set $\ob$ is fixed to the case where $\beta=\frac{\pi}{2}$.
The $\Omega(n\log n)$ time lower bounds thus imply that the time complexities reported in \Cref{lemma:rch,thm:rch_separability,thm:rch_containment,thm:obh_separability,thm:obh_containment} are the best possible.

We first prove the lower bounds for the RH-SD and the RH-CD problems.
The proofs are by reduction from the following auxiliary problems.

\begin{problem*}[$\varepsilon$-Closeness]
  Given a set $x_1,\ldots,x_n$ and $\varepsilon > 0$ of $n+1$ real numbers, decide whether any two numbers $x_i$ and $x_j$ ($i \neq j$) are at distance less than $\varepsilon$ from each other.
\end{problem*}

Given a set $x_1,\ldots,x_n$ of real numbers, we say that two numbers $x_i$ and $x_j$ are \emph{consecutive} if $x_i < x_j$, and there is no $k$ such that $x_i < x_k < x_j$.

\begin{problem*}[Complement-Greater-or-Equal, CGE]
  Given a set $x_1,\ldots,x_n$ and $\varepsilon > 0$ of $n+1$ real numbers, decide whether the maximum distance between consecutive numbers is less than~$\varepsilon$.
\end{problem*}

The problems $\varepsilon$-Closeness and CGE have an $\Omega(n\log n)$ time lower bound in the algebraic computation tree model~\cite{arkin_2006,preparata_1985}.
The reductions from these problems are based on a construction we describe next.
We transform a set $x_1,\ldots,x_n$ and $\varepsilon > 0$ of $n+1$ real numbers, into two disjoint sets $R$ and $B$ of $2n+2$ red and $n-1$ blue points, such that some blue point is contained in the rectilinear convex hull of the red point set if, and only if, the distance between a pair of consecutive numbers is less than $\varepsilon$.

Let $x_{min} = \operatorname{min} \{x_1,\ldots,x_n\}$, $x_{max} = \operatorname{max} \{x_1,\ldots,x_n\}$, and $m$, $M$ be two real numbers such that $m \ll x_{min}$ and $M \gg x_{max}$.
The set $B$ is produced by transforming the set $\{ x_1,\ldots,x_n \} \setminus \{ x_{max} \}$ of real numbers into the set
\[
  \{ b_i = (x_i, x_i) \; \vert \; 1 \leq i \leq n \} \setminus \{ b_{max} = (x_{max},  x_{max}) \}
\]
of $n-1$ blue points on the line $\ell$ with equation $y = x$.
The set $R$ is produced by transforming the set $x_1,\ldots,x_n$ of real
numbers into the set
\[
  \{ r^{\tiny{+}}_i = (x_i - \varepsilon, x_i) \; \vert \; 1 \leq i \leq
  n \}
\]
of $n$ red points on the line $\ell^{\tiny{+}}$ with equation $y = x + \varepsilon$, the set
\[
  \{ r^{\tiny{-}}_i = (x_i, x_i - \varepsilon) \; \vert \; 1 \leq i
  \leq n \}
\]
of $n$ red points on the line $\ell^{\tiny{-}}$ with equation $y = x - \varepsilon$, and the points $r_m = (m, m), r_M = (M, M)$ on $\ell$.
See \Cref{fig:lower_bound_construction}.

\begin{figure}[ht]
  \centering
  \includegraphics{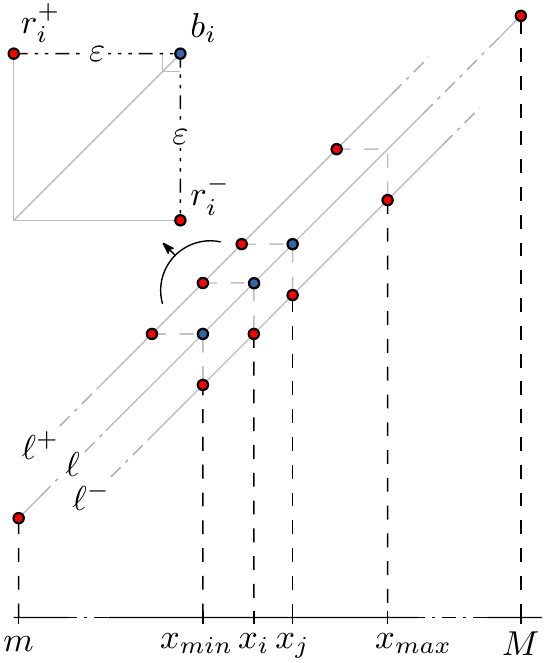}

  \caption
  {
    Transforming the set $x_1,\ldots,x_n$ and $\varepsilon>0$ of real numbers into two disjoint sets of red and blue points.
    The blue points lie on the line $\ell$ ($y=x$).
    The red points lie on the lines $\ell^{\tiny{+}}$ ($y=x+\varepsilon$) and $\ell^{\tiny{-}}$ ($y=x-\varepsilon$).
  }
  \label{fig:lower_bound_construction}
\end{figure}

Let $x_i$ and $x_j$ be two numbers in the set $x_1,\ldots,x_n$ such that $x_i < x_j$.
Consider the four different quadrants whose vertices lie on the blue point $b_i$.
Remember that a quadrant is an open region.
Note that the $Q_1$-quadrant contains the red point $r_M$ and the $Q_3$-quadrant contains the red point $r_m$.
If the distance between $x_i$ and $x_j$ is \emph{less than} $\varepsilon$, then the $Q_2$-quadrant contains the red point $r^{\tiny{+}}_j$ and the $Q_4$-quadrant contains the red point $r^{\tiny{-}}_j$.
By \Cref{prop:rch_inclusion}, in such case the blue point $b_i$ is strictly contained in $\rch[R]$.
If instead the distance between $x_i$ and $x_j$ is \emph{at least} $\varepsilon$, then both the $Q_2$-quadrant and the $Q_4$-quadrant are $R$-free, hence $b_i$ is not strictly contained in $\rch[R]$.
See \Cref{fig:lower_bound_rch}.

\begin{figure}[ht]
  \centering
  \includegraphics{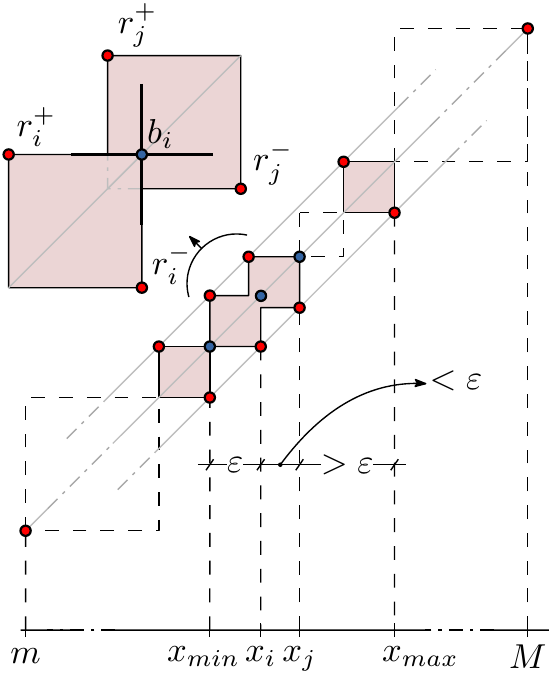}
  \caption
  {
    All the points of $R$ are vertices of $\rch[R]$.
    The points $r_m$ and $r_M$ are two singleton connected components, whereas any other connected component of $\rch[R]$ is an orthogonal polygon whose sides are parallel to the coordinate axes.
    Since the distance between $x_{i}$ and $x_j$ is less than $\varepsilon$, then $b_i$ is strictly contained in $\rch[R]$.
  }
  \label{fig:lower_bound_rch}
\end{figure}

\begin{lemma}\label{lem:construction}
  The construction described above transforms the set $x_1,\ldots,x_n$ and $\varepsilon>0$ of $n+1$ real numbers into two disjoint sets $R$ and $B$ of $2n+2$ red and $n-1$ blue points in $O(n)$ time and space.
  A pair of numbers $x_i$ and $x_j$, $x_i < x_j$, are at distance less than $\varepsilon$ if, and only if, the blue point $b_i$ is strictly contained in $\rch[R]$.
\end{lemma}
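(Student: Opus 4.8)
The first claim is immediate from the construction: from the $n$ input numbers we produce $n$ red points on $\ell^{+}$, $n$ red points on $\ell^{-}$, the two red points $r_m,r_M$ on $\ell$, and the $n-1$ blue points on $\ell$ (dropping $b_{max}$), each obtained from an input value by a constant number of arithmetic operations, so $\vert R\vert=2n+2$, $\vert B\vert=n-1$, and everything is built in $O(n)$ time and space. The sets are disjoint because every red point lies on $\ell^{+}\cup\ell^{-}\cup\{r_m,r_M\}$ while every blue point lies on $\ell\setminus\{r_m,r_M,b_{max}\}$, and no $r^{\pm}_i$ lies on $\ell$ (that would force $\varepsilon=0$).

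\textbf{The equivalence: set-up.} The plan is to establish the precise statement that, for each $i$ with $x_i<x_j$ for some $j$ (so $b_i\in B$), the point $b_i$ is strictly contained in $\rch[R]$ if and only if there exists an index $j$ with $x_i<x_j<x_i+\varepsilon$; both directions of the lemma follow at once. I would apply \cref{prop:rch_inclusion} with $P=R$ and $x=b_i$: since here $\o$ is fixed to the coordinate axes, $b_i$ is strictly contained in $\rch[R]$ exactly when each of the four axis-aligned quadrants $Q_1,\dots,Q_4$ with vertex at $b_i$ contains a red point in its (open) interior. First I would dispose of $Q_1$ and $Q_3$: because $m\ll x_{min}\le x_i\le x_{max}\ll M$, the point $r_m=(m,m)$ lies strictly below and to the left of $b_i$, hence $r_m\in Q_3$, and symmetrically $r_M=(M,M)\in Q_1$; this holds for every $i$. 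So it remains to decide when the upper-left quadrant $Q_2$ and the lower-right quadrant $Q_4$ each contain a red point.

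\textbf{The core case analysis.} Write $Q_2=\{(u,v):u<x_i,\ v>x_i\}$ and $Q_4=\{(u,v):u>x_i,\ v<x_i\}$. I would check the three families of red points against $Q_2$: the points $r_m,r_M$ are excluded since both their coordinates lie on the same side of $x_i$; a point $r^{-}_j=(x_j,x_j-\varepsilon)$ is excluded because $x_j-\varepsilon<x_j$ makes $x_j<x_i$ and $x_j-\varepsilon>x_i$ simultaneously impossible; and $r^{+}_j=(x_j-\varepsilon,x_j)\in Q_2$ precisely when $x_j-\varepsilon<x_i$ and $x_j>x_i$, i.e.\ when $x_i<x_j<x_i+\varepsilon$. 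A symmetric check against $Q_4$ shows that the only red point that can lie in $Q_4$ is $r^{-}_j=(x_j,x_j-\varepsilon)$, and it does so under exactly the same condition $x_i<x_j<x_i+\varepsilon$. Hence $Q_2$ is nonempty if and only if $Q_4$ is nonempty if and only if some index $j$ satisfies $x_i<x_j<x_i+\varepsilon$, that is, iff there is a number $x_j>x_i$ at distance less than $\varepsilon$ from $x_i$. Combining this with the disposal of $Q_1,Q_3$ gives the stated equivalence; this is the situation depicted in \cref{fig:lower_bound_rch}.

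\textbf{Where care is needed.} I do not expect a real obstacle here. The two points that require attention are (i) that quadrants are open, so the strict inequalities $x_i<x_j<x_i+\varepsilon$ correspond exactly to ``distance strictly less than $\varepsilon$'' together with $x_j>x_i$, matching the intended meaning of strict containment; and (ii) that one close successor $x_j$ of $x_i$ simultaneously witnesses the non-emptiness of both $Q_2$ (via $r^{+}_j$) and $Q_4$ (via $r^{-}_j$), so no subtler bookkeeping over pairs is needed. The only genuinely load-bearing design choices are $m\ll x_{min}$ and $M\gg x_{max}$, which guarantee $r_m,r_M$ always sit in $Q_3,Q_1$ and never intrude into $Q_2$ or $Q_4$ for any $i$, and $\varepsilon>0$, which keeps $R$ and $B$ disjoint.
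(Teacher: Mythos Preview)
Your proposal is correct and follows essentially the same approach as the paper: apply \cref{prop:rch_inclusion} at $b_i$, observe that $r_M$ and $r_m$ always occupy $Q_1$ and $Q_3$, and argue that $Q_2$ and $Q_4$ contain a red point (namely $r_j^{+}$ and $r_j^{-}$, respectively) precisely when some $x_j$ satisfies $x_i<x_j<x_i+\varepsilon$. Your write-up is in fact more explicit than the paper's, which presents the argument informally in the paragraph preceding the lemma and does not spell out, as you do, why the other families of red points cannot intrude into $Q_2$ or $Q_4$.
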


We now prove the lower bound for the RH-SD and the RH-CD problems.

\begin{theorem}\label{thm:OHSD_lower_bound}
  The RH-SD problem requires $\Omega(n\log n)$ time under the algebraic computation tree model.
\end{theorem}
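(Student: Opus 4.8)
The plan is to reduce $\varepsilon$-Closeness to RH-SD in linear time, so that the $\Omega(n\log n)$ lower bound for $\varepsilon$-Closeness in the algebraic computation tree model transfers. Given an instance $x_1,\dots,x_n$, $\varepsilon>0$ of $\varepsilon$-Closeness, I would first invoke \cref{lem:construction} to build, in $O(n)$ time and space, the sets $R$ and $B$ of $2n+2$ red and $n-1$ blue points; then I would call a hypothetical RH-SD algorithm on the pair $(R,B)$ and output the negation of its answer.

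To see that the reduction is correct, I would argue the equivalence: some pair $x_i<x_j$ with $x_j-x_i<\varepsilon$ exists if and only if some blue point lies (strictly) in $\rch[R]$. The forward direction is immediate from \cref{lem:construction}: such a pair makes $b_i$ strictly contained in $\rch[R]$, so $(R,B)$ is a ``no'' instance of RH-SD. For the converse, if some $b_i\in\rch[R]$ then \cref{lem:construction} supplies an index $j$ with $x_i<x_j<x_i+\varepsilon$, witnessing $\varepsilon$-closeness; note that only the ``there is a blue point'' version of the statement is needed, so it does not matter which pair is flagged. Consequently an RH-SD algorithm with running time $T(n)$ solves $\varepsilon$-Closeness in time $O(n)+T(O(n))$, forcing $T(n)=\Omega(n\log n)$.

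Because \cref{lem:construction} already packages the geometric core, the reduction itself is routine; the step I expect to need the most attention is checking that it respects the algebraic computation tree model and the paper's conventions. For the former, the coordinates produced by \cref{lem:construction} are obtained from the inputs by a constant number of additions and subtractions (shifts by $\varepsilon$ and the insertion of the two far-away abscissae $m$ and $M$), so the transformation is expressible in the model, which is what the transfer of the lower bound requires. The one technicality worth a remark is that the points $b_i$, $r_m$, and $r_M$ are collinear, contradicting the standing general-position assumption; for a lower bound this is harmless, since an arbitrarily small perturbation leaves all the strict inclusions intact, or one may simply take the RH-SD algorithm to be correct on all inputs. The remaining ingredient — that at every $b_i$ the $Q_1$- and $Q_3$-quadrants are automatically stabbed by $r_M$ and $r_m$, so that the status of $b_i$ hinges solely on whether both the $Q_2$- and $Q_4$-quadrants are stabbed, which happens exactly when some $x_j$ lies in $(x_i,\,x_i+\varepsilon)$ — is precisely the analysis carried out in the paragraph preceding \cref{lem:construction}, and I would only need to cite it.
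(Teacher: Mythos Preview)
Your reduction is the same as the paper's, but you have overlooked one formal detail that the paper does handle. The RH-SD problem is stated for two sets of \emph{equal} size (``$n$ red and $n$ blue points''), whereas \cref{lem:construction} produces $2n+2$ red points and only $n-1$ blue points. Strictly speaking, the pair $(R,B)$ you feed to the hypothetical RH-SD algorithm is not a valid instance of the problem, so the lower bound does not transfer as written. The paper fixes this by padding $B$ with $n+3$ additional blue points placed on $\ell$ at abscissae in $(m,\,x_{\min}-\varepsilon)\cup(x_{\max},\,M)$; by \cref{prop:rch_inclusion} these extra points are never contained in $\rch[R]$ (the $Q_2$- and $Q_4$-quadrants at each of them are $R$-free), so they do not affect the answer, and the call is made on two sets of size $2n+2$. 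You should add this padding step to your reduction; everything else in your argument matches the paper's proof.
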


\begin{proof}
  By reduction from the $\varepsilon$-Closeness problem. Consider an instance of the $\varepsilon$-Closeness problem given by a set $x_1,\ldots,x_n$ and $\varepsilon>0$ of $n+1$ real numbers.
  Using the construction we described above, we create the disjoint sets $R$ and $B$ of $2n+2$ red and $n-1$ blue points.
  We add additional $n+3$ points to $B$ by placing blue points on the line $\ell$ for values of $x$ in the interval $(m,x_{min}-\varepsilon)\cup (x_{max},M)$.
  By \Cref{prop:rch_inclusion}, these additional points are not contained in the rectilinear convex hull of $R$, regardless of the distances between consecutive numbers in the set $x_1,\ldots,x_n$.
  See \Cref{fig:lower_bound_separability_containment:1}.

  We use an algorithm to solve the RH-SD problem on the sets $R$ and $B$ of $2n +2$ red and $2n + 2$ blue points.
  If the algorithm returns true, we reject the instance of the $\varepsilon$-Closeness problem, otherwise we accept the instance.
  By \Cref{lem:construction}, there is at least one blue point contained in the rectilinear convex hull of $R$ if, and only if, there is a pair of consecutive numbers in $x_1,\ldots,x_n$ at distance less than $\varepsilon$.
  Therefore, we have correctly solved the $\varepsilon$-Closeness problem in $O(n)$ time plus the time required to solve the RH-SD problem.
\end{proof}

\begin{figure}[ht]
  \centering

  \subcaptionbox
  {\label{fig:lower_bound_separability_containment:1}
    A blue point is contained in the rectilinear convex hull of $R$ if, and only if, the difference between a pair of consecutive numbers in the set $x_1,\ldots,x_n$ is less than $\varepsilon$.
  }
  [0.45\linewidth][c]{\includegraphics{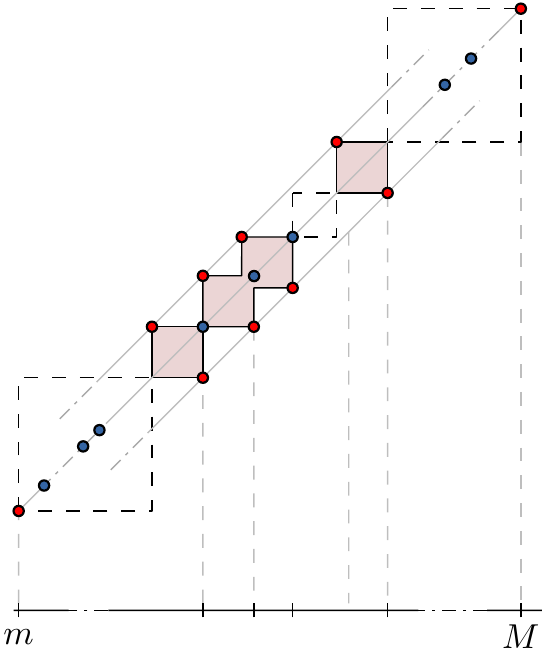}}
  ~
  \subcaptionbox
  {\label{fig:lower_bound_separability_containment:2}
    All the blue points are contained in the rectilinear convex hull of $R$ if, and only if, the maximum difference between a pair of consecutive numbers in the set $x_1,\ldots,x_n$ is less than $\varepsilon$.
  }
  [0.45\linewidth][c]{\includegraphics{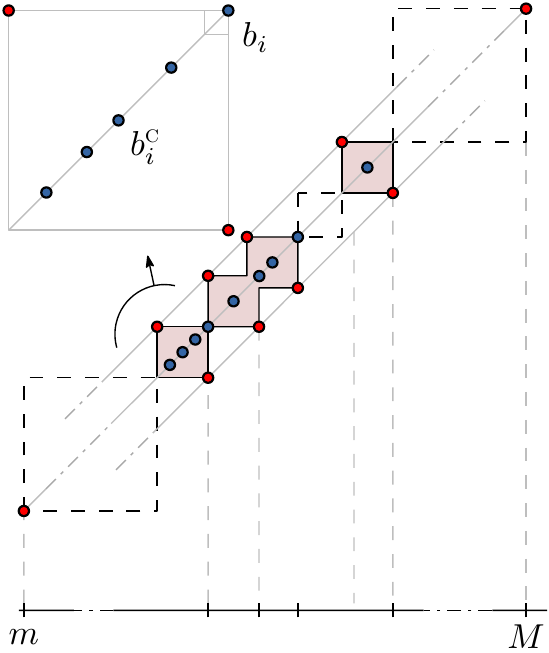}}

  \caption
  {
    Illustrations for \subref{fig:lower_bound_separability_containment:1} \Cref{thm:OHSD_lower_bound} and \subref{fig:lower_bound_separability_containment:2} \Cref{thm:OHCD_lower_bound}.
    % Illustrations for \Cref{thm:OHSD_lower_bound,thm:OHCD_lower_bound}.
    % \subref{fig:lower_bound_separability_containment:1} Reduction from the $\varepsilon$-Closeness problem to the RH-SD problem.
    % A blue point is contained in the rectilinear convex hull of $R$ if, and only if, the difference between a pair of consecutive numbers in the set $x_1,\ldots,x_n$ is less than $\varepsilon$.
    % \subref{fig:lower_bound_separability_containment:2} Reduction from the CGE problem to the RH-CD problem.
    % All the blue points are contained in the rectilinear convex hull of $R$ if, and only if, the maximum difference between a pair of consecutive numbers in the set $x_1,\ldots,x_n$ is less than $\varepsilon$.
  }
  \label{fig:lower_bound_separability_containment}
\end{figure}

\begin{theorem}\label{thm:OHCD_lower_bound}
  The RH-CD problem requires $\Omega(n\log n)$ time under the algebraic computation tree model.
\end{theorem}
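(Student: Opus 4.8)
The plan is to follow the blueprint of the proof of \Cref{thm:OHSD_lower_bound}, but to reduce from the CGE problem instead of from $\varepsilon$-Closeness. Given a CGE instance consisting of $n+1$ real numbers $x_1,\ldots,x_n$ and $\varepsilon>0$, I would apply verbatim the construction of \Cref{lem:construction} to obtain, in $O(n)$ time and space, the set $R$ of $2n+2$ red points on the lines $\ell$, $\ell^{+}$, $\ell^{-}$ (including $r_m$ and $r_M$), and the set $B$ of the $n-1$ blue points $b_i=(x_i,x_i)$ with $x_i\neq x_{max}$. I would then run an algorithm for RH-CD on $(R,B)$ and output its answer unchanged: RH-CD reports that every blue point of $B$ is strictly contained in $\rch[R]$, and I claim this is equivalent to the CGE instance being a yes-instance.

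The correctness of the reduction hinges on the following reading of \Cref{lem:construction}: a blue point $b_i\in B$ is strictly contained in $\rch[R]$ if, and only if, the distance from $x_i$ to its successor $x_j$ (the smallest number of the set greater than $x_i$) is less than $\varepsilon$. Indeed, by \Cref{prop:rch_inclusion} it suffices to inspect the four quadrants with vertex on $b_i$: the upper-right one always contains $r_M$ and the lower-left one always contains $r_m$, so $b_i$ is strictly contained in $\rch[R]$ iff the upper-left quadrant $Q_2(b_i)$ and the lower-right quadrant $Q_4(b_i)$ each contain a red point; and, as the discussion preceding \Cref{lem:construction} shows, this happens exactly when some number lies in the open interval $(x_i,x_i+\varepsilon)$, i.e.\ when the gap from $x_i$ to its successor is smaller than $\varepsilon$. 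Hence \emph{every} point of $B$ is strictly contained in $\rch[R]$ if, and only if, every consecutive gap in $x_1,\ldots,x_n$ is smaller than $\varepsilon$, i.e.\ if and only if the maximum distance between consecutive numbers is less than $\varepsilon$ --- which is precisely what CGE asks. Since the reduction costs only $O(n)$ time and CGE requires $\Omega(n\log n)$ time in the algebraic computation tree model, so does RH-CD. If one insists on $\vert R\vert = \vert B\vert$ as in the problem statement, it is enough to pad $B$ with $\Theta(n)$ additional blue points placed in the interior of the rectilinear convex hull of a fixed $O(1)$-size cluster of red points located far away from the rest of the construction; such points are contained in $\rch[R]$ for every input and therefore do not affect the equivalence above.

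The step I expect to need the most care is the ``only the successor matters'' claim. Several numbers of the set may simultaneously fall in $(x_i,x_i+\varepsilon)$, each contributing a red point $r^{+}_j$ to $Q_2(b_i)$, so the precise statement to verify is that $Q_2(b_i)$ is $R$-free exactly when \emph{no} number lies in $(x_i,x_i+\varepsilon)$; one also has to check that none of the remaining red points --- $r_m$, $r_M$, and the points $r^{-}_j$ --- can ever lie inside $Q_2(b_i)$, and symmetrically for $Q_4(b_i)$ with the roles of $r^{+}_j$ and $r^{-}_j$ exchanged. Finally, the boundary case deserves a line: when the gap equals $\varepsilon$ the would-be stabbing red point lies on a bounding ray of the quadrant rather than in its interior, so the quadrant is $R$-free and $b_i$ sits on the boundary of $\rch[R]$, consistently with the strict inequality ``$<\varepsilon$'' used in CGE. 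With these verifications in place the argument is complete; see also \Cref{fig:lower_bound_separability_containment:2}.
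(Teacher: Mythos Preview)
Your proof is correct and follows essentially the same route as the paper: reduce CGE to RH-CD via the construction of \Cref{lem:construction} and observe that every $b_i$ lies strictly in $\rch[R]$ iff every consecutive gap is $<\varepsilon$. The only difference is the padding to force $\vert R\vert=\vert B\vert$: the paper leaves $R$ untouched and adds $n+3$ blue points on $\ell$ (three with $x$-coordinate in $(x_{\min}-\varepsilon,x_{\min})$ and the $n$ points $b_i^{\textsc{c}}=(x_i-\tfrac{\varepsilon}{2},x_i-\tfrac{\varepsilon}{2})$), all of which lie in $\rch[R]$ regardless of the gaps; your scheme instead adds an auxiliary red cluster, which is fine but requires you to specify its placement (e.g.\ strictly above and to the right of $r_M$) so that the new red points never fall into $Q_2(b_i)$ or $Q_4(b_i)$ for any original $b_i$ --- otherwise they could spuriously make some $b_i$ contained and break the equivalence.
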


\begin{proof}
  By reduction from the CGE problem. Consider an instance of the CGE problem given by a set $x_1,\ldots,x_n$ and $\varepsilon>0$ of $n+1$ real numbers.
  Using the construction we described above, we create the disjoint sets $R$ and $B$ of $2n+2$ red and $n-1$ blue points.
  We add additional $n+3$ points to $B$ by placing on the line $\ell$ three blue points for values of $x$ in the interval $(x_{min} -\varepsilon,x_{min})$, and a point $b_i^{\textsc{c}}=(x_i -\frac{\varepsilon}{2}, x_i - \frac{\varepsilon}{2})$ for $1\leq i\leq n$.
  By \Cref{prop:rch_inclusion}, these additional points are contained in the rectilinear convex hull of $R$, regardless of the distances between consecutive numbers in the set $x_1,\ldots,x_n$.
  See \Cref{fig:lower_bound_separability_containment:2}.

  We use an algorithm to solve the RH-CD problem on the sets $R$ and $B$ of $2n+2$ red and $2n+2$ blue points.
  If the algorithm returns true we accept the instance of the CGE problem, otherwise we reject the instance.
  By \Cref{lem:construction}, there is at least one blue point not contained in the rectilinear convex hull of $R$ if, and only if, the maximum distance between consecutive numbers in $x_1,\ldots,x_n$ is at least $\varepsilon$.
  Therefore, we have correctly solved the CGE problem in $O(n)$ time plus the time required to solve the RH-CD problem.
\end{proof}

A solution of the RH-PI problem can be trivially transformed into a solution of the problems RH-SD and RH-CD in $O(1)$ and $O(n)$ time, respectively: An instance of the RH-SD problem is positive if the subset of blue points contained in the rectilinear convex hull of the red point set is empty, whereas an instance of the RH-CD is positive if the subset contains $n$ points.
Hence, as a consequence of \Cref{thm:OHSD_lower_bound,thm:OHCD_lower_bound} we obtain the following theorem.

\begin{theorem}\label{thm:OCHPI_lower_bound}
  The RH-PI problem requires $\Omega(n\log n)$ time in the algebraic computation tree model.
\end{theorem}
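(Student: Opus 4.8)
The plan is to obtain the bound directly from \Cref{thm:OHSD_lower_bound} (and, alternatively, from \Cref{thm:OHCD_lower_bound}) by observing that RH-PI dominates both decision problems. First I would assume, for contradiction, that an algebraic computation tree $T$ of depth $o(n\log n)$ solves RH-PI on instances of $n$ red and $n$ blue points. Each leaf of $T$ is labeled with the subset $S$ of blue points lying in $\rch[R]$ for the inputs that reach that leaf; I would relabel each leaf with \texttt{True} exactly when its $S$ is empty, obtaining a tree of the same depth that decides RH-SD. This contradicts \Cref{thm:OHSD_lower_bound}. Relabeling instead with \texttt{True} exactly when $\vert S \vert = n$ yields a tree for RH-CD and a contradiction with \Cref{thm:OHCD_lower_bound}; either route suffices.

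If one prefers to treat the subset as an output that is produced rather than encoded in the leaf label, the same conclusion follows operationally: an algorithm for RH-PI together with an $O(1)$ emptiness test solves RH-SD, and together with an $O(n)$ cardinality test solves RH-CD — precisely the two transformations recorded in the remark preceding the theorem. In both cases the post-processing stays within the algebraic computation tree model and contributes only $O(n) = o(n\log n)$ to the depth, so a hypothetical $o(n\log n)$-time algorithm for RH-PI would translate into one for RH-SD (resp. RH-CD).

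There is essentially no obstacle here: the theorem is a corollary whose only content is the contrapositive of these trivial reductions, the subtlety being merely to confirm that composing the RH-PI routine with the emptiness/cardinality test does not leave the model or inflate the asymptotic depth, which it does not. Finally, since $O(n\log n)$ is also the time spent by the point-inclusion machinery underlying \Cref{lemma:rch}, \Cref{thm:rch_separability}, \Cref{thm:rch_containment}, and — with $\beta=\frac{\pi}{2}$ — \Cref{thm:obh_separability} and \Cref{thm:obh_containment}, this lower bound certifies that all of those algorithms are time-optimal.
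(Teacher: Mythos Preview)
Your proposal is correct and follows essentially the same approach as the paper: the theorem is stated as an immediate consequence of \Cref{thm:OHSD_lower_bound,thm:OHCD_lower_bound} via the trivial $O(1)$ and $O(n)$ reductions you describe, which are exactly the ones the paper records in the paragraph preceding the statement. Your leaf-relabeling phrasing is a slightly more formal packaging of the same argument, but there is no substantive difference.
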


\section{Concluding remarks}
\label{sec:concluding_remarks}

We described efficient algorithms to compute the orientations of the lines of $\o$ for which there is an $\o$-convex hull separating $R$ from $B$.
If $\o$ is formed by two lines we considered two cases.
In the first case we simultaneously rotate both lines around the origin.
In the second case we rotate one of the lines while the second one remains fixed.
In both cases our algorithms run in optimal $O(n \log n)$ time and $O(n)$ space.
The optimality is shown by providing a matching lower bound for the problem.
If instead $\o$ is formed by $k\geq2$ lines, we simultaneously rotate all the lines of $\o$ around the origin.
Our algorithm runs in this case in $O(\sfrac{1}{\Theta} \cdot N \log N)$ time and $O(\sfrac{1}{\Theta} \cdot N)$ space, where $N = \max \{ k, \vert R \vert + \vert B \vert \}$ and $\Theta$ is the smallest among the sizes of the $\o$-wedges induced by the set of orientations.

The central strategy of all our algorithms is to perform an angular sweep in which, while we change the orientations of the lines of $\o$, we keep the number of blue points contained in the $\o$-convex hull of $R$.
Note that, without increasing the time and space complexities, the angular sweep can be easily adapted to compute the set of angular intervals for which the $\o$-convex hull of $R$ contains the minimum number of blue points.
Using the terminology from Houle~\cite{houle_1989,houle_1993}, if such number is equal to zero, then the particular $\o$-convex hull is a \emph{strong separator} for $R$ and $B$, otherwise is a \emph{weak separator} for $R$ and $B$.
Hence, our algorithm can be used to solve a variation of the so-called \emph{weak separability problem} in which a given bichromatic point set is separated by an $\o$-convex hull.

Finally, we remark that the sweeping process can be also modified to add further optimizations.
By applying the techniques from \cite{alegria_2018,alegria_2020} for example, we can obtain the $\o$-convex hull with the minimum (or maximum) area, perimeter, or number of vertices, that is either a strong or a weak separator for $R$ and $B$.
These additional optimizations do not increase neither the time nor the space complexities of the original algorithms.

% =======================================================
% bibliography
% =======================================================

%\nolinenumbers
\bibliography{references}

\begin{thebibliography}{10}

\bibitem{abello_1998}
James Abello, Vladimir Estivill-Castro, Thomas Shermer, and Jorge Urrutia.
\newblock Illumination of orthogonal polygons with orthogonal floodlights.
\newblock {\em International Journal of Computational Geometry \&
  Applications}, 08(01):25--38, 1998.
\newblock \href {https://doi.org/10.1142/S0218195998000035}
  {\path{doi:10.1142/S0218195998000035}}.

\bibitem{acharyya_2019}
Ankush Acharyya, Minati De, Subhas~C. Nandy, and Supantha Pandit.
\newblock Variations of largest rectangle recognition amidst a bichromatic
  point set.
\newblock {\em Discrete Applied Mathematics}, 286:35--50, 2020.
\newblock \href {https://doi.org/10.1016/j.dam.2019.05.012}
  {\path{doi:10.1016/j.dam.2019.05.012}}.

\bibitem{Agarwal_2006}
Pankaj~K. Agarwal, Boris Aronov, and Vladlen Koltun.
\newblock Efficient algorithms for bichromatic separability.
\newblock {\em ACM Trans. Algorithms}, 2(2):209--227, 2006.
\newblock \href {https://doi.org/10.1145/1150334.1150338}
  {\path{doi:10.1145/1150334.1150338}}.

\bibitem{alegria_2019}
Carlos Alegr\'{i}a, David Orden, Carlos Seara, and Jorge Urrutia.
\newblock Detecting inclusions for $\mathcal{O}$-convex hulls of bichromatic
  point sets.
\newblock In {\em Proceedings of the XVIII Spanish Meeting on Computational
  Geometry}, pages 47--50, 2019.

\bibitem{alegria_2018}
Carlos Alegr\'{i}a-Galicia, David Orden, Carlos Seara, and Jorge Urrutia.
\newblock On the $\mathcal{O}_\beta$-hull of a planar point set.
\newblock {\em Computational Geometry: Theory and Applications}, 68:277--291,
  2018.
\newblock \href {https://doi.org/10.1016/j.comgeo.2017.06.003}
  {\path{doi:10.1016/j.comgeo.2017.06.003}}.

\bibitem{alegria_2020}
Carlos Alegr\'{i}a{-}Galicia, David Orden, Carlos Seara, and Jorge Urrutia.
\newblock Efficient computation of minimum-area rectilinear convex hull under
  rotation and generalizations.
\newblock {\em Journal of Global Optimization}, 70(3):687--714, 2021.
\newblock \href {https://doi.org/10.1007/s10898-020-00953-5}
  {\path{doi:10.1007/s10898-020-00953-5}}.

\bibitem{barba_2012}
Greg Aloupis, Luis Barba, and Stefan Langerman.
\newblock Circle separability queries in logarithmic time.
\newblock In {\em Proceedings of the 24th Canadian Conference on Computational
  Geometry}, CCCG'12, pages 121--125, August 2012.

\bibitem{arkin_2006}
Esther~M. Arkin, Ferran Hurtado, Joseph S.~B. Mitchel, Carlos Seara, and
  Steven~S. Skiena.
\newblock Some lower bounds on geometric separability problems.
\newblock {\em International Journal of Computational Geometry \&
  Applications}, 16(01):1--26, 2006.
\newblock \href {https://doi.org/10.1142/S0218195906001902}
  {\path{doi:10.1142/S0218195906001902}}.

\bibitem{armaselu_2017}
Bogdan Armaselu and Ovidiu Daescu.
\newblock Maximum area rectangle separating red and blue points, 2017.
\newblock \href {http://arxiv.org/abs/1706.03268} {\path{arXiv:1706.03268}}.

\bibitem{aronov_2012}
Boris Aronov, Delia Garijo, Yurai Núñez-Rodríguez, David Rappaport, Carlos
  Seara, and Jorge Urrutia.
\newblock Minimizing the error of linear separators on linearly inseparable
  data.
\newblock {\em Discrete Applied Mathematics}, 160(10):1441--1452, 2012.
\newblock \href {https://doi.org/10.1016/j.dam.2012.03.009}
  {\path{doi:10.1016/j.dam.2012.03.009}}.

\bibitem{avis_1998}
David Avis, Bryan Beresford-Smith, Luc Devroye, Hossam Elgindy, Eric
  Gu\'{e}vremont, Ferran Hurtado, and Binhai Zhu.
\newblock Unoriented {$\Theta$}-maxima in the plane: Complexity and algorithms.
\newblock {\em SIAM Journal on Computing}, 28(1):278--296, 1998.
\newblock \href {https://doi.org/10.1137/S0097539794277871}
  {\path{doi:10.1137/S0097539794277871}}.

\bibitem{bae_2009}
Sang~Won Bae, Chunseok Lee, Hee-Kap Ahn, Sunghee Choi, and Kyung-Yong Chwa.
\newblock Computing minimum-area rectilinear convex hull and {L}-shape.
\newblock {\em Computational Geometry: Theory and Applications},
  42(9):903--912, 2009.
\newblock \href {https://doi.org/10.1016/j.comgeo.2009.02.006}
  {\path{doi:10.1016/j.comgeo.2009.02.006}}.

\bibitem{bae_2019}
Sang~Won Bae and Sang~Duk Yoon.
\newblock Empty squares in arbitrary orientation among points.
\newblock In {\em 36th International Symposium on Computational Geometry},
  2020.

\bibitem{biedl_2011}
Therese Biedl and Burkay Gen\c{c}.
\newblock Reconstructing orthogonal polyhedra from putative vertex sets.
\newblock {\em Computational Geometry: Theory and Applications},
  44(8):409--417, 2011.
\newblock \href {https://doi.org/10.1016/j.comgeo.2011.04.002}
  {\path{doi:10.1016/j.comgeo.2011.04.002}}.

\bibitem{biswas_2012}
Arindam Biswas, Partha Bhowmick, Moumita Sarkar, and Bhargab~B. Bhattacharya.
\newblock A linear-time combinatorial algorithm to find the orthogonal hull of
  an object on the digital plane.
\newblock {\em Information Sciences}, 216(Supplement C):176--195, 2012.
\newblock \href {https://doi.org/10.1016/j.ins.2012.05.029}
  {\path{doi:10.1016/j.ins.2012.05.029}}.

\bibitem{boissonnat_2001}
Jean~Daniel. Boissonnat, Jurek Czyzowicz, Olivier Devillers, and Mariette
  Yvinec.
\newblock Circular separability of polygons.
\newblock {\em Algorithmica}, 30(1):67--82, 2001.
\newblock \href {https://doi.org/10.1007/s004530010078}
  {\path{doi:10.1007/s004530010078}}.

\bibitem{cortes_2009}
Carmen Cortés, Jos\'{e}~Miguel Díaz-Báñez, Pablo Pérez-Lantero, Carlos
  Seara, Jorge Urrutia, and Inmaculada Ventura.
\newblock Bichromatic separability with two boxes: A general approach.
\newblock {\em Journal of Algorithms}, 64(2):79--88, 2009.
\newblock \href {https://doi.org/10.1016/j.jalgor.2009.01.001}
  {\path{doi:10.1016/j.jalgor.2009.01.001}}.

\bibitem{daymude_2018}
Joshua~J. Daymude, Robert Gmyr, Kristian Hinnenthal, Irina Kostitsyna,
  Christian Scheideler, and Andr\'{e}a~W. Richa.
\newblock Convex hull formation for programmable matter.
\newblock In {\em Proceedings of the 21st International Conference on
  Distributed Computing and Networking}, ICDCN 2020. Association for Computing
  Machinery, 2020.
\newblock \href {https://doi.org/10.1145/3369740.3372916}
  {\path{doi:10.1145/3369740.3372916}}.

\bibitem{diaz_2011}
Jos\'{e}~Miguel D\'{i}az-Ba\~{n}ez, Mario~A. L\'{o}pez, Merc\`{e} Mora, Carlos
  Seara, and Inmaculada Ventura.
\newblock Fitting a two-joint orthogonal chain to a point set.
\newblock {\em Computational Geometry}, 44(3):135--147, 2011.
\newblock \href {https://doi.org/10.1016/j.comgeo.2010.07.005}
  {\path{doi:10.1016/j.comgeo.2010.07.005}}.

\bibitem{edelsbrunner_1988}
Herbert Edelsbrunner and Franco~P. Preparata.
\newblock Minimum polygonal separation.
\newblock {\em Information and Computation}, 77(3):218--232, 1988.
\newblock \href {https://doi.org/10.1016/0890-5401(88)90049-1}
  {\path{doi:10.1016/0890-5401(88)90049-1}}.

\bibitem{fink_2004}
Eugene Fink and Derick Wood.
\newblock {\em Restricted-orientation Convexity}.
\newblock Monographs in Theoretical Computer Science (An EATCS Series).
  Springer-Verlag, 2004.
\newblock \href {https://doi.org/10.1007/978-3-642-18849-7}
  {\path{doi:10.1007/978-3-642-18849-7}}.

\bibitem{franek_2008}
Vojt\v{e}ch Fran\v{e}k.
\newblock {\em On Algorithmic Characterization of Functional $D$-convex Hulls}.
\newblock PhD thesis, Faculty of Mathematics and Physics, Charles University in
  Prague, 2008.

\bibitem{franek_2009}
Vojt\v{e}ch Fran\v{e}k and Ji\v{r}\'{i} Matou\v{s}ek.
\newblock Computing {$D$}-convex hulls in the plane.
\newblock {\em Computational Geometry: Theory and Applications}, 42(1):81--89,
  2009.
\newblock \href {https://doi.org/10.1016/j.comgeo.2008.03.003}
  {\path{doi:10.1016/j.comgeo.2008.03.003}}.

\bibitem{houle_1989}
Michael~F. Houle.
\newblock {\em Weak Separability of Sets}.
\newblock PhD thesis, McGill University, 1989.

\bibitem{houle_1993}
Michael~F. Houle.
\newblock Algorithms for weak and wide separation of sets.
\newblock {\em Discrete Applied Mathematics}, 45(2):139--159, 1993.
\newblock \href {https://doi.org/10.1016/0166-218X(93)90057-U}
  {\path{doi:10.1016/0166-218X(93)90057-U}}.

\bibitem{hurtado_2004}
Ferran Hurtado, Merc\`{e} Mora, Pedro~A. Ramos, and Carlos Seara.
\newblock Separability by two lines and by nearly straight polygonal chains.
\newblock {\em Discrete Applied Mathematics}, 144(1):110--122, 2004.
\newblock \href {https://doi.org/10.1016/j.dam.2003.11.014}
  {\path{doi:10.1016/j.dam.2003.11.014}}.

\bibitem{hurtado-2001}
Ferran Hurtado, Marc Noy, Pedro~A. Ramos, and Carlos Seara.
\newblock Separating objects in the plane by wedges and strips.
\newblock {\em Discrete Applied Mathematics}, 109(1):109--138, 2001.
\newblock \href {https://doi.org/10.1016/S0166-218X(00)00230-4}
  {\path{doi:10.1016/S0166-218X(00)00230-4}}.

\bibitem{hurtado_2005}
Ferran Hurtado, Carlos Seara, and Saurabh Sethia.
\newblock Red-blue separability problems in {3D}.
\newblock {\em International Journal of Computational Geometry \&
  Applications}, 15(02):167--192, 2005.
\newblock \href {https://doi.org/10.1142/S0218195905001646}
  {\path{doi:10.1142/S0218195905001646}}.

\bibitem{karmakar_2015}
Nilanjana Karmakar and Arindam Biswas.
\newblock Construction of {3D} orthogonal convex hull of a digital object.
\newblock In Reneta~P. Barneva, Bhargab~B. Bhattacharya, and Valentin~E.
  Brimkov, editors, {\em Combinatorial Image Analysis}, pages 125--142.
  Springer International Publishing, 2015.
\newblock \href {https://doi.org/10.1007/978-3-319-26145-4_10}
  {\path{doi:10.1007/978-3-319-26145-4_10}}.

\bibitem{mangasarian_1994}
O.~L. Mangasarian.
\newblock Misclassification minimization.
\newblock {\em Journal of Global Optimization}, 5(4):309--323, 1994.
\newblock \href {https://doi.org/10.1007/BF01096681}
  {\path{doi:10.1007/BF01096681}}.

\bibitem{martinez_2021}
Alejandra Martinez-Moraian, David Orden, Leonidas Palios, Carlos Seara, and
  Paweł Żyliński.
\newblock Generalized kernels of polygons under rotation.
\newblock {\em Journal of Global Optimization}, 80(4):887--920, 2021.
\newblock \href {https://doi.org/10.1007/s10898-021-01020-3}
  {\path{doi:10.1007/s10898-021-01020-3}}.

\bibitem{megiddo_1983}
Nimrod Megiddo.
\newblock Linear-time algorithms for linear programming in $\mathbb{R}^3$ and
  related problems.
\newblock {\em SIAM Journal on Computing}, 12(4):759--776, 1983.
\newblock \href {https://doi.org/10.1137/0212052} {\path{doi:10.1137/0212052}}.

\bibitem{moslehi_2016}
Zahra Moslehi and Alireza Bagheri.
\newblock Separating bichromatic point sets by two disjoint isothetic
  rectangles.
\newblock {\em Scientia Iranica}, 23(3):1228--1238, 2016.
\newblock \href {https://doi.org/10.24200/sci.2016.3891}
  {\path{doi:10.24200/sci.2016.3891}}.

\bibitem{moslehi_2017}
Zahra Moslehi and Alireza Bagheri.
\newblock Separating bichromatic point sets by minimal triangles with a fixed
  angle.
\newblock {\em International Journal of Foundations of Computer Science},
  28(04):309--320, 2017.
\newblock \href {https://doi.org/10.1142/S0129054117500198}
  {\path{doi:10.1142/S0129054117500198}}.

\bibitem{nakayama_1998}
Hirotaka Nakayama and Naoko Kagaku.
\newblock Pattern classification by linear goal programming and its extensions.
\newblock {\em Journal of Global Optimization}, 12(2):111--126, 1998.
\newblock \href {https://doi.org/10.1023/A:1008244409770}
  {\path{doi:10.1023/A:1008244409770}}.

\bibitem{nicholl_1983}
Tina~M. Nicholl, Der-Tsai Lee, Yuh-Zen Liao, and Chak-Kuen Wong.
\newblock On the {X-Y} convex hull of a set of {X-Y} polygons.
\newblock {\em BIT Numerical Mathematics}, 23(4):456--471, 1983.
\newblock \href {https://doi.org/10.1007/BF01933620}
  {\path{doi:10.1007/BF01933620}}.

\bibitem{megiddo_1986}
Joseph O'rourke, S.~Rao Kosaraju, and Nimrod Megiddo.
\newblock Computing circular separability.
\newblock {\em Discrete \& Computational Geometry}, 1:105--113, 1986.

\bibitem{ottmann_1984}
Thomas Ottmann, Eljas Soisalon-Soininen, and Derick Wood.
\newblock On the definition and computation of rectilinear convex hulls.
\newblock {\em Information Sciences}, 33(3):157--171, 1984.
\newblock \href {https://doi.org/10.1016/0020-0255(84)90025-2}
  {\path{doi:10.1016/0020-0255(84)90025-2}}.

\bibitem{pelaez_2011}
Canek Pel\'{a}ez, Adriana Ram\'{i}rez-Vigueras, Carlos Seara, and Jorge
  Urrutia.
\newblock Weak separators, vector dominance, and the dual space.
\newblock In {\em Proceedings of the XIV Spanish Meeting on Computational
  Geometry}, pages 233--236, 2011.

\bibitem{preparata_1985}
Franco~P. Preparata and Michael~Ian Shamos.
\newblock {\em Computational geometry: An introduction}.
\newblock Text and monographs in computer science. Springer-Verlag, 1985.
\newblock \href {https://doi.org/10.1007/978-1-4612-1098-6}
  {\path{doi:10.1007/978-1-4612-1098-6}}.

\bibitem{rawlins_thesis_1987}
Gregory J.~E. Rawlins.
\newblock {\em Explorations in restricted-orientation geometry}.
\newblock PhD thesis, School of Computer Science, University of Waterloo, 1987.

\bibitem{rawlins_1988137}
Gregory~J.E. Rawlins and Derick Wood.
\newblock Ortho-convexity and its generalizations.
\newblock In Godfried~T. Toussaint, editor, {\em Computational Morphology},
  volume~6 of {\em Machine Intelligence and Pattern Recognition}, pages
  137--152. North-Holland, 1988.
\newblock \href {https://doi.org/10.1016/B978-0-444-70467-2.50015-1}
  {\path{doi:10.1016/B978-0-444-70467-2.50015-1}}.

\bibitem{schuirer_1991}
Sven Schuierer and Derick Wood.
\newblock Restricted-orientation visibility.
\newblock Technical Report~40, Institut f\"{u}r Informatik, Universit\"{a}t
  Freiburg, 1991.

\bibitem{seara_2002}
Carlos Seara.
\newblock {\em On geometric separability}.
\newblock PhD thesis, Universitat Polit\`{e}cnica de Catalunya, 2002.

\bibitem{sheikhi_2015}
Farnaz Sheikhi, Ali Mohades, Mark de~Berg, and Mansoor Davoodi.
\newblock Separating bichromatic point sets by {L}-shapes.
\newblock {\em Computational Geometry: Theory and Applications},
  48(9):673--687, 2015.
\newblock \href {https://doi.org/10.1016/j.comgeo.2015.06.008}
  {\path{doi:10.1016/j.comgeo.2015.06.008}}.

\bibitem{sheikhi_2017}
Farnaz Sheikhi, Ali Mohades, Mark de~Berg, and Ali~D. Mehrabi.
\newblock Separability of imprecise points.
\newblock {\em Computational Geometry: Theory and Applications}, 61:24--37,
  2017.
\newblock \href {https://doi.org/10.1016/j.comgeo.2016.10.001}
  {\path{doi:10.1016/j.comgeo.2016.10.001}}.

\bibitem{son_2014}
Wanbin Son, Seung-won Hwang, and Hee-Kap Ahn.
\newblock {MSSQ}: Manhattan spatial skyline queries.
\newblock {\em Information Systems}, 40:67--83, 2014.
\newblock \href {https://doi.org/10.1016/j.is.2013.10.001}
  {\path{doi:10.1016/j.is.2013.10.001}}.

\bibitem{uchoa_2002}
Eduardo Uchoa, Marcus Poggi~de Arag\~{a}o, and Celso~C. Ribeiro.
\newblock Preprocessing {S}teiner problems from {VLSI} layout.
\newblock {\em Networks}, 40(1):38--50, 2002.
\newblock \href {https://doi.org/10.1002/net.10035}
  {\path{doi:10.1002/net.10035}}.

\bibitem{van_kreveld_2009}
Marc van Kreveld, Thijs van Lankveld, and Remco Veltkamp.
\newblock Identifying well-covered minimal bounding rectangles in {2D} point
  data.
\newblock In {\em 25th European Workshop on Computational Geometry}, pages
  277--280, 2009.

\bibitem{van_kreveld_2011}
Thijs van Lankveld, Marc van Kreveld, and Remco Veltkamp.
\newblock Identifying rectangles in laser range data for urban scene
  reconstruction.
\newblock {\em Computers \& Graphics}, 35(3):719--725, 2011.
\newblock \href {https://doi.org/10.1016/j.cag.2011.03.004}
  {\path{doi:10.1016/j.cag.2011.03.004}}.

\end{thebibliography}

\end{document}